\newcommand{\blob}{\rule[.2ex]{.8ex}{.8ex}}
\newcommand{\dd}{\rule[.2ex]{.4ex}{.4ex}}
\newcommand{\field}[1]{\mathbb{#1}}
\newtheorem{definition}{Definition}[section]
\newtheorem{theorem}{Theorem}[section]
\newtheorem{lemma}{Lemma}[section]
\newtheorem{proposition}{Proposition}[section]
\newtheorem{remark}{Remark}[section]
\numberwithin{equation}{section}
\newcommand{\sabs}[1]{\left| #1 \right|} 
\newcommand{\abs}[1]{\bigl| #1 \bigr|} 
\newcommand{\norm}[1]{\lVert#1\rVert} 
\newcommand{\bnorm}[1]{\Bigl\| #1\Bigr\|} 
\newcommand{\avg}[1]{\left< #1 \right>} 
\newcommand{\x}{\underline{x}}
\newcommand{\y}{\underline{y}}
\newcommand{\z}{\underline{z}}
\newcommand{\m}{\underline{m}}
\newcommand{\li}{\underline{l}}
\newcommand{\alfa}{\underline{\alpha}}
\newcommand{\la}{\lambda}
\newcommand{\ka}{\kappa}
\newcommand{\kam}{\ka^{-1}}
\newcommand{\kamt}{\ka^{-2}}
\newcommand{\ep}{\epsilon}
\newcommand{\epm}{\ep^{-1}}
\newcommand{\epmt}{\ep^{-2}}
\newcommand{\Am}{A^{-1}}
\newcommand{\dx}{\partial_{x_1}}
\newcommand{\dy}{\partial_{x_2}}
\newcommand{\ro}{\underline{\rho}}
\newcommand{\om}{\omega}
\newcommand{\omm}{\underline{\omega}}
\newcommand{\T}{\mathbb{T}}
\newcommand{\N}{\mathbb{N}}
\newcommand{\Z}{\mathbb{Z}}
\newcommand{\R}{\mathbb{R}}
\newcommand{\C}{\mathbb{C}}
\newcommand{\A}{\mathcal{A}}
\newcommand{\Aa}{\underline{\A}}
\newcommand{\Rr}{\underline{\mathcal{R}}}
\newcommand{\Ss}{\underline{\mathcal{S}}}
\newcommand{\B}{\mathcal{B}}
\newcommand{\Bb}{\underline{\B}}
\newcommand{\Bad}{\mathscr{B}}
\newcommand{\shift}{{\rm T}}
\newcommand{\skews}{{\rm S}_\om}
\newcommand{\mshift}{{\rm T}_{\omm}}
\newcommand{\xx}{(x_1, x_2)}
\newcommand{\ab}{\underline{a}}
\newcommand{\aaa}{(a_1, a_2)}
\newcommand{\zz}{(z_1, z_2)}
\newcommand{\mm}{(m_1, m_2)}
\newcommand{\lli}{(l_1, l_2)}
\newcommand{\rover}{\frac{\rho}{2}}
\newcommand{\roverr}{\frac{\rho}{4}}
\newcommand{\less}{\lesssim}
\newcommand{\more}{\gtrsim}
\begin{document}

\title[Schr\"{o}dinger operators with multivariable Gevrey potentials] {Localization for quasiperiodic Schr\"{o}dinger operators with multivariable Gevrey potential functions}
\author{Silvius Klein}
\address{CMAF\\ Faculdade de Ci\^encias\\
Universidade de Lisboa\\
Portugal\\ 
and IMAR, Bucharest, Romania }
\email{silviusaklein@gmail.com}

\begin{abstract}{We consider an integer lattice quasiperiodic Schr\"{o}dinger operator. The underlying dynamics is either the skew-shift or the multi-frequency shift by a Diophantine frequency.  We assume that the potential function belongs to a Gevrey class on the multi-dimensional torus. Moreover, we assume that the potential function satisfies a generic transversality condition, which we show to imply a {\L}ojasiewicz type inequality for smooth functions of several variables.  Under these assumptions and for large coupling constant, we prove that the associated Lyapunov exponent is positive for all energies, and continuous as a function of energy, with a certain modulus of continuity.
Moreover, in the large coupling constant regime and for an asymptotically large frequency - phase set, we prove that the operator satisfies Anderson localization. 
}
\end{abstract}

\maketitle

\section{Definitions, notations, statement of main results}\label{introduction}
In this paper we study the one-dimensional lattice quasiperiodic Schr\"odinger operator $H (\x)$ acting on $l^2(\Z)$  by:
\begin{equation}\label{op1} 
[H (\x) \, \psi]_n := - \psi_{n+1} - \psi_{n-1} + \la \, v (\shift^n \x) \, \psi_n
\end{equation} 
where in equation (\ref{op1}):

$\blob$ $\x = \xx \in  \T^2$ is a parameter that introduces some randomness into the system;

$\blob$ $\la$ is a real number called the disorder of the system;
 
$\blob$  $v (\x) $ is a real valued function on $ \mathbb{T}^2 = (\mathbb{R} / \mathbb{Z})^2 $, that is, a real valued 
$ 1$-periodic function in each variable; 

$\blob$  $\shift$ is a specific ergodic transformation on $\T^2$, and $\shift^n$ is its $n$th iteration.

\smallskip

Some of the questions of interest regarding this, or other related operators, are the spectral types (pure point, absolutely continuous, singularly continuous), the topological structure of the spectrum, the rate of decay of the eigenfunctions, the positivity and regularity of the Lyapunov exponent, the regularity of the integrated density of states. 

Due to the ergodicity of the transformation $\shift$, the spectrum and the spectral types of the Hamiltonian system $ [ H (\x) ]_{\x \in \T^2}$ defined by (\ref{op1}) are not random - that is, they are independent of $\x$ almost surely (see \cite{CFKS}). 

A stronger property than pure point spectrum is Anderson localization, which for the physical model indicates an insulating behavior, while a purely absolutely continuous spectrum indicates metallic (conductive) behavior.

Let us describe these concepts more formally.

\begin{definition}\label{ALdef}
An operator satisfies Anderson localization (AL) if it has pure point spectrum with exponentially decaying eigenfunctions.
\end{definition}

Consider now the  Schr\"{o}dinger equation:
\begin{equation}\label{seq}
H (\x) \psi  =  E \psi 
\end{equation}
for $\psi = [ \psi_n ]_{n \in \mathbb{Z}} \subset  \mathbb{R}$ and $E \in \R$.

Due to Schnol-Simon's theorem (see \cite{CFKS}), to prove AL it is enough to show that every extended state is exponentially decaying. In other words, if 
$\psi$ is a formal solution to the  Schr\"{o}dinger equation (\ref{seq}) and if $\psi$ grows at most polynomially $\abs{\psi_n} \less \sabs{n}$ then $\psi$ decays exponentially: $\abs{\psi_n}  \less e^{-c \sabs{n}}$

\smallskip

The Schr\"{o}dinger equation (\ref{seq}) is a second order finite differences equation:
$$ - \psi_{n+1} - \psi_{n-1} + \la \, v (\shift^n \x) \, \psi_n = E \, \psi_n$$
which becomes 
$$
\Bigl[\begin{array}{cc}
\psi_{n+1}\\
\psi_n \\  \end{array} \Bigr]  = M_{N} (\x, E)  \Bigl[ \begin{array}{cc}
\psi_1\\
\psi_0 \\  \end{array} \Bigr]$$
where $$ M_{N} (\x, E) = M_{N} (\x, \la, E)  :=
\prod_{j=N}^{1}  \Bigl[ \begin{array}{ccc}
\lambda v (\shift^j \, \x ) - E  & &  - 1  \\
1 & &  0 \\  \end{array} \Bigr] $$ 
is called the transfer (or fundamental) matrix of  (\ref{op1}).

Define further the function
$$L_{N} (\x, E) = L_{N} (\x, \la, E)  := \frac{1}{N} \log \norm{ M_N (\x, E) }$$ 
and its mean
$$L_N (E) = L_{N} (\la, E) := \int_{\T^2} \frac{1}{N} \log \norm{ M_N (\x, E) } \, d\x$$ 
Due to sub-additivity, the sequence $L_N (E)$ converges.

\begin{definition}\label{lyapdef}The limit 
$$L(E) := \lim_{N \rightarrow \infty} L_N (E) $$
is called the Lyapunov exponent of  (\ref{op1})  and it measures the average exponential growth of the transfer matrices.
\end{definition}

Ergodicity in fact implies that for a.e. $\x \in \T^2$,
\begin{equation}\label{ldt-ergod}
L(E) := \lim_{N \rightarrow \infty} \frac{1}{N} \log \norm{ M_{N}(\x, E) }
\end{equation}

Note that since the transfer matrices have determinant \(1\), the Lyapunov exponent is always nonnegative. An important question is whether it is in fact (uniformly) bounded away from \(0\). This would imply, due to Kotani's theorem, absence of absolutely continuous spectrum, and it would represent a strong indication for pure point spectrum. This is also usually the assumption under which strong continuity properties hold and an analysis of the topological structure of the spectrum is more feasible.

\medskip

In this paper we prove Anderson localization and positivity and  continuity of the Lyapunov exponent for large coupling constant,  for certain ergodic transformations $\shift$ on $\T^2$ and under certain regularity and transversality conditions on the potential function $v (\x)$.
While all of our results are stated and proven for the two-dimensional torus $\T^2$, their analogues on the higher dimensional torus hold as well.

\medskip

We describe the assumptions on the transformation and on the potential function. 

\smallskip

We start with some notations: for a multi-index $\m = \mm \in \Z^2$, we write $\sabs{\m} := \sabs{m_1} + \sabs{m_2}$ and if $\m = \mm \in \N^2$, then $\m! := m_1! \cdot m_2!$
Moreover, for $\alfa = (\alpha_1, \alpha_2) \text{ and }  \m = \mm \in \N^2$, we write $\alfa \le \m$ when $\alpha_1 \le m_1$ and $\alpha_2 \le m_2$.

\medskip

Throughout this paper, the transformation $\shift \colon \T^2 \to \T^2$ will represent:

$\blob$ Either the \emph{skew-shift} 
 \begin{equation}\label{skew}
{\rm S}_\om \, \xx := (x_1 + x_2,  x_2  + \om)
\end{equation}
where $\om \in \T$ is irrational.

Its $n$th iteration is given by:
\begin{equation}\label{skewn}
{\rm S}_\om^n \, \xx = (x_1 + n x_2 + \frac{n (n-1)}{2} \om, \, x_2 + n \om)
\end{equation}

$\blob$ Or the \emph{multi-frequency shift} 
 \begin{equation}\label{multishift}
\shift_{\omm} \, \xx := (x_1 + \om_1,  x_2 + \om_2)
\end{equation}
where $\omm = (\om_1, \om_2) \in \T^2$ and $\om_1,  \om_2$ are rationally independent.

Its $n$th iteration is given by:
\begin{equation}\label{multishiftn}
\shift_{\omm}^n \, (\x) = \x + n \, \omm = (x_1 + n \om_1, \, x_2 + n \om_2)
\end{equation}

The irrationality /  rational independence of the frequency ensures that the corresponding transformation is ergodic. However, we need to make a quantitative assumption on this rational independence, for reasons that will be described later.

\smallskip

We say that the frequency $\om \in \T $ satisfies a Diophantine condition $DC_\ka$  for some $ \kappa > 0 $ if 
\begin{equation}\label{DC}
\mbox{dist } (l \omega,  \mathbb{Z} )  =:  \norm{l \, \om} > \kappa \cdot 
\frac{1} {\sabs{l} [ \log(1 + \sabs{l}) ]^{2}}   \quad \text{ for all } \  l \in  \mathbb{Z} \setminus \{ 0 \} 
\end{equation}

We say that the (multi)frequency $\omm \in \T^2 $ satisfies a Diophantine condition $DC_\ka$  for some $ \kappa > 0 $ and a fixed constant $A >2$, if 
\begin{equation}\label{DCM}
 \norm{\li \cdot \omm} := \norm{l_1 \,  \om_1 + l_2 \, \om_2} > \kappa \cdot 
\frac{1} {\sabs{\li}^A}  \quad \text{ for all } \  \li \in  \mathbb{Z}^2 \setminus \{ (0, 0) \} 
\end{equation}

Note that the set of frequencies which satisfy either \eqref{DC} or \eqref{DCM} has measure $1 - \text{O} (\ka)$, hence almost every frequency satisfies such a Diophantine condition $DC_\ka$ for some $\ka > 0$.

When necessary, to emphasize the dependence of the operator on the frequency, we will use the notation $H_{\om} (\x)$ or $H_{\omm} (\x)$ respectively.

\medskip

Now we describe the assumptions on the potential function $v (\x)$.

$\blob$ We say that a $C^\infty$ function $v (\x)$ on $\T^2$ belongs to the Gevrey class $G^{s} (\T^2) $ for some  $ s > 1 $ if its partial derivatives have the following bounds:
\begin{equation}\label{GC}
\sup_{ \x \in \mathbb{T}^2}  \abs{ \partial ^{\m} \, \, v(\x) } \leq M  K^{|\m|} ( \m ! )^{s}  \quad
\text{ for all } \quad  \m \in \N^2
\end{equation}
for some constants $M,$ $K$ $ > 0 $.

This condition is equivalent (see the exercises from Chapter IV in \cite{Ka}) to the following exponential-type decay of the Fourier coefficients of $v$:
\begin{equation}\label{fcoef} 
\abs{ \hat{v} (\li) } \leq  M e^{- \rho \sabs{ \li }^{1/s}} \quad  \text{ for all } \quad  \li \in \mathbb{Z}^2
\end{equation} 
for some constants $M,$ $\rho$ $ > 0, $ where 
$\displaystyle 
v(\x) = \sum_{\li \in \mathbb{Z}^2} \hat{v} (\li) \, e^{2 \pi i \, \li \cdot \x}
$ 

Note from (\ref{GC}) or (\ref{fcoef}) with $s = 1$ that the Gevrey class $ G^{1} (\mathbb{T}^2)$ is the class of real analytic functions on $\mathbb{T}^2 $.

Note also that $ s_1  <  s_2 \hspace{.1in} \Rightarrow \hspace{.1in} G^{s_1} (\mathbb{T}^2) \subset G^{s_2} (\mathbb{T}^2) $, so the greater the order of the Gevrey class, the larger the class. 

The Gevrey-class of any order $s > 1$ is an intermediate Carleman class of functions between analytic functions and  $C^\infty$ functions. They are not, however, quasi-analytic (one can easily construct examples or use a general test for quasi-analyticity of Carleman classes, as in Chapter V.2 in \cite{Ka}). 

We will then impose on our potential function $v$ the following generic transversality condition (TC).

\smallskip

$\blob$ We say that a function $v (\x)$ is transversal if $v$ is not flat at any point:
 \begin{equation}\label{TC}
\text{For any } \,  \x \in \T^2 \  \text{ there is }  \,  \m \in \N^2, \,  \sabs{\m}  \neq 0  \ \text{ such that }  \  \partial ^{\m} \, \, v(\x) \neq 0
\end{equation}

\smallskip

Non-constant analytic functions automatically satisfy \eqref{TC}. Therefore, a Schr\"{o}dinger operator with  potential given by a function which satisfies the Gevrey regularity condition \eqref{GC} and the transversality condition \eqref{TC} is a natural extension of the non constant analytic case considered in \cite{BG}, \cite{BGS}.

\medskip

We are ready to formulate the main result of this paper. 

\begin{theorem}\label{main}
Consider the Schr\"{o}dinger operator (\ref{op1}) on $l^2(\Z)$:
$$
[H (\x) \, \psi]_n := - \psi_{n+1} - \psi_{n-1} + \la \, v (\shift^n \x) \, \psi_n
$$
where the transformation $\shift$ is either the skew-shift \eqref{skew} or the multi-frequency shift \eqref{multishift}.
Assume that for some $\ka > 0$ the underlying frequency satisfies the Diophantine condition $DC_\ka$ described in \eqref{DC} or \eqref{DCM} respectively.

Assume moreover that the potential function $v (\x)$ belongs to a Gevrey class $G^s (\T^2)$ and that it is transversal as in \eqref{TC}. 

\smallskip

There is $\la_{0} = \la_{0} (v, \ka)$ such that the following hold:

\smallskip
 
$\blob$ If $ \sabs{\la}  \geq \la_{0},$ the Lyapunov exponent is positive for all energies $E \in \mathbb{R} $:
\begin{equation}\label{lyap1} 
L (E) \geq \frac{1}{4} \log \sabs{\la}  > 0 
\end{equation}

$\blob$ If $  \sabs{\la} \geq \la_{0},$ the Lyapunov exponent $L (E)$ is a  continuous functions of the energy $E$, with modulus of continuity on any compact interval $\mathcal{E}$  at least:
\begin{equation}\label{modcont}
h (t) = C \, e^{- c  \sabs{\log t}^{\eta}}
\end{equation}
where $ C = C( \mathcal{E}, \la, v, \ka, s) $ and $c$, $\eta$ are some positive absolute constants.

\smallskip

$\blob$  Let $\shift = {\rm S}_\om $ be the skew-shift. For every $\la$ with $\sabs{\la} \geq \la_{0}$, there is an exceptional set $\Bad = \Bad_\la \subset \T^3$, with $mes \,  \Bad < \kappa$, such that for all $(\om, \x) \notin \Bad$,  the operator $H_{\om} (\x)$ satisfies Anderson localization. 

\smallskip
 
$\blob$  Let $\shift = \shift_{\omm} $ be the multi-frequency shift. Fix $ \x_0 \in \mathbb{T}^2 $ and $\la$ with $ \sabs{\la} \geq \la_{0}$. Then for a.e. multi-frequency  $\omm \in DC_\ka$, the operator $H_{\omm} (\x_0) $ satisfies Anderson localization.

\end{theorem}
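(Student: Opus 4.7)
The proof follows the Bourgain-Goldstein-Schlag multiscale scheme, adapted from the analytic to the Gevrey category. The key difficulty is that in the analytic case one relies heavily on subharmonic function theory (Cartan-type estimates, BMO bounds for logarithms of subharmonic functions restricted to circles). In the Gevrey setting such tools are unavailable, so they must be replaced by (i) approximation of $v$ by trigonometric polynomials $v_N$ of degree $\sim N^\sigma$, which are entire functions of controlled exponential growth thanks to \eqref{fcoef}, and (ii) a {\L}ojasiewicz-type inequality derived from the transversality condition \eqref{TC}, which provides quantitative control on the measure of sublevel sets of $v$ along lines and orbits.

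\textbf{Base case and positivity.} For a sufficiently small initial scale $N_0 = N_0(\la, v, \ka)$, I would establish $L_{N_0}(E) \geq \frac{1}{2} \log \sabs{\la}$ by a Herman-type averaging argument. The {\L}ojasiewicz inequality forces $\{\x \in \T^2 : \sabs{\la v(\x) - E} < \sabs{\la}^{1/2}\}$ to have small measure uniformly in $E$, so along generic orbits most factors of the transfer matrix have norm comparable to $\sabs{\la}$. Combining this base estimate with the large deviation theorem (LDT) from the next step and the avalanche principle for $SL(2,\R)$ cocycles, one bootstraps the finite-scale lower bound to $L(E) \geq \frac{1}{4} \log \sabs{\la}$ uniformly in $E \in \R$.

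\textbf{Large deviation theorem and continuity.} The central estimate is
\[
\text{mes}\,\{\x \in \T^2 : \sabs{L_N(\x, E) - L_N(E)} > N^{-\tau}\} < e^{-N^\sigma}
\]
for some $\sigma, \tau > 0$. I would replace $v$ by the truncation $v_N$, apply the analytic LDT (sharp subharmonic estimates on a complex strip whose width shrinks polynomially in $N$, the shrinkage dictated by $s$ and $\rho$ in \eqref{fcoef}) to the truncated cocycle, and control the error by the Gevrey tail bound $\norm{v - v_N}_\infty \less e^{-\rho N^{\sigma/s}}$. The shrinking-strip width produces a stretched-exponential rate rather than the exponential rate available in the analytic case. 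Feeding this LDT into the Goldstein-Schlag avalanche principle yields $\sabs{L_N(E) - L(E)} \less e^{-N^c}$, and comparing $L_N$ at nearby energies through the standard identification of $L_N(E)$ with logarithms of Dirichlet determinants, together with a Cartan-type inequality on real intervals, produces the modulus of continuity \eqref{modcont}.

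\textbf{Anderson localization.} I would follow the semi-algebraic elimination scheme of \cite{BGS}: use the LDT and the avalanche principle to establish that on most boxes $[1, N]$ the finite-volume Green's function $(H_{[1,N]}(\x) - E)^{-1}$ decays exponentially off the diagonal, then exclude double resonances between distant boxes by a Fubini-type argument over the frequency $\om$ in the skew-shift case, or over $\omm$ in the multi-frequency case. The Gevrey regularity here is compensated by replacing $v$ with a polynomial approximation of sufficiently high degree, after which Yomdin-Gromov complexity estimates reduce the bad set to a controlled semi-algebraic one. Schnol's theorem then converts polynomially bounded formal solutions of \eqref{seq} into exponentially decaying eigenfunctions. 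The main obstacle is the elimination of double resonances for the skew-shift, whose quantitative equidistribution properties are considerably weaker than those of a translation; this step requires a delicate fibering argument in $\om$ and produces the exceptional set $\Bad \subset \T^3$ of measure less than $\ka$.
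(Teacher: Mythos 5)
Your proposal follows essentially the same route as the paper: polynomial truncation of $v$ to degree $\sim N^{c(s)}$ producing pluri-subharmonic substitutes on strips of shrinking width, a {\L}ojasiewicz inequality from the transversality condition \eqref{TC} to handle the base scale at large disorder, a large deviation theorem via averages of shifts and a BMO/John--Nirenberg boosting lemma, the avalanche principle and induction on scales to obtain positivity and quantitative convergence $L_N \to L$, and semi-algebraic elimination of energy for localization. Two minor mislabelings worth noting: the base-step lower bound is not obtained by a Herman-type complexification argument but by directly cutting out the small resonant set furnished by the {\L}ojasiewicz estimate (as in Lemma 2.10 of \cite{BGS}); and the modulus of continuity does not require any Cartan-type estimate on real intervals --- it follows simply from combining the quantitative convergence $\sabs{L_N(E)-L(E)}\lesssim N^{-\beta}$ (obtained from the inductive step) with the Lipschitz bound $\sabs{L_N(E)-L_N(E')}\le e^{SN}\sabs{E-E'}$ and optimizing over $N$.
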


\bigskip

\section{Summary of related results, general strategy}\label{strategy} 
The results in this paper extend the ones in \cite{BGS} and \cite{BG} (see also J. Bourgain's monograph \cite{B}) from non-constant real analytic potential functions, to the more general class of Gevrey potential functions satisfying  a transversality condition.
They also mirror similar results obtained for the one-frequency shift model on the torus $\T$ (see \cite{sK1}). 

It should be noted, however, that unlike the one or multi-frequency shift,  the skew-shift, due to its weekly mixing properties, is expected to behave more like the random model (presumably regardless of the regularity of the potential). In other words, for the skew-shift, these results are expected to be independent of the size of the disorder $\la$. Hence one expects that if $\la \neq 0$, the Lyapunov exponent is positive and Anderson localization holds for all energies. Moreover, one expects no gaps in the spectrum (unlike in the one-frequency shift case, when the spectrum is a Cantor set) - see the comments at the end of Chapter 15 in \cite{B}. Some results on these very challenging problems have been obtained in  
\cite{B1}, \cite{B2}, \cite{hK1}, \cite{hK2}.

Localization results for these types of operators defined by skew-shift dynamics have applications to quantum chaos problems. More specifically, they imply existence of almost periodic solutions to the quantum kicked rotator equation. However, one has to establish (dynamical) localization for a more general, long range operator, one where the discrete Laplacian is replaced by a Toeplitz operator with fast off-diagonal decay of its monodromy matrix entries. This was already established for analytic potential functions (see Chapter 15 and 16 in \cite{B}), but we will not address this problem for Gevrey potential functions in this paper. 

Most of the results on localization for discrete quasiperiodic Schr\"{o}dinger operators (with either shift or skew-shift dynamics) have been obtained under the assumption that the potential function is the cosine function, or a trigonometric polynomial or a real analytic and non-constant function (see J. Bourgain's monograph \cite{B}).
 
Assuming Gevrey regularity and a transversality condition, there are localization results for the shift model that closely resemble the ones in the analytic case (see \cite{E}, \cite{sK1}). It should be noted, however, that they are usually perturbative and that more subtle results regarding fine continuity properties of the integrated density of states (as in \cite{GS1}, \cite{GS2}) or the topological structure of the spectrum (as in \cite{GS3}) are not available in this context.   

For potential functions that are more general than Gevrey, namely $C^\alpha$, the results available now (on localization and positivity of the Lyapunov exponent) require that  a(n asymptotically small relative to the size $\la$ of the disorder but)  positive set of energies be excluded or that the potential function be replaced by some generic variations of itself (see  \cite{kB}, \cite{jC}, \cite{CGS}).

\medskip

To prove Theorem \ref{main}  we will follow the same strategy used in \cite{sK1} for the single frequency shift model: at each scale, substitute the potential function by an appropriate  polynomial approximation (see Section~\ref{approximation}). This in turn will allow the use of subharmonic functions techniques (see Section~\ref{averages}) developed in \cite{BG}, \cite{BGS}, \cite{B}. An additional challenge is describing the transversality condition \eqref{TC} for multi-variable smooth functions in a quantitive way. We derive (see Section~\ref{lojasiewicz}) a {\L}ojasiewicz type inequality for such functions, of the kind previously available for non-constant trigonometric polynomials  (see \cite{jK}, \cite{DK}) or analytic functions (see \cite{PSS}, \cite{GS1}).      

\medskip

 The main technical result of this paper, from which all statements in Theorem~\ref{main} follow,  is a large deviation theorem (LDT) for logarithmic averages of  transfer matrices (see Section~\ref{ldt_proofs}).
 
 According to (\ref{ldt-ergod}), due to ergodicity, for a.e. $\x \in \T^2$:
$$ \frac{1}{N} \log \norm{ M_{N}(\x, E) } \to L(E) \ \text{ as } N \to \infty$$

The LDT provides a quantitative version of this convergence:
\begin{equation}\label{ldt-idea}
\mbox{mes } [ \x \in \T^2: \abs{ \frac{1}{N} \log \norm{ M_{N}(\x, E) }- L_{N} (E) } > \epsilon ] <  \delta (N, \epsilon)
\end{equation}
where $\epsilon = o (1) $ and $\delta (N, \epsilon) \rightarrow 0$ as $N \rightarrow \infty$

\medskip

The size of the deviation $\epsilon$ and the measure of the exceptional set $\delta (N, \epsilon)$ are very important. The sharpest such estimate (see Theorem 7.1  in \cite{GS1}), available for the single-frequency shift model with analytic potential, holds for any $\epsilon >0$ and exponentially small measure  $\delta (N, \epsilon) \approx e^{-c \delta N}$, thus morally matching the large deviation estimates for random variables that these deterministic quantities mimic here. Having such sharp estimates leads to a sharper modulus of continuity of the Lyapunov exponent (see \cite{GS1}).

For the multi-frequency shift and the skew-shift models, even with analytic potentials, the available estimates are not as sharp. 
In this paper, for Gevrey potential functions, we will get $\epsilon \approx N^{-\tau}$ and $\delta \approx e^{-N^\sigma}$ for some absolute constants $\tau, \sigma \in (0,1)$.

\medskip

Following the approach in \cite{B}, \cite{BGS}, a large deviation estimate like (\ref{ldt-idea}) will allow us to obtain a lower (positive) bound and continuity of the Lyapunov exponent, once these properties are established at an initial scale $N_0$ for $L_{N_0} (E)$. It will also allow us (the reader will be refered to \cite{B}, \cite{BGS} for details) to establish estimates on the Green's functions associated with the operator (\ref{op1}), more specifically the fact that double resonances for Green's functions occur with small probability, which leads to Anderson localization.

Most of the paper will then be devoted to proving a  LDT like (\ref{ldt-idea}):
 \begin{equation}\label{ldt-strategy}
\mbox{mes } [ \x \in \T^2: | \frac{1}{N} \log \norm{ M_{N}(\x, E) }- L_{N} (E) | > N^{-\tau} ] <  e^{-N^\sigma}
\end{equation}  
through an inductive process on the scale $N$.

\smallskip

The \textit{base step} of the inductive process for proving the LDT (\ref{ldt-strategy}) is based exclusively on the transversality condition \eqref{TC} on  the potential, and on choosing a sufficiently large disorder $\la$. The latter is what makes this approach perturbative (and, in the case of the skew-shift model,  wasteful, since it does not exploit the weakly-mixing properties of its dynamics). The former implies a {\L}ojasiewicz type inequality, which we prove using a quantitative form of the implicit function theorem.

\smallskip

In the \textit{inductive step} we use the regularity of the potential function $v (\x)$ and the arithmetic properties of the frequency. The regularity of $v (\x)$ allows us to approximate it efficiently by trigonometric polynomials $v_N (\x) $ at each scale $N,$ and to use these approximants in place of $v (\x)$ to get analytic substitutes $\tilde{M_N} (\x)$ for the transfer matrices $M_N (\x)$. Their corresponding logarithmic averages will be subharmonic in each variable which will allow us to employ the subharmonic functions techniques developed in \cite{B}, \cite{BG}, \cite{BGS}.

The main technical difficulty with this approach, and what restricts it to Gevrey (instead of say, $C^\alpha$) potential functions, is that the holomorphic extensions of the transfer matrix substitutes  $\tilde{M_N} (\x)$ will have to be restricted to domains of size $\approx N^{-\delta}$ for some $\delta > 0$. In other words, the estimates will not be uniform in $N$, and this decreasing width of the domain of holomorphicity will have to be overpowered. This will not be possible for a $C^\alpha$ potential function because its  trigonometric polynomial approximation is less efficient, so the width of holomorphicity in this case will decrease too fast (exponentially fast).

\smallskip

\section{ Description of the approximation process}\label{approximation} 
Let $v \in G^s (\T^2)$ be a Gevrey potential function. Then
\begin{equation}\label{fourierexp}
v(\x) = \sum_{\li \in \mathbb{Z}^2} \hat{v} (\li) e^{2 \pi i \, \li \cdot \x}
\end{equation} 
where for some constants $M, \rho > 0,$ its  Fourier coefficients have the decay:
\begin{equation}\label{fouriercoef} 
\abs{  \hat{v} (\li) } \leq  M e^{- \rho | \li |^{1/s}}  \ \text{ for all } \  \li \in \mathbb{Z}^2
\end{equation}

We will compare the logarithmic averages of the transfer matrix  
\begin{equation}\label{LNx}
L_N (\x, E) = \frac{1}{N} \log \norm{ M_N (\x, E) } \, d\x = \frac{1}{N} \, \log \norm{ \prod_{j=N}^{1}  \Bigl[ \begin{array}{ccc}
\lambda v (\shift^j \, \x ) - E  &   - 1  \\
1 &   0 \\  \end{array} \Bigr] }
\end{equation}
with their means
\begin{equation}\label{LN}
L_{N} (E) = \int_{\T^2} L_N (\x, E)  \, d\x
\end{equation}

To be able to use subharmonic functions techniques, we will have to approximate the potential function $v (\x)$ by trigonometric polynomials $v_N (\x)$ and substitute $v$ by $v_N$ into  (\ref{LNx}). At each scale $N$ we will have a different approximant   $v_N$ chosen in such a way  that the ``transfer matrix substitute'' would be close to the original transfer matrix. The approximant $v_N$ will then have to differ from $v$ by a very small error - (super)exponentially small in $N$. That, in turn, will make the degree deg $v_N =: \tilde{N}$ of this polynomial very large - based on the rate of decay (\ref{fcoef}) of the Fourier coefficients of $v$, $\tilde{N}$ should be a power of $N$, dependent on the Gevrey class $s$.

The trigonometric polynomial $v_N (\x)$ has an extension $v_N (\z)$, $\z = \zz$, which is separately holomorphic on the whole complex plane in each variable. We have to restrict $v_N (\z)$  in each variable to a narrow strip (or annulus, if we identify the torus $\T$ with $\R/\Z$)  of width $\rho_N$, where  $\rho_N \approx ( \mbox{deg }v_N )^{- 1} \approx \tilde{N}^{-1} \approx N^{- \theta}$, for some power $\theta > 0$. 
This is needed in order to get a uniform in $N$ bound on the extension $v_N (\z)$. Moreover, in the case of the skew-shift, this is also needed because its dynamics expands in the imaginary direction, and in this case, the width of holomorphicity in the second variable will have to be smaller than in the first by a factor of $\approx \frac{1}{N}$. 

The fact that the ``substitutes''  $v_N (\x) $ have different, smaller and smaller widths of holomorphicity creates significant technical problems compared to the case when $v(\x)$ is a real analytic function. It also makes this approach fail when the rate of decay of the Fourier coefficients of the potential function $v (\x)$ is slower.

Therefore, we have to find the optimal ``error vs. degree'' approximations of $v(\x)$ by trigonometric polynomials $v_N (\x) $. Here are the formal calculations.
  
\smallskip

For every positive integer $N$, consider the truncation
\begin{equation}\label{trunc}
v_N (\x) := \sum_{| \li | \leq \tilde{N}} \hat{v} (\li) \, e^{2 \pi i \, \li \cdot  \x}
\end{equation} 
where $\tilde{N} = \mbox{deg } v_N$ will be determined later.

Since $v_N \xx $ is in each variable a $1$-periodic, real analytic function on $\mathbb{R}$, it can be extended to a separately in each variable $1$-periodic holomorphic function on $\mathbb{C}$: 
\begin{equation}\label{holext} 
v_N (\z) := \sum_{\sabs{ \li } \leq \tilde{N}} \hat{v} (\li) e^{2 \pi i \, \li \cdot \z}
\end{equation}

To ensure the uniform boundedness in $N$ of $v_N \zz$ we have to restrict 
 $v_N \zz$ to the  annulus/strip  $[ \sabs{ \Im z  } < \rho_{1,N} ] \times [ \sabs{ \Im z  } < \rho_{1,N} ] $, where $$ \rho_{1,N} := \frac{\rho}{2} \tilde{N}^{ - 1 + 1/s} $$

 Indeed, if $z_1 = x_1 + iy_1$,  $z_1 = x_1 + iy_1$ and $\sabs{y_1}, \sabs{y_2} < \rho_{1,N}$, then: 

\begin{align*}
\abs{ v_N \zz } = \abs{ \sum_{| \li | \leq \tilde{N}} \hat{v} (\li) e^{2 \pi i \, \li \cdot  \z} }  \le & \, 
\sum_{| \li | \leq \tilde{N}} \, \abs{\hat{v} (\li) } e^{- 2 \pi \, \li  \cdot \y } \\
 \leq M \sum_{| \li | \leq \tilde{N}} e^{- \rho | \li |^{1/s}}  e^{| l_1 | | y_1 | + |l_2| |y_2|}
\le & \,
 M \sum_{| \li | \leq \tilde{N}} e^{- \rho | \li |^{1/s}}  e^{| \li | \, \rho_{1,N}}   \\
\le M \sum_{| \li | \leq \tilde{N}}  e^{- \rho | \li |^{1/s}} \cdot e^{| \li | \,  \rho/2 \, | \li |^{- 1 + 1/s}}  = & \, 
M \sum_{| \li | \leq \tilde{N}}  e^{- \frac{\rho}{2} | \li |^{1/s}}   \\
\le M \sum_{\li \in \Z^2}  e^{- \frac{\rho}{2} | \li |^{1/s}}  =: & \, B < \infty 
\end{align*}
where $B$ is a constant which depends on $v$ (not on the scale $N$) and we have used : 
$ \sabs{y_1}, \sabs{y_2}  <  \rho_{1, N} =  \frac{\rho}{2} \tilde{N}^{- 1 + 1/s} \leq  \frac{\rho}{2} | \li |^{- 1 + 1/s} $  for $ | \li | \leq \tilde{N}$, since $s > 1$.

\smallskip

We also clearly have $ | v (\x) - v_N (\x)| \lesssim e^{-\rho \tilde{N}^{1/s}}  $  for all 
$\x \in \T^2$.

\smallskip

We will need, as mentioned above, super-exponentially small error in how $v_N (\x)$ approximates $v (\x)$, otherwise the error would propagate and the transfer matrix substitutes will not be close to the original transfer matrices. Hence $\tilde{N}$ should be chosen such that say $ e^{-\rho \tilde{N}^{1/s}} \leq e^{- \rho N^{2}}.$
So if $\tilde{N} := N^{2 s}$, then the width of the holomorphic  (in each variable) extension $v_N (\z)$  will be $\rho_{1,N}  = \frac{\rho}{2} N^{2  s  (- 1+ \frac{1}{s})} =  \frac{\rho}{2}  N^{- 2 (s - 1)}
=: \frac{\rho}{2} N^{- \delta}$, where $\delta := 2 \, (s - 1) > 0$. 

\smallskip

We conclude: for every integer $N \geq 1$, we have a function $v_N (\x)$ on $\mathbb{T}^2$ such that
\begin{equation}\label{aproxtrunc} 
\abs{ v (\x) - v_N (\x) } < e^{- \rho N^2} 
\end{equation}
and $v_N (\x)$ has a $1$-periodic separately holomorphic extension $v_N (\z)$ to the strip $[ \sabs{ \Im z  } < \rho_{1,N} ] \times [ \sabs{ \Im z  } < \rho_{1,N} ] $, where $ \rho_{1,N}  = \frac{\rho}{2} N^{- \delta} $,   for which 
\begin{equation}\label{boundv} 
\abs{ v_N (\z) } \leq B
\end{equation} 
The positive constants $\rho$, $B$, $\delta$ above depend only on $v$ (not on the scale $N$).
The constant $\delta$ depends on the Gevrey class of $v$:  $\delta := 2 (s - 1) $ so it is fixed but presumably very large. 
\medskip

We now substitute these approximants $v_N (\x)$ for $v(\x)$ in the definition of the transfer matrix $M_N (\x)$. 

Let  $$ A(\x, E)  
 :=  \Bigl[ \begin{array}{ccc}
\lambda v (\x) - E  & &   - 1  \\
1 & &   0 \\  \end{array} \Bigr]$$
be the cocycle  that defines the transfer matrix  $M_N (\x)$.  

Consider then
$$ \tilde{A}_N (\x, E) 
 :=  \Bigl[ \begin{array}{ccc}
\lambda v_N (\x) - E  & &   - 1  \\
1 & &   0 \\  \end{array} \Bigr]$$
which leads to the transfer matrix substitutes
$$\tilde{M} _{N} (\x, E)  := \prod_{j=N}^{1} \tilde{A}_N (\shift^{j} \x, E) $$

To show that the substitutes are close to the original matrices, we use Trotter's formula. This is a wasteful approach, and clearly in part responsible for our inability to apply these methods beyond Gevrey functions. There are other, much more subtle reasons for why this approach is limited to this class of functions. 

$$ M_{N} (\x) -  \tilde{M}_{N} (\x)  =  $$ 
$$ = \sum_{j=1}^{N} A (\shift^N \x) \ldots A (\shift^{j+1} \x) \,  [A (\shift^j \x) - \tilde{A}_{N}(\shift^j \x)] \,  \tilde{A}_{N}(T^{j-1} \x) \ldots \tilde{A}_{N} (\shift \x)$$
$$ A (\shift^j \x) - \tilde{A}_{N}(\shift^j \x) = \left[ \begin{array}{cc}
\lambda v (\shift^j \x ) - \lambda v_{N} (\shift^j \x )  &   0  \\
0 &   0 \\  \end{array} \right ]$$ so 
$$ \norm{ A (\shift^j \x) - \tilde{A}_{N}(\shift^j \x) } \le \sabs{\la} \, \sup_{\y \in \mathbb{T}^2} \abs{ v (\y) - v_N (\y) }  <   \sabs{\la} \, e^{-\rho N^{2}} $$ 

\smallskip

Since $  \sup_{\x \in \mathbb{T}^2} | v (\x) | \leq B $, the spectrum of the operator $ H (\x) $ is contained in the interval $ [ -2 - \sabs{\la} \, B,  2 + \sabs{\la} \, B \, ]$. Hence it is enough to consider only the energies $E$ such that $ |E| \leq 2 + |\lambda| \, B $. We then have:

$$ \norm{ A (\shift^j \x) }  = \bnorm{  \Bigl[ \begin{array}{cc}
\lambda v (\shift^j \x ) - E  & - 1  \\
1 & 0 \\  \end{array} \Bigr] }  \le  \sabs{\la} \, B  + \abs{E} + 2  \le  2 \sabs{\la}  B + 4  \le  e^{S(\la)} $$

and

$$ \norm{ \tilde{A}_{N} (\shift^j \x) }  \le  \bnorm{  \Bigl[ \begin{array}{cc}
\la v_{N} (\shift^j \x ) - E  & - 1 \\
1 & 0 \\  \end{array} \Bigr] }  \le \sabs{\la}  \, B + \abs{E} + 2  \leq e^{S(\lambda)} $$ 

Therefore,
\begin{equation}\label{boundA}
 \norm{ A (\shift^j \x) }, \, \norm{ \tilde{A}_{N} (\shift^j \x) }   \leq  e^{S(\la)}
\end{equation}
where $S(\lambda) \approx \log \sabs{\la}$ is a scaling factor that depends only on the (assumed large)  disorder $\la$ and on  $v$ (the constants inherent in  $\approx$ depend on the number $B = B(v)$ which also determines the range of spectral values  $E$). 

\smallskip

We then have:
$$ \norm{ M_{N} (\x, E) - \tilde{M}_{N} (\x, E) }  \le  \sum_{j=1}^{N} e^{S(\lambda)} 
\ldots  e^{S(\lambda)} \,  | \la | \, e^{-\rho N^2}  e^{S(\lambda)} \ldots  e^{S(\lambda)} \leq  $$
$$\leq e^{N S(\lambda) - \rho N^2}  \leq  e^{-\rover N^2}$$
provided $ N \gtrsim S(\lambda)$.

Hence uniformly in $\x \in \T^2$ we get:
\begin{equation}\label{tmclose}
 \norm{  M_{N} (\x, E) - \tilde{M}_{N} (\x, E) } \le e^{-\rover N^2}
 \end{equation}
provided we choose 
\begin{equation}\label{n>b(la)}
N \gtrsim S(\lambda)
\end{equation}
which means roughly that $\la$ has to be at most exponential in the scale $N$.

\medskip

We are now going to turn our attention to the logarithmic averages of the transfer matrices.

Since $\det M_N (\x) = 1 $ and  $\det \tilde{M}_N (\x) = 1 $, we have that 
$ \norm{ M_N (\x) } \ge  1 $ and $ \norm{ \tilde{M}_N (\x) } \ge  1 $. Thus, for all $ N \gtrsim S(\lambda)$ and for every $\x \in \T^2$, 

$$ \abs{ \frac{1}{N} \log \norm{ M_{N} (\x) }  -  \frac{1}{N} \log \norm{ \tilde{M}_{N} (\x) } }  \le  \frac{1}{N} \norm{ M_{N} (\x)  - \tilde{M} _{N} (\x) }  < e^{-\rover N^2} $$

Recall the following notation:
\begin{equation}\label{LNxE}
L_N (\x, E) = \frac{1}{N} \log \norm{ M_N (\x, E) } \, d\x 
\end{equation}
and define its substitute:
\begin{equation}\label{shN}
u_{N} (\x, E)  := \frac{1}{N} \log || \tilde{M}_{N} (\x) || 
\end{equation}

Therefore, uniformly in $\x \in \T^2$ and in the energy $E$:
$$\abs{ L_N (\x, E)  - u_{N} (\x, E) }  < e^{-\rover N^2} $$
and by averaging in $\x$: 
$$ \abs{ L_N (E) - \avg{u_N (E)} }  < e^{-\rover N^2}$$
where $ L_{N} (E) := \int_{\T^2} L_N (\x, E)  \, d\x$ and for any function $u (\x)$, $\ \avg{u} := \int_{\T^2} u (\x) \, d \x$.

\medskip

The advantage of the substitutes $u_{N} (\x)$ is that they extend to pluri-subharmonic functions in a neighborhood of the torus $\T^2$, as explained below.

\smallskip

For the skew-shift transformation $\shift = \skews$ we consider the strip  $ \Aa_{\ro_N} := [ \sabs{ \Im z  } < \rho_{1,N} ] \times [ \sabs{ \Im z  } < \rho_{2,N} ] $ where $ \rho_{1,N} = \roverr N^{- \delta}$ and 
 $ \rho_{2,N} := \frac{\rho_{1,N}}{2 N}  = \roverr N^{- \delta -1}$ 
 
 \smallskip
 
We have to reduce the size of the strip in the second variable to account for the fact that the skew-shift expands in the imaginary direction. Our approximation method required a reduction in the size of the holomorphicity strip at each scale, and this additional reduction will be comparatively harmless.

If we extend the map $\skews$ from $\T^2 = (\R/\Z)^2$ to $\C^2$, by 
$$\displaystyle \skews \zz = (z_1+z_2, z_2 + \om)$$ we get as in (\ref{skewn}) that 
$$\displaystyle 
\skews^n \zz = (z_1 + n z_2 + \frac{n (n-1)}{2} \om, z_2 + n \om)
$$

Then if $\zz \in  \Aa_{\ro_N} $ and if we perform $n \leq N$ iterations,  we have:
\begin{equation}\label{goodstrip}
\abs{  \Im (z_1 + n z_2 + \frac{n (n-1)}{2} \om) } = \abs{ \Im (z_1 + n z_2) } = \sabs{ y_1 + n y_2 } < \rover N^{-\delta}
\end{equation}

\smallskip

The matrix function $$\displaystyle  \tilde{A}_N (\x) =  \Bigl[ \begin{array}{ccc}
\lambda v_N (\x) - E  & &   - 1  \\
1 & &   0 \\  \end{array} \Bigr]$$  extends to a $1$-periodic, separately in each variable holomorphic matrix valued function: 
$$ \tilde{A}_N (\z) 
 :=  \Bigl[ \begin{array}{ccc}
\lambda v_N (\z) - E  & &   - 1  \\
1 & &   0 \\  \end{array} \Bigr]$$

Using \eqref{boundv} and the definition of the scaling factor $S (\la)$, we have that  on the strip $ \Aa_{\ro_N}$ the matrix valued function $ \tilde{A}_N (\z) $ is uniformly in $N$ bounded by $e^{S(\la)}$. 
Combining this  with (\ref{goodstrip}), the transfer matrix substitutes extend on the same strip to separately  holomorphic matrix valued functions 
$$\tilde{M} _{N} (\z, E)  := \prod_{j=N}^{1} \tilde{A}_N (\skews^{j} \, \z, E) $$
such that, for all $\z \in \Aa_{\ro_N} $ and for all energies $E$ we have
$$\norm{ \tilde{M} _{N} (\z, E)  } \le e^{N S(\la)}$$

Therefore,  
$$ u_{N} (\z)  := \frac{1}{N} \log \norm{ \tilde{M}_{N} (\z) } $$ 
is a pluri-subharmonic function on the strip $  \Aa_{\ro_N} $, and 
for any $\z$ in this strip, $$  \sabs{ u_N (\z) }  \le  S(\la) $$

\smallskip

The same argument applies to the multifrequency shift $\shift = \mshift$. The extension of this dynamics to the complex plane
$$\mshift \zz = (z_1 +  \om_1, z_2 +  \om_2)$$
does not expand in the imaginary direction, so there is no need to decrease the width of the strip in the second variable as in the case of the skew-shift. However, for convenience of notations, we will choose the same strip $\Aa_{\ro_N}$ for both transformations.

We can now summarize all of the above into the following.

\begin{lemma}\label{lemma1} 
For fixed parameters $\lambda, E$, for a fixed transformation $\shift = \skews$ or $\shift = \mshift$ and for $\delta = 2 (s-1)$, at every scale $N$ we have a 
$1$-periodic function 
$$u_{N} (\x)  := \frac{1}{N} \log \norm{ \tilde{M}_{N} (\x) } $$ 
which extends to a pluri-subharmonic function $ u_N (\z)$  on the strip  $  \Aa_{\ro_N} =  [ \left| \Im z \right| < \rho_{1, N} ]  \times [ \left| \Im z \right| < \rho_{2, N} ] $, where
$ \rho_{1,N}  \approx  N^{- \delta} $, $ \rho_{2,N}  \approx  N^{- \delta-1} $ 
so that 
\begin{equation}\label{boundu}
\sabs{ u_N (\z) }  \le  S(\la) \quad \text{ for all } \quad  
\z \in  \Aa_{\ro_N}
\end{equation}

Note that the bound (\ref{boundu}) is uniform in $N$. 

Moreover, if $ N \gtrsim  S(\lambda)$, then the logarithmic averages of the transfer matrices $ M_N (\x) $  are well approximated by their substitutes $ u_N (\x)$:
\begin{equation}\label{aproxu}  
\abs{ \frac{1}{N} \log \norm{ M_{N} (\x) }  -  u_N (\x) } \lesssim e^{- N^2}
\end{equation}
\begin{equation}\label{aprox<u>} 
\abs{ L_N -  \avg{u_N} }  \lesssim e^{- N^2}
\end{equation}
\end{lemma}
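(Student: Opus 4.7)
The lemma is essentially an assembly of the pieces built above in Section~\ref{approximation}, so the plan is to collect them in the right order. First, I would invoke the explicit Fourier truncation \eqref{trunc} with $\tilde{N} = N^{2s}$, giving an approximant $v_N$ with $\abs{v(\x)-v_N(\x)} < e^{-\rho N^2}$ on $\T^2$ and a separately $1$-periodic holomorphic extension $v_N(\z)$ satisfying $\abs{v_N(\z)} \le B$ on the polydisc $[\sabs{\Im z_1} < \rho_{1,N}] \times [\sabs{\Im z_2} < \rho_{1,N}]$ with $\rho_{1,N} \approx N^{-\delta}$, $\delta = 2(s-1)$.

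Next, I would verify that the complexified dynamics preserve a suitable holomorphicity polydisc over $N$ iterations. For $\shift = \mshift$ this is immediate: translation by a real multi-frequency leaves imaginary parts unchanged, so $\shift^j \z \in \Aa_{\ro_N}$ for all $j \le N$. For $\shift = \skews$, the extended formula \eqref{skewn} gives imaginary first coordinate $y_1 + j y_2$, which stays below $\rho_{1,N}$ for all $j \le N$ precisely when $\sabs{y_2} \less N^{-1} \rho_{1,N}$, which is exactly the justification for the choice $\rho_{2,N} \approx N^{-\delta-1}$.

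Consequently each $\tilde{A}_N(\shift^j \z, E)$ is separately holomorphic on $\Aa_{\ro_N}$ with norm at most $e^{S(\la)}$, so the product $\tilde{M}_N(\z,E)$ is separately holomorphic with $\norm{\tilde{M}_N(\z,E)} \le e^{N S(\la)}$, yielding $u_N(\z) \le S(\la)$. A direct computation gives $\det \tilde{A}_N \equiv 1$, whence $\det \tilde{M}_N \equiv 1$; for a unimodular $2\times 2$ complex matrix the singular-value identity $\sigma_1 \sigma_2 = 1$ forces $\norm{\tilde{M}_N} \ge 1$, so $u_N \ge 0$ everywhere, establishing \eqref{boundu}. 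Plurisubharmonicity of $u_N$ is standard: $\log \norm{\tilde{M}_N(\z)}$ is the upper semicontinuous regularization of $\sup \log \sabs{\langle \tilde{M}_N(\z) u, v \rangle}$ over unit vectors $u, v$, and each term is plurisubharmonic as the logarithm of the modulus of a separately holomorphic function.

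Finally, \eqref{aproxu} and \eqref{aprox<u>} follow at once from the Trotter bound \eqref{tmclose}, the elementary inequality $\sabs{\log \norm{M} - \log \norm{\tilde{M}}} \le \norm{M - \tilde{M}}$ valid whenever both norms are $\ge 1$, and integration in $\x$. The only delicate conceptual point is the two-fold shrinkage of the polydisc — the $N^{-\delta}$ loss inherent in the Gevrey truncation compounded with the additional $N^{-1}$ loss from the imaginary expansion of the skew-shift — which is harmless here but will be the quantitative bottleneck in all later subharmonic-function estimates, and is precisely what restricts the whole method to Gevrey rather than $C^\alpha$ regularity.
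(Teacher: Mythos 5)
Your proposal is correct and follows the same route as the paper: Fourier truncation to degree $N^{2s}$, verification that the complexified skew‑shift orbit stays inside the widened holomorphicity region (which forces the additional $N^{-1}$ loss in the second variable), the uniform bound $e^{S(\la)}$ on $\tilde A_N$, unimodularity to get $\norm{\tilde M_N}\ge 1$ and hence $u_N\ge 0$, and Trotter's telescoping to transfer the $e^{-\rho N^2}$ closeness of $v$ and $v_N$ to the transfer matrices and then to their logarithms. The only place you are slightly more explicit than the paper is in spelling out why $\log\norm{\tilde M_N(\z)}$ is plurisubharmonic, which the paper simply asserts; otherwise the two arguments coincide.
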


All the inherent constants in the above (and future) estimates are either universal or depend only on $v$ (and not on the scale $N$) so they can be ignored. The estimates above are independent of the variable $\x$, the parameters $\la, E$ and the transformation $\shift$.

This s a crucial technical result in our paper, which will allow us to use subharmonic functions techniques as in \cite{B},  \cite{BGS} for the functions $u_N$, and then transfer the relevant estimates to the rougher functions they substitute.

\medskip 
 
The logarithmic averages of the transfer matrix have an almost invariance (under the dynamics) property: 
 
\begin{lemma}\label{inv} 
For all $ \x \in \mathbb{T}^2,$ for all parameters $\la, E $ and for all transformations $\shift$ 
 we have : 
\begin{equation}\label{mshift} 
\abs{ \frac{1}{N} \log \norm{ M_{N} (\x) } -  \frac{1}{N} \log \norm{ M_{N} (\shift  \, \x) } \, } \lesssim  \frac{S(\la)}{N}
\end{equation}
\end{lemma}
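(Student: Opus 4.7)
The plan is to exploit the semigroup (cocycle) structure of the transfer matrix: shifting the base point $\x$ by $\shift$ only changes the product $M_N(\x) = A(\shift^N \x) \cdots A(\shift \x)$ at the two endpoints, so the difference of $\log$-norms can be controlled by the norms of just two cocycle factors.

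Concretely, first I would write the telescoping identity
\begin{equation*}
M_N(\shift \, \x) \, A(\shift \, \x) \;=\; A(\shift^{N+1} \x) \, M_N(\x),
\end{equation*}
which follows from the fact that both sides equal $M_{N+1}(\x)$ when the product is expanded. Since the matrices $A(\shift^j \x)$ are $2 \times 2$ with determinant $1$, their inverses have the same operator norm as themselves. Using the bound \eqref{boundA}, namely $\norm{A(\shift^j \x)} \le e^{S(\la)}$ uniformly in $j$ and $\x$, I rewrite
\begin{equation*}
M_N(\shift \, \x) \;=\; A(\shift^{N+1} \x) \, M_N(\x) \, A(\shift \, \x)^{-1}
\end{equation*}
and take norms to obtain
\begin{equation*}
\norm{M_N(\shift \, \x)} \;\le\; e^{S(\la)} \, \norm{M_N(\x)} \, e^{S(\la)} \;=\; e^{2 S(\la)} \, \norm{M_N(\x)}.
\end{equation*}

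The reverse inequality is obtained in exactly the same way by inverting the identity, i.e.\ writing $M_N(\x) = A(\shift^{N+1} \x)^{-1} \, M_N(\shift \, \x) \, A(\shift \, \x)$ and bounding again by $e^{2 S(\la)}$. Taking $\log$ in both inequalities, dividing by $N$, and using that $\norm{M_N(\x)}, \norm{M_N(\shift \, \x)} \ge 1$ (since both matrices have determinant one), I conclude
\begin{equation*}
\abs{ \tfrac{1}{N} \log \norm{M_N(\x)} - \tfrac{1}{N} \log \norm{M_N(\shift \, \x)} } \;\le\; \tfrac{2 S(\la)}{N},
\end{equation*}
which is the desired bound up to an absolute multiplicative constant absorbed into the $\lesssim$.

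There is essentially no obstacle here; the only thing to double-check is the symmetric use of $\norm{A^{-1}} = \norm{A}$ for $\mathrm{SL}_2$ matrices (equivalently, one could simply note $\norm{A^{-1}} \le e^{S(\la)}$ by the same direct calculation that produced \eqref{boundA}, since inverting a $\det = 1$ matrix merely permutes and negates entries of the same magnitude). The argument is uniform in $\x$, $\la$, $E$ and works identically for both $\shift = \skews$ and $\shift = \mshift$, since it uses nothing about $\shift$ beyond the cocycle identity.
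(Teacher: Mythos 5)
Your proof is correct and uses essentially the same idea as the paper: writing $M_N(\shift\,\x) = A(\shift^{N+1}\x)\,M_N(\x)\,A(\shift\,\x)^{-1}$, invoking \eqref{boundA} to bound the two boundary factors (and their inverses) by $e^{S(\la)}$, and taking logarithms. If anything, you spell out both directions of the inequality more explicitly than the paper does, but the argument is the same.
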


\begin{proof}
\begin{align*}
 \abs{ \frac{1}{N} \log \norm{ M_{N} (\x) } -  \frac{1}{N} \log \norm{ M_{N} (\shift \, \x) } \, }  = \abs{ \frac{1}{N} \log \frac{\norm{M_{N} (\x) }}{\norm{ M_{N} (\shift \, \x) }}  \, } \\
 = \abs{ \frac{1}{N} \log 
\frac{\norm{ A (\shift^N \x ) \cdot \ldots \cdot A (\shift^2 \x ) \cdot  A (\shift \, \x ) } }
{\norm{ A ( \shift^{N + 1} \x ) \cdot  A (\shift^N \x ) \cdot \ldots \cdot A (\shift^2 \x ) } } \, }  \\
 \le \frac{1}{N} \log [ \, 
 \norm{ ( A(\shift^{N + 1} \x ) )^{- 1} } \cdot  \norm{ A (\shift \, \x ) } \, ] 
  \lesssim \frac{S (\la)}{N} 
\end{align*}
where the last bound is due to (\ref{boundA}). The inequality (\ref{mshift}) then follows.
\end{proof}

\bigskip

\section{Averages of shifts of pluri-subharmonic functions}\label{averages}
One of the main ingredients in the proof of the LDT \eqref{ldt-strategy} is an estimate on averages of shifts of pluri-subharmonic functions. These averages are shown to converge in a quantitative way to the mean of the function. The result holds for both the skew-shift and the multi-frequency shift. 

For the skew-shift, the result was proven in \cite{BGS} (see Lemma 2.6 there). We will reproduce here the scaled version of that result, the one that takes into account the size of the domain of subharmonicity and the sup norm of the function. The reader can verify, by following the details of the proof in \cite{BGS}, that this is indeed the correct scaled version.
For the multi-frequency shift, the result is essentially contained within the proof of Theorem 5.5 in \cite{B}, but for completeness, we will include here the details of its proof. 

\begin{proposition}\label{avshifts}
Let $ u (\x) $ be a real valued function on $\T^2$, that extends to a pluri-subharmonic function $u (\z)$ on a strip $\Aa_{\ro} = [ \sabs{ \Im z_1 } < \rho_1 ] \times   [ \sabs{ \Im z_2 } < \rho_2 ] $. Let $\rho = \min \{\rho_1, \rho_2 \}$.   Let $\shift$ be either the skew-shift or the multi-frequency shift on $\T^2$, where the underlying frequency satisfies the $DC_\kappa$ described in \eqref{DC} or \eqref{DCM} respectively.
Assume that
$$\sup_{\z \in \Aa_{\ro}} \, | u (\z) | \leq S$$ 
Then for some explicit  constants $\sigma_0, \tau_0 > 0$, and for $n \geq n (\kappa)$ we have:
\begin{equation}\label{shiftldt} 
\mbox{mes }[ \x \in \T^2 : \abs{ \frac{1}{n} \sum_{j = 0}^{n-1}  u (\shift^j \x)  \, -  \avg{u}   } > \frac{S}{\rho} \, n^{- \tau_0} ] < 
e^{- n^{\sigma_0}}
\end{equation}
\end{proposition}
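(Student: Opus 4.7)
The plan is to establish the proposition in the multi-frequency shift case, taking the skew-shift case for granted as a scaled version of Lemma 2.6 in \cite{BGS}. Fix $\shift = \mshift$ and set
\[ f_n (\x) := \frac{1}{n} \sum_{j=0}^{n-1} u (\x + j \omm) - \avg{u} . \]
Since $\mshift$ extends to an entire translation of $\C^2$ that preserves $\Aa_{\ro}$, the Birkhoff average is itself pluri-subharmonic on $\Aa_{\ro}$ with sup bound $\le S$, hence $|f_n| \le 2 S$ on the whole strip. The proof has two stages: first obtain a polynomial-in-$n$ bound on $\|f_n\|_{L^2(\T^2)}$ by Fourier analysis, then upgrade this to the subexponential measure bound via the pluri-subharmonicity of $f_n$ itself.

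For the $L^2$ bound, expand $u(\x) = \sum_{k \in \Z^2} \hat u (k) e^{2 \pi i k \cdot \x}$. A contour-shifting argument exploiting the Riesz representation of a subharmonic function and the bound $\sup_{\Aa_{\ro}} |u| \le S$ yields the Fourier decay $|\hat u (k)| \less S / (\rho |k|)$ for $k \neq 0$. Then
\[ f_n (\x) = \sum_{k \neq 0} \hat u (k) \, D_n (k \cdot \omm) \, e^{2\pi i k \cdot \x}, \qquad |D_n (\theta)| \le \min \{1, (n \|\theta\|)^{-1}\}, \]
and by Parseval $\|f_n\|_{L^2}^2 = \sum_{k \neq 0} |\hat u(k)|^2 |D_n (k \cdot \omm)|^2$. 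Split at a threshold $K$: for $0 < |k| \le K$ the Diophantine condition \eqref{DCM} gives $|D_n (k \cdot \omm)| \le K^A / (n \kappa)$, so the low-frequency contribution is $\le S^2 K^{2A} / (n \kappa)^2$; for $|k| > K$ the Fourier decay bounds the tail by $(S/\rho)^2 K^{-c}$ for some $c > 0$. Choosing $K$ as a suitable power of $n$ balances the two contributions and gives
\[ \|f_n\|_{L^2(\T^2)} \less \frac{S}{\rho} \, n^{-\alpha_0} \]
for some explicit $\alpha_0 > 0$ depending only on $A$ (and hence only on $\kappa$). To convert this into the statement \eqref{shiftldt}, use that $f_n$ is pluri-subharmonic on $\Aa_{\ro}$ with $|f_n| \le 2S$: its trace on $\T^2$ has bounded mean oscillation with $\|f_n\|_{BMO(\T^2)} \less S$, so by John-Nirenberg $\|f_n\|_{L^p(\T^2)} \less p S$ for every $p \ge 2$. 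Interpolating this $L^p$ bound with the $L^2$ bound above, and then applying Chebyshev with $p$ chosen as a small power of $n$, yields $\mathrm{mes}\{|f_n| > (S/\rho) n^{-\tau_0}\} < e^{-n^{\sigma_0}}$ for any $\tau_0 < \alpha_0$ and some explicit $\sigma_0 > 0$.

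The main obstacle is maintaining the correct scaling in the strip width $\rho$ throughout. In the intended application (Lemma \ref{lemma1}), $\rho = \rho_N$ shrinks polynomially in the scale $N$, so any spurious loss of an extra power of $\rho$ in either the Fourier decay estimate or the BMO bound would compound through the induction and destroy the LDT \eqref{ldt-strategy}; the factor $S/\rho$ that appears in the deviation threshold is precisely the natural "effective Lipschitz" scale for a subharmonic function of sup norm $S$ on a strip of width $\rho$, and both steps must preserve it exactly. For the skew-shift case handled in \cite{BGS}, the extra difficulty is that the dynamics expands in the imaginary direction (see \eqref{goodstrip}), which is precisely why the proposition permits asymmetric strips and why $\rho := \min\{\rho_1, \rho_2\}$ enters the bound; the overall structure of the argument is nevertheless parallel.
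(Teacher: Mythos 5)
Your Fourier/$L^2$ step (Riesz bound on the Fourier coefficients, convolution with the Fej\'er kernel, split at a threshold $K$, Chebyshev) is exactly the paper's argument. The gap is in the boosting step.

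You claim $\norm{f_n}_{BMO(\T^2)} \less S$ from pluri\-subharmonicity and $\sabs{f_n} \le 2 S$ on $\Aa_{\ro}$. That is not correct as stated. The Riesz mass of a subharmonic function bounded by $S$ on a strip of width $\rho$, when restricted to a slightly narrower strip, is of order $S/\rho$ (Green's formula brings in the reciprocal of the distance to the boundary), and it is this mass that controls the logarithmic-potential part of the decomposition. The resulting unconditional BMO bound is therefore $\less S/\rho$, not $\less S$; the $S$ bound only holds when $\rho \sim 1$. In the intended application $\rho = \rho_N \approx N^{-\delta - 1}$, so the loss is a full power of the scale --- precisely the loss you correctly flag as fatal in your last paragraph, but then incur.

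Moreover, even granting an $\norm{f_n}_{BMO}$ bound of the right size, the ``John--Nirenberg $L^p$ bound, interpolate with $L^2$, Chebyshev'' scheme does not produce the conclusion. If you interpolate $\norm{f_n}_{L^2} \less (S/\rho) n^{-\alpha_0}$ against $\norm{f_n}_{L^p} \less p\,(S/\rho)$ at a $q$ between $2$ and $p$, the interpolation weight on the $L^2$ endpoint is $\theta \approx 2/q$, so the resulting Chebyshev bound at threshold $(S/\rho) n^{-\tau_0}$ decays like $n^{-(\alpha_0\theta - \tau_0) q}$; making this exponentially small forces $q$ to be a power of $n$, which makes $\theta \to 0$ and kills the $n^{-\alpha_0\theta}$ gain. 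The two desiderata are incompatible. The paper's Lemma~\ref{boost} (a rescaling of Bourgain's Lemma~4.12) does something structurally different: it uses the a priori measure estimate \eqref{apriori-avshifts} together with the Riesz representation to show that the BMO norm of $f_n$ is controlled by $\epsilon_0^{1/4} + \sqrt{B/\rho}\,\epsilon_1^{1/4}/\epsilon_0^{1/2}$ --- a \emph{conditional} bound that is much smaller than the unconditional $S/\rho$ precisely because $\epsilon_0, \epsilon_1$ are small --- and it is this small BMO norm, fed into John--Nirenberg, that produces the $e^{-n^{\sigma_0}}$ bound. You need this conditional improvement; an unconditional BMO bound, of whatever size, plus interpolation cannot replace it.

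To repair the argument, apply the paper's Lemma~\ref{boost} (or Bourgain's Lemma~4.12) to the normalized average $u^\sharp := (\rho/S) \cdot \frac{1}{n}\sum_{j=0}^{n-1} u(\cdot + j\omm)$, which is pluri\-subharmonic on $\Aa_{\ro}$ with sup bound $B = \rho$ (so $\sqrt{B/\rho} = 1$), and plug in $\epsilon_0 = n^{-a/3}$, $\epsilon_1 = n^{-4a/3}$ from the Chebyshev step.
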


Here is how this estimate can be understood. Given the ergodicity of the transformation $\shift$ for irrational (or rationally independent) frequencies, on the long run, the orbits  $\shift^j \x $ of most points $\x$ will tend to be fairly well distributed throughout the torus $\T^2$ (see the picture below). 

\begin{figure}[h]
\centering
\subfloat[Iterations of the skew-shift]{
\includegraphics[width=0.43\textwidth]{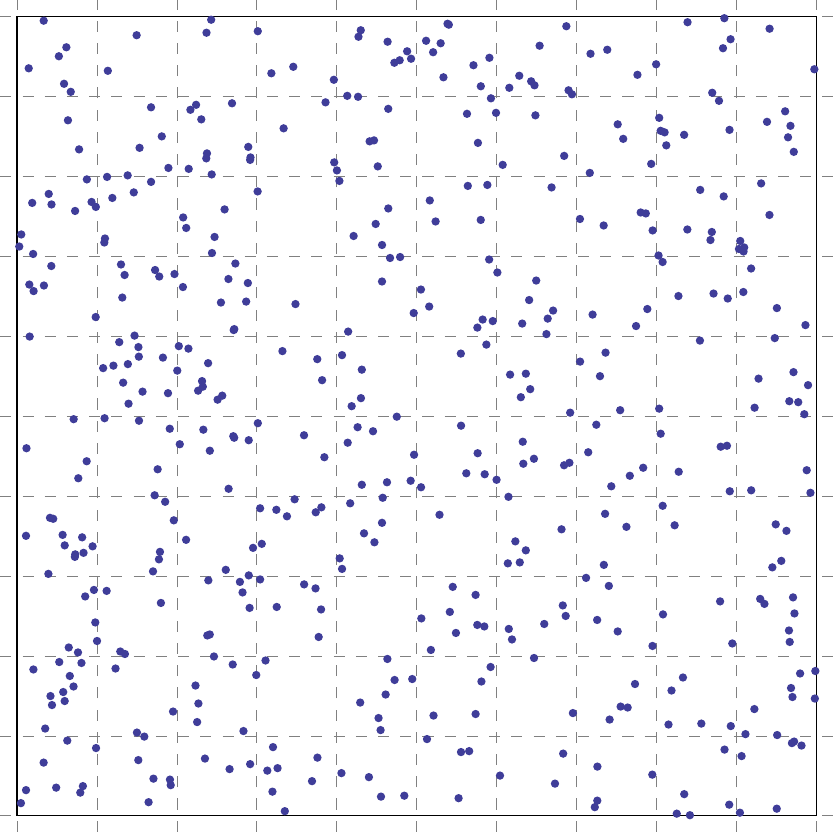}
\label{fig:skews}}
\qquad
\subfloat[Iterations of the multifrequency shift]{
\includegraphics[width=0.43\textwidth]{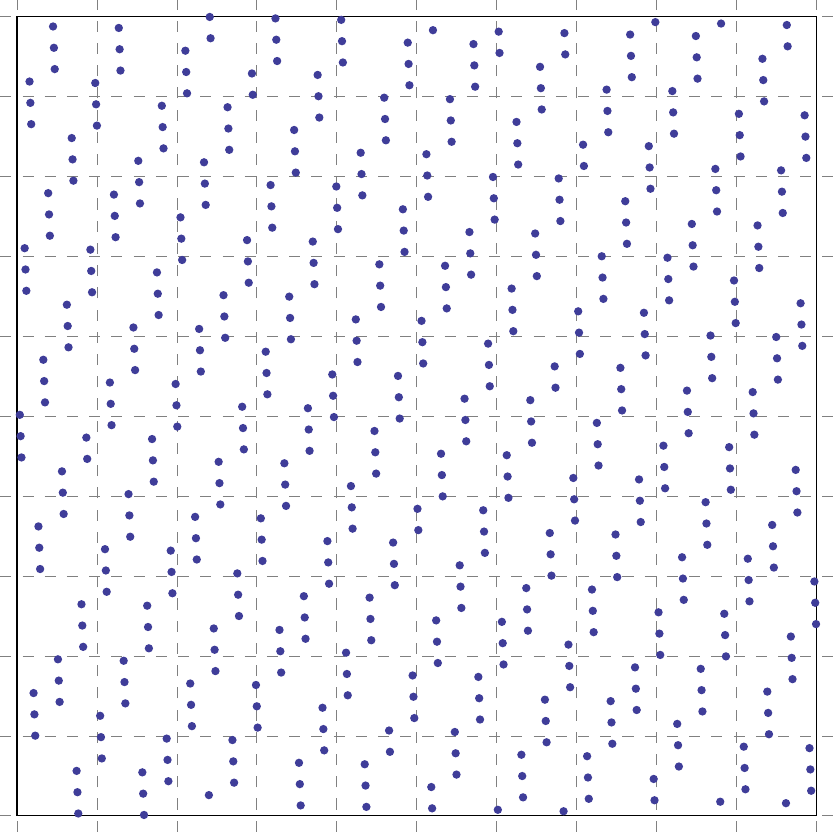}
\label{fig:mshift}}
\label{fig:globfig}
\end{figure}

The average $\ \frac{1}{n} \sum_{j = 0}^{n-1}  u (\shift^j \x) $ will then resemble a Riemann sum of the function $u (\x)$ and as such, it will approach the integral $\avg{u}$.

Moreover, a quantitative description of the irrationality (or rational independence) of the frequency in the form of a Diophantine condition like \eqref{DC}, \eqref{DCM},  should lead to a quantitative description of the convergence of the average sum to the integral $\avg{u}$.

To prove this quantitative convergence result,  we consider the Fourier expansion of  the function $u (\x)$ and apply it to the average sums. This leads to a convolution of $u (\x)$ with a Fej\'{e}r-type kernel. It is crucial to have estimates on the Fourier coefficients of the function $u$, and they are obtained via Riesz' representation theorem for subharmonic functions (see Corollary 4.1. in \cite{B}). Since $u \zz$ is pluri-subharmonic, the scaled version of Corollary 4.1. in \cite{B}  implies:
\begin{equation}\label{Riesz} 
\sup_{x_2 \in \T} \, \abs{\hat{u} (l_1, x_2)}  \less \frac{S}{\rho_1} \cdot \frac{1}{\sabs{l_1}}  \ \text{ and  } \ 
\sup_{x_1 \in \T} \, \abs{\hat{u} (x_1, l_2)}  \less \frac{S}{\rho_2} \cdot \frac{1}{\sabs{l_2}} 
\end{equation}

The estimates (\ref{Riesz}) imply (small) upper bounds on the $L^2$ - norm of the part of the Fourier expansion for which at least one of the indices $l_1$ and $l_2$ is large. The difficult part is when both indices $l_1$ and $l_2$  are small, in which case we use the Diophantine condition on the frequency to estimate the resulting exponential sums. 

In the case of the skew shift dynamics \eqref{skewn}, the resulting exponential sums are quadratic, and they are estimated using Weyl's method (see \cite{BGS} for the details of the proof).  We will now present the details of the proof for the multi-frequency shift case $\shift \, \x = \shift_{\omm} \, \x := \x + \omm$.

\begin{proof}
Expand $u (\x)$ into a Fourier series
$$u (\x) = \avg{u} + \sum_{\genfrac{}{}{0cm}{}{\li \in \Z^2}{\li \neq (0,0)}} \, \hat{u} (\li) \cdot e^{2 \pi i \, \li \cdot \x}$$
Then the averages of shifts have the form
\begin{align*}
\frac{1}{n} \sum_{j = 0}^{n-1}  u (\shift^j \x)  \ = \ & \frac{1}{n} \sum_{j = 0}^{n-1}  u (\x + j \omm) \\
= \ & \avg{u} +  \sum_{\genfrac{}{}{0cm}{}{\li \in \Z^2}{\li \neq (0,0)}} \, \hat{u} (\li) \cdot e^{2 \pi i \, \li \cdot \x} \cdot \Bigl( \frac{1}{n} \sum_{j = 0}^{n-1}  e^{2 \pi i \, j \, \li \cdot \omm} \Bigr )\\
= \ & \avg{u} +  \sum_{\genfrac{}{}{0cm}{}{\li \in \Z^2}{\li \neq (0,0)}} \, \hat{u} (\li) \cdot e^{2 \pi i \, \li \cdot \x} \cdot  K_n (\li \cdot \omm)
\end{align*}
where we denoted by $K_n (t)$ the Fej\'{e}r kernel 
$$K_n (t) = \frac{1}{n} \sum_{j = 0}^{n-1}  e^{2 \pi i \, j t} \, = \, \frac{1}{n} \, \frac{1 - e^{2 \pi i \, n t}}{1 - e^{2 \pi i \, t}}$$
which clearly has the bound
\begin{equation}\label{fejerkernelbound}
\abs{K_n (t)} \le \min \Bigl\{ 1, \frac{1}{n \norm{t}} \Bigr\}
\end{equation}
We then have:
\begin{align*}
\bnorm{ \frac{1}{n} \sum_{j = 0}^{n-1}  u (\x + j \omm) - \avg{u} }_{L^2(\T^2)}^2 \  = \ &
 \sum_{\genfrac{}{}{0cm}{}{\li \in \Z^2}{\li \neq (0,0)}} \, \abs{\hat{u} (\li)}^2 \cdot \abs{K_n (\li \cdot \omm)}^2\\
 =  \sum_{1 \le \sabs{\li} <  K} \, \abs{\hat{u} (\li)}^2 \cdot \abs{K_n (\li \cdot \omm)}^2 \  + \ &
  \sum_{\sabs{\li} \ge K} \, \abs{\hat{u} (\li)}^2 \cdot \abs{K_n (\li \cdot \omm)}^2
\end{align*}
We will estimate the second sum above using the bounds \eqref{Riesz} on the Fourier coefficients of $u (\x)$ and the first sum using the DC \eqref{DCM} on the frequency $\omm$. The splitting point $K$ will be chosen to optimize the sum of these estimates.

Clearly \eqref{Riesz} implies:
$$\sum_{l_2 \in \Z} \, \abs{\hat{u} \lli}^2 = \bnorm{\hat{u} (l_1, x_2)}_{L_{x_2}^2(\T)}^2 \less \, \Bigr( \frac{S}{\rho_1} \, \frac{1}{\sabs{l_1}}\Bigl)^2 \le  \, \Bigr( \frac{S}{\rho} \Bigl)^2 \, \frac{1}{\sabs{l_1}^2}$$
and
$$\sum_{l_1 \in \Z} \, \abs{\hat{u} \lli}^2 = \bnorm{\hat{u} (x_1, l_2)}_{L_{x_1}^2(\T)}^2 \less \, \Bigr( \frac{S}{\rho_2} \, \frac{1}{\sabs{l_2}}\Bigl)^2 \le  \, \Bigr( \frac{S}{\rho} \Bigl)^2 \, \frac{1}{\sabs{l_2}^2}$$

Then we have:
\begin{align*}
 \sum_{\sabs{\li} \ge K} \, \abs{\hat{u} (\li)}^2 \cdot \abs{K_n (\li \cdot \omm)}^2   \le \sum_{\sabs{\li} \ge K} \, \abs{\hat{u} (\li)}^2 \\
   \le  \sum_{\li \colon \sabs{l_1} \ge K/2} \, \abs{\hat{u} (\li)}^2  +    \sum_{\li \colon \sabs{l_2} \ge K/2}  \abs{\hat{u} (\li)}^2 
 \less  \,  \Bigr( \frac{S}{\rho} \Bigl)^2 \, \frac{1}{K}
\end{align*}

Estimate \eqref{Riesz} clearly impies:
$$\abs{\hat{u} (\li)} \less \frac{S}{\rho} \, \frac{1}{\sabs{\li}}$$
Then using the DC \eqref{DCM} on $\omm$ and \eqref{fejerkernelbound}, we obtain:
\begin{align*}
\sum_{1 \le \sabs{\li} <  K} \, \abs{\hat{u} (\li)}^2 \cdot \abs{K_n (\li \cdot \omm)}^2 \le 
\, \Bigr( \frac{S}{\rho} \Bigl)^2 \, \sum_{1 \le \sabs{\li} <  K} \,\frac{1}{\sabs{\li}^2} \cdot \frac{1}{n^2 \, \norm{\li \cdot \omm}^2} \\
\le \, \Bigr( \frac{S}{\rho} \Bigl)^2 \, \sum_{1 \le \sabs{\li} <  K} \,\frac{1}{\sabs{\li}^2} \cdot \frac{\sabs{\li}^{2 A}}{n^2 \, \ka^2}
\less  \, \Bigr( \frac{S}{\rho} \Bigl)^2 \,  \frac{K^{2 A}}{n^2 \ka^2}
\end{align*}

We conclude:
$$\bnorm{ \frac{1}{n} \sum_{j = 0}^{n-1}  u (\x + j \omm) - \avg{u} }_{L^2(\T^2)} \ \le \ \frac{S}{\rho} \Bigr( \frac{1}{K^{1/2}} + \frac{K^A}{n \ka} \Bigl) \le \frac{S}{\rho} n^{- a}$$
for some positive constant $a$ that depends on $A$ and for $n$ large enough depending on $A$ and $\ka$.

Using Chebyshev's inequality, the above estimate implies:
\begin{equation}\label{apriori-avshifts}
\mbox{mes }[ \x \in \T^2 : \abs{ \frac{1}{n} \sum_{j = 0}^{n-1}  u (\x + j \omm)  \, -  \avg{u} }  > \frac{S}{\rho} \, n^{- a/3} ] \ < \  
 n^{-4 a/3}  
\end{equation}

This is not exactly what we wanted, since the size of the ``bad'' set above decays only polynomially fast in $n$, instead of exponentially fast. 

To boost this estimate, we will use Lemma 4.12 in J. Bourgain's monograph \cite{B}.  This result shows that a weaker a-priori estimate on a subharmonic function implies an upper bound on its BMO norm, which in turn leads, via John-Nirenberg inequality, to a stronger estimate on the function. We reproduce here a ``rescaled'' version of the estimate in \cite{B}, one that takes into account the width $\rho$ of subharmonicity. The reader may verify that this is indeed the correct rescaled version of the statement.

\begin{lemma}\label{boost}
Assume that $ u = u (\x) \colon \mathbb{T}^2 \to \mathbb{R} $ has a pluri-subharmonic extension $u (\z)$ on $\Aa_{\ro} = [ \sabs{ \Im z_1 } < \rho_1 ] \times   [ \sabs{ \Im z_2 } < \rho_2 ] $ such that $\displaystyle \sup_{\z \in \Aa_{\ro}} \, \abs{ u (\z) } \le B$.  Let $\rho = \min \{\rho_1, \rho_2 \}$.  If
\begin{equation}\label{weak}
 \mbox{ mes } [ \x \in \mathbb{T}^2 :  \abs{ u (\x) - \avg{u} }  > \epsilon _0 ] < \epsilon _1
 \end{equation}
then for an absolute constant $c > 0$, 
\begin{equation}\label{strong}
 \mbox{ mes } [ \x \in \mathbb{T}^2 :  \abs{  u (\x) - \avg{u} } > {\epsilon _0}^{1/4} ] 
 < e^{- c \bigl(  {\epsilon _0}^{1/4} + \sqrt{\frac{B}{\rho}}  \;   \frac{{\epsilon_1}^{1/4}} {{\epsilon_0}^{1/2}}    \bigr)^{- 1}}
 \end{equation}
\end{lemma}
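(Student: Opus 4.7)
The plan is to follow the strategy of Lemma~4.12 in Bourgain's monograph \cite{B}, with careful bookkeeping to expose the dependence on the width $\rho$ of the strip of pluri-subharmonicity. The core mechanism is this: a pluri-subharmonic function bounded by $B$ on $\Aa_{\ro}$ satisfies a scaled BMO-type estimate with norm $\lesssim B/\rho$, coming from the Riesz representation theorem for subharmonic functions (as in \cite[Corollary~4.1]{B}, which was already used in \eqref{Riesz} to bound the Fourier coefficients). A BMO bound combined with the weak measure estimate \eqref{weak} can then be upgraded, via a John--Nirenberg-type argument, to the exponential estimate \eqref{strong}.

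First I would set $f := u - \avg{u}$ and derive the scaled BMO-type bound $\norm{f}_{\mathrm{BMO}(\T^2)} \lesssim B/\rho$ by applying the one-variable subharmonic sub-mean-value/Riesz estimate in each coordinate slice and combining them via Fubini; the factor $\rho = \min\{\rho_1, \rho_2\}$ arises naturally as the smaller of the two slice widths. Second, I would decompose $f = f \cdot \mathbf{1}_G + f \cdot \mathbf{1}_{G^c}$, where $G := \{\x : \sabs{f(\x)} \le \ep_0\}$ is the ``good'' set with $\mbox{mes}(G^c) < \ep_1$. The first summand has $L^\infty$-norm at most $\ep_0$; the second is supported on a set of measure at most $\ep_1$ and bounded pointwise by $2B$ there, hence contributes $\lesssim \sqrt{B^2\,\ep_1}$ in $L^2$. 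By interpolating between these two controls and the a-priori BMO bound $B/\rho$, I obtain an \emph{effective} BMO bound much smaller than $B/\rho$: namely, of order $\ep_0^{1/4} + \sqrt{B/\rho}\cdot \ep_1^{1/4}/\ep_0^{1/2}$ after the optimal choice of interpolation parameter.

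Third, the classical John--Nirenberg inequality on $\T^2$ applied to $f$ with this effective BMO bound yields, at the target threshold $t = \ep_0^{1/4}$, a deviation estimate of the form $\exp\bigl(-c\,t/\norm{f}_{\mathrm{BMO}}^{\mathrm{eff}}\bigr)$, which is precisely \eqref{strong}. The particular exponents $1/4$ and $1/2$ appearing in \eqref{strong} are dictated by this interpolation optimization.

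The main obstacle will be establishing the correct \emph{scaled} BMO bound on the anisotropic strip $\Aa_{\ro}$, especially when $\rho_2 \ll \rho_1$ as in the skew-shift case. One has to apply the one-variable subharmonic mean-oscillation estimate in each coordinate direction separately (with the other variable fixed) and then integrate out the remaining variable, which is what forces the uniform width $\rho = \min\{\rho_1, \rho_2\}$ into the final estimate. A secondary delicate point is calibrating the $L^p$-interpolation in the second step so that the two error sources --- the $L^\infty$-small piece on $G$ and the $L^1$-small piece on $G^c$ --- combine to produce precisely the powers of $\ep_0$ and $\ep_1$ that appear in the denominator of the exponent in \eqref{strong}.
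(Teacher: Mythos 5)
The paper itself gives no proof of this lemma: it invokes Lemma 4.12 of Bourgain's monograph \cite{B} and explicitly leaves the rescaling verification to the reader, so there is no in-text proof to compare against. Your high-level plan --- a BMO bound from subharmonicity via the Riesz representation, upgraded by John--Nirenberg --- does match the skeleton of Bourgain's argument.

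The concrete gap is in your second step. The decomposition $f = f\,\mathbf{1}_G + f\,\mathbf{1}_{G^c}$ is not a usable BMO decomposition. Multiplying by the indicator of a set, even one of tiny measure, can increase the BMO seminorm without bound, because the jump across $\partial G$ is felt at every scale; you cannot infer a bound on $\norm{f}_{\mathrm{BMO}}$ from bounds on $\norm{f\,\mathbf{1}_G}_{\mathrm{BMO}}$ and $\norm{f\,\mathbf{1}_{G^c}}_{\mathrm{BMO}}$ (indeed $f\,\mathbf{1}_{G^c}$ has essentially uncontrolled BMO norm), and the ``interpolation'' between an $L^\infty$ bound on $G$, an $L^2$ bound on $G^c$, and the a priori BMO bound is not a standard operation that yields an effective BMO norm. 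What actually works --- and what Bourgain does --- is a split by \emph{smoothing}, not by truncation in value: write $u = u*\phi_\delta + (u - u*\phi_\delta)$ for a mollifier at scale $\delta$. The Riesz representation of $u$ as a logarithmic potential of a measure of mass $\lesssim B/\rho$ plus a harmonic function gives an $L^1$ bound on $u - u*\phi_\delta$ proportional to $\delta\, B/\rho$, while the mollified piece is regular and inherits the smallness from \eqref{weak}; optimizing over $\delta$ yields the specific combination of powers of $\ep_0$, $\ep_1$ and $\sqrt{B/\rho}$ in \eqref{strong}. The factor $\sqrt{B/\rho}$ that the rescaled version must track enters through the Riesz mass, not through your $L^\infty$/$L^2$ split. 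A secondary issue: even taking your claimed effective BMO bound $\ep_0^{1/4}+\sqrt{B/\rho}\,\ep_1^{1/4}\ep_0^{-1/2}$ at face value, John--Nirenberg at threshold $\ep_0^{1/4}$ would give exponent $-c\,\ep_0^{1/4}\bigl(\ep_0^{1/4}+\sqrt{B/\rho}\,\ep_1^{1/4}\ep_0^{-1/2}\bigr)^{-1}$, which for small $\ep_0$ is strictly weaker than the claimed $-c\bigl(\ep_0^{1/4}+\sqrt{B/\rho}\,\ep_1^{1/4}\ep_0^{-1/2}\bigr)^{-1}$, so the bookkeeping does not close as written.
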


\smallskip

We will apply this result to the average
$$u^\sharp (\x) := \frac{\rho}{S} \,  \frac{1}{n}\sum_{j = 0}^{n-1}  u (\x + j \omm)$$ 

Clearly $u^\sharp (\x)$ is pluri-subharmonic on the same strip $ \Aa_{\ro}$ as $u (\x)$, its upper bound on this strip is $B = \rho$ and its mean is $\displaystyle \bigr< u^\sharp \bigl> =  \frac{\rho}{S} \, \avg{u}$

Then \eqref{apriori-avshifts} implies
\begin{equation}\label{apriori-avg}
\mbox{mes }[ \x \in \T^2 \colon \abs{ u^\sharp (\x) -  \bigr< u^\sharp \bigl>  } > \ep_0 ] < \ep_1
\end{equation}
where $\ep_0 := n^{- a/3}$ and $\ep_1 := n^{-4 a/3}$ so  $\ep_1 \ll \ep_0$.

Applying Lemma~\ref{boost} and  performing the obvious calculations, from inequality \eqref{strong} we get
$$  \mbox{mes }[ \x \in \T^2 \colon \abs{ u^\sharp (\x) -  \bigr< u^\sharp \bigl>  } >  n^{- a/12} ] <  e^{-c \, n^{a/12}}$$
which then implies \eqref{shiftldt} for the multi-frequency shift $\mshift$.
\end{proof}

\bigskip

\section{{\L}ojasiewicz inequality for multivariable smooth functions}\label{lojasiewicz} 
To prove the large deviation estimate~\eqref{ldt-strategy} for a large enough initial scale $N_0$, we will need a quantitative description of the transversality condition~\eqref{TC}. 
More precisely, we will show that if a smooth function $v (\x)$ is not flat at any point as defined in \eqref{TC},  then the set  $[\x \colon v (\x) \approx E]$ of points where $v (\x)$ is almost  constant has small measure (and bounded complexity). 

Such an estimate is called a {\L}ojasiewicz type inequality and it is already available for non-constant analytic functions.  For such functions it can be derived using complex analysis methods from \cite{Levin}, namely lower bounds for the modulus of a holomorphic function on a disk (see Lemma 11.4 in \cite{GS1}). 

For non-analytic functions, the proof is more difficult.  Using Sard-type arguments, we have obtained a similar result for one-variable functions (see Lemma 5.3 in \cite{sK1}). For multivariable smooth functions, the argument is more technical and it involves a quantitative form of the implicit function theorem, also used in \cite{jC} and \cite{GS3}.  

We begin with a simple compactness argument that shows that in the TC~\eqref{TC} we can work with finitely many partial derivatives. 
\begin{lemma} \label{compactarg}
Assume $v (\x)$ is a smooth, $1$-periodic function on $\mathbb{R}^2$. Then $v (\x)$ satisfies the transversality condition (\ref{TC}) if and only if 
 \begin{equation}\label{TC'}
 \exists \, \m \in \N^2 \  \sabs{\m} \neq 0 \  \ \exists  c > 0 \ \colon  \  \forall \x \in \mathbb{T}^2  \  \max_{\genfrac{}{}{0cm}{}{\alfa \leq \m}{\sabs{\alpha} \neq 0}} \abs{ \partial^{\alfa} \, v (\x) } \geq c
\end{equation}
The constants $m, c$ in (\ref{TC'}) depend only on $v$.
\end{lemma}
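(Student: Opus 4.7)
\smallskip

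The equivalence is asymmetric: the implication \eqref{TC'} $\Rightarrow$ \eqref{TC} is immediate, since the maximum being bounded below by a positive constant forces at least one partial derivative of order between $1$ and $\sabs{\m}$ to be nonzero at each $\x$. So the content of the lemma is the other direction, and my plan is a standard compactness-plus-continuity argument, upgraded to produce a uniform multi-index.

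For each fixed point $\x_0 \in \T^2$, condition \eqref{TC} supplies some multi-index $\m(\x_0) \in \N^2$ with $\sabs{\m(\x_0)} \neq 0$ such that $\partial^{\m(\x_0)} v(\x_0) \neq 0$. Since $\partial^{\m(\x_0)} v$ is continuous, there exist an open neighborhood $U_{\x_0} \subset \T^2$ of $\x_0$ and a constant $c_{\x_0} > 0$ such that $\abs{\partial^{\m(\x_0)} v (\x)} \geq c_{\x_0}$ for all $\x \in U_{\x_0}$. As $\x_0$ ranges over $\T^2$, the sets $U_{\x_0}$ form an open cover of the compact space $\T^2$, so we can extract a finite subcover $U_{\x_1}, \ldots, U_{\x_k}$, with associated multi-indices $\m(\x_1), \ldots, \m(\x_k)$ and positive constants $c_{\x_1}, \ldots, c_{\x_k}$.

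The final step is to package these finitely many data into a single pair $(\m, c)$. Define $\m := (\max_{i} m_1(\x_i), \max_{i} m_2(\x_i)) \in \N^2$; by construction $\m(\x_i) \leq \m$ componentwise for every $i$, and $\sabs{\m} \neq 0$ because each $\sabs{\m(\x_i)} \neq 0$. Set $c := \min_{1 \le i \le k} c_{\x_i} > 0$. For any $\x \in \T^2$, pick $i$ with $\x \in U_{\x_i}$; then the multi-index $\alfa := \m(\x_i)$ satisfies $\alfa \leq \m$, $\sabs{\alfa} \neq 0$, and $\abs{\partial^{\alfa} v(\x)} \geq c_{\x_i} \geq c$. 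Hence the maximum over all such $\alfa$ is at least $c$, which is precisely \eqref{TC'}.

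The only even mildly subtle point is that the multi-indices $\m(\x_i)$ produced by \eqref{TC} may have components of very different sizes, so one cannot simply take $\m$ to be a single common $\m(\x_i)$; the componentwise maximum fixes this, because the partial order $\leq$ in \eqref{TC'} is componentwise. No obstacle beyond this bookkeeping is expected, since compactness of $\T^2$ and continuity of finitely many derivatives do all the work.
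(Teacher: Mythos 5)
Your argument is correct and follows the same route as the paper's proof: use continuity to get a neighborhood and local lower bound at each point, extract a finite subcover by compactness of $\T^2$, take $\m$ to be a componentwise upper bound of the finitely many multi-indices and $c$ the minimum of the local constants. The only cosmetic difference is that you spell out $\m$ as the componentwise maximum, which the paper leaves implicit when it writes ``let $\m \ge \m_{\x_j}$ for all $j$.''
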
  
\begin{proof}
Clearly (\ref{TC'}) $ \Rightarrow $  (\ref{TC}). We prove the converse.  
The TC \eqref{TC} implies 
$$ \forall \, \x \in \T^2 \  \exists \, \m_{\x} \in \N^2 \  \sabs{\m_{\x}} \neq 0 \  \mbox{ such that } \  \abs{ \partial^{\m_{\x}} \, v \, (\x) } >  c_{\x}  > 0 $$

Then there are radii $  r_{\x} > 0 $ so that if  $\y$ is in the disk $D (\x, r_{\x})$ we have
$ \abs{ \partial^{\m_{\x}} \, v \, (\y) } \geq c_{\x}  > 0 $.
The family $ \{ D (\x, r_{\x})  \colon \x \in \T^2 \}$ covers $ \T^2$. Consider a finite subcover $\{  D (\x_1, r_{\x_1}), \ldots , D (\x_k, r_{\x_k}) \}$.
Let  $ \m  \in \N^2$ such that $ \m \ge \m_{\x_j} $ for all   $ 1 \leq j \leq k $ and  $\displaystyle c := \min_{1 \le j \le k} c_{\x_j} $.  Then (\ref{TC'}) follows.
\end{proof}
The following is a more precise form of the implicit function theorem (which was also used in \cite{GS3}).
\begin{lemma}\label{e-implicit} 
Let $f (\x)$ be a  $C^1$ function on a rectangle $\Rr = I \times J \subset [0, 1]^2$, let $J = [c, d]$ and $\displaystyle A := \max_{\x \in \Rr} | \dx f (\x)|$. Assume that
\begin{equation}\label{dx2>}
\min_{\x \in \Rr} | \dy f (\x)| =: \ep_0 > 0
\end{equation}
If  $f \aaa = 0$ for some point $ \aaa \in \Rr$, then there is an interval $I_0 = (a_1 - \ka, a_2 + \ka) \subset I$ and a $C^1$ function $\phi_0 (x_1) $ on $I_0$ such that:
\begin{eqnarray*}
\text{(i) }  & f (x_1, \phi_0 (x_1)) = 0 & \text{ for all }  x_1 \in I_0\\
\text{(ii) } &  | \dx \phi_0 (x_1)| \le A\,  \epm_0 & \\
\text{(iii) } & x_1 \in I_0 \text{ and } f \xx = 0 &  \implies x_2 = \phi_0 (x_1) 
\end{eqnarray*}
Moreover, the size $\ka$ of the domain of $\phi_0$ can be taken as large  as $\ka \sim \ep_0 \Am \cdot \min \{a_2-c, d-a_2 \}$.
\end{lemma}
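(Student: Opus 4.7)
The plan is to treat this as a quantitative form of the classical implicit function theorem, in which the uniform lower bound $|\dy f| \ge \ep_0$ provides strict monotonicity of $f(x_1, \cdot)$ in the second variable. Without loss of generality I assume $\dy f \ge \ep_0 > 0$ on $\Rr$; then for each fixed $x_1 \in I$, the map $x_2 \mapsto f(x_1, x_2)$ is strictly increasing on $J$, hence admits at most one zero. This already establishes the uniqueness statement (iii).

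For existence in a neighborhood of $a_1$, I would combine a mean-value estimate with the intermediate value theorem. Fix $x_1$ close to $a_1$. Since $f(a_1, a_2) = 0$ and $|\dx f| \le A$ on $\Rr$, the mean-value theorem in the first variable gives $|f(x_1, a_2)| \le A\,|x_1 - a_1|$. Set $\eta := A\,\ep_0^{-1}\,|x_1 - a_1|$; the lower bound on $\dy f$ yields
\begin{equation*}
f(x_1, a_2 + \eta) \ge f(x_1, a_2) + \ep_0\,\eta \ge 0 \quad \text{and} \quad f(x_1, a_2 - \eta) \le f(x_1, a_2) - \ep_0\,\eta \le 0.
\end{equation*}
Provided $[a_2 - \eta, a_2 + \eta] \subset J$, i.e.\ $\eta \le \min\{a_2 - c, d - a_2\}$, the intermediate value theorem produces a (unique, by monotonicity) zero $\phi_0(x_1)$ in this interval. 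The admissibility condition translates to $|x_1 - a_1| \le \ka := \ep_0\,A^{-1}\,\min\{a_2 - c,\, d - a_2\}$, which yields the interval $I_0$ of the advertised size and the root-selecting function $\phi_0 \colon I_0 \to J$ satisfying (i).

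The Lipschitz estimate (ii) falls out of the same ingredients: for $x_1, x_1' \in I_0$, the relations $f(x_1, \phi_0(x_1)) = f(x_1', \phi_0(x_1')) = 0$ combined with the two-variable mean value theorem give
\begin{equation*}
0 = (\dx f)(\xi)\,(x_1' - x_1) + (\dy f)(\xi')\,(\phi_0(x_1') - \phi_0(x_1))
\end{equation*}
for suitable intermediate points, whence $|\phi_0(x_1') - \phi_0(x_1)| \le A\,\ep_0^{-1}\,|x_1' - x_1|$. Continuity of $\phi_0$, together with continuity of $\dx f$ and $\dy f$, then upgrades $\phi_0$ to $C^1(I_0)$ with the pointwise identity $\dx \phi_0 = -\dx f/\dy f$, from which $|\dx \phi_0| \le A\,\ep_0^{-1}$ is immediate.

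The substantive content of this lemma, beyond the qualitative IFT, is the \emph{explicit} dependence of the domain size $\ka$ on $\ep_0$, $A$, and the distance from $a_2$ to $\partial J$, which is exactly what the intermediate-value construction above delivers. The only delicate point is balancing the choice of $\eta$ so that it simultaneously forces the sign change required by the IVT and respects the admissibility $[a_2 - \eta, a_2 + \eta] \subset J$; once these are matched, the argument is essentially algebraic. This quantitative form is what will be needed in the sequel, where the graph of $\phi_0$ must be pasted together across many base points to produce the {\L}ojasiewicz-type inequality for the transversal Gevrey potential.
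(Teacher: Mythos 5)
Your proposal is correct and follows essentially the same route as the paper: estimate $|f(x_1,a_2)|\le A|x_1-a_1|$ by the mean value theorem in the first variable, use the lower bound on $\dy f$ to force a sign change in the second variable, and apply the intermediate value theorem, with the domain-size constraint $\ka\sim\ep_0 A^{-1}\min\{a_2-c,\,d-a_2\}$ falling out of the requirement that the probing interval stay inside $J$. The only (cosmetic) departure is that you probe symmetrically at $a_2\pm\eta$ in one stroke, whereas the paper splits into cases by the sign of $f(x_1,a_2)$ and moves in only one direction; both yield the same $\ka$, and your treatment of (ii) via the implicit-function identity $\dx\phi_0=-\dx f/\dy f$ matches the paper's chain-rule argument.
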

\begin{proof}
From \eqref{dx2>}, since $\dy f (\x)$ is either positive on $\Rr$ or negative on $\Rr$ (in which case replace $f$ by $-f$), we may clearly assume that in fact: 
\begin{equation}\label{dx2p>}
\min_{\x \in \Rr}  \dy f (\x) =: \ep_0 > 0
\end{equation}

Moreover, note that for any fixed $x_1 \in I$, since $\dy f \xx \neq 0$, the equation $ f \xx =0$ has  a unique solution $x_2$.

Let $x_1 \in I_0$. Then
$$\abs{ f (x_1, a_2) } = \abs{ f (x_1, a_2) - f \aaa } = \abs{ \dx f(\xi, a_2) } \cdot \sabs{ x_1 - a_1 } \le A \ka$$ 

We have two possibilities.

\smallskip

$\blob$ \ $0 \le f (x_1, a_2) \le A \ka$. Then, if $a_2-t \in J$ we have:
$$f (x_1, a_2-t) - f (x_1,a_2) = \dy f (x_1, \xi) \cdot (-t)$$
$$f (x_1, a_2-t) = f (x_1,a_2) - t \cdot \dy f (x_1, \xi) \le A \ka - t \ep_0 = 0$$

$\text{ provided } t = A \epm_0 \ka$

\smallskip

For this choice of $t$, $a_2-t$ is indeed in $J$, because of the size $\ka_0$ of the interval $I_0$:
$\ t = A \epm_0 \ka \le A \epm_0 \ep_0 \Am (a_2-c) = a_2-c,  \ \text{ so } a_2-t \ge c$.

\smallskip

Therefore,
$$f (x_1, a_2-t) \le 0 \le f (x_1,a_2)$$
so there is a unique $x_2 =: \phi_0 (x_1) \in [a_2-t, a_2]$ such that $f(x_1, \phi_0 (x_1)) = 0$.

\medskip

$\blob$ \ $- A \ka  \le f (x_1, a_2) \le 0$. Then, if $a_2+t \in J$ we have:
$$f (x_1, a_2+t) - f (x_1, a_2) = \dy f (x_1, \xi) \cdot t$$
$$f (x_1, a_2+t) = f (x_1, a_2) + t \cdot \dy f (x_1, \xi) \ge - A \ka + t \ep_0 = 0$$

$\text{ provided } t = A \epm_0 \ka$

\smallskip

As before, for this choice of $t$, $a_2+t$ is  in $J$, because of the size $\ka$ of the interval $I_0$:
$\ t = A \epm_0 \ka \le A \epm_0 \ep_0 \Am (d-a_2) = d-a_2, \ \text{ so } a_2+t \le d$.

\smallskip

Therefore,
$$f (x_1, a_2) \le 0 \le f (x_1, a_2+t)$$
so there is a unique $x_2 =: \phi_0 (x_1) \in [a_2, a_2+t]$ such that $f(x_1, \phi_0 (x_1)) = 0$.

\medskip

We proved (i) and (iii). The fact that $\phi_0 (x_1)$ is $C^1$ follows from the standard implicit function theorem, while the estimate (ii) follows immediately from (i) using the chain's rule.

\end{proof} 

\smallskip

The following is a quantitative and global version of the previous lemma (see also Lemma 8.3 in \cite{GS3}). It says that under the same conditions as above, the points $\xx \in \Rr$ for which  $\abs{ f \xx } \le \ep$  are either in a narrow strip at the top or at the bottom of the rectangle $\Rr$, or  near the graphs of some functions $\phi_j (x_1)$, in other words $x_2 \approx \phi_j (x_1)$.

\smallskip

\begin{lemma}\label{q-implicit} 
Let  $f (\x)$ be a $C^1$ function  on a rectangle $\Rr = I \times J \subset [0, 1]^2$, where $| I | \sim \ka_0$. Let  $J = [c, d]$ and $\displaystyle A := \max_{\x \in \Rr} | \dx f (\x)|$. Assume that:
\begin{equation}\label{dx2big}
\min_{\x \in \Rr} \abs{ \dy f (\x) } =: \ep_0 > 0
\end{equation}

Let $\ep_1 >0$ be small enough, i.e. $\ep_1 < \frac{\ep_0 \ka_0}{4}$
and $\ka_1 \sim \ep_1 \Am$.

\smallskip

Then  there are  about $\ka_0 \kam_1$ sub-intervals $I_j$ of length $\ka_1$  covering $I$, and on each interval $I_j$ there is a $C^1$ function $\phi_j (x_1) $ such that:
\begin{eqnarray*}
\text{(i) }  & f (x_1, \phi_j (x_1)) = 0 & \text{for all }  x_1 \in I_j  \\
\text{(ii) } &  \abs{ \dx \phi_j (x_1) } \le A\,  \epm_0 &   \\
\text{(iii) } & [ \xx \in \Rr \colon  \abs{ f \xx } < \ep_1 ]    \subset  & \Rr^t \cup \Rr^b \cup (\cup_{j} \ \Ss_j) \nonumber
\end{eqnarray*}
where
\begin{eqnarray*}
\Rr^t & := & I \times [d-2 \ep_1 \epm_0, d]\\
\Rr^b & := & I \times [c, c+2 \ep_1 \epm_0]\\
\Ss_j & := & [ \xx \colon x_1 \in I_j, \abs{ x_2 - \phi_j (x_1) } < \ep_1 \epm_0 ]
\end{eqnarray*}
\end{lemma}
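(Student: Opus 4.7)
Since $f$ is $C^1$ and $\dy f$ is nonvanishing on the connected set $\Rr$ by~\eqref{dx2big}, $\dy f$ has constant sign; without loss of generality assume $\dy f \ge \ep_0$ on $\Rr$. Then for each fixed $x_1 \in I$ the slice $x_2 \mapsto f(x_1, x_2)$ is strictly increasing with derivative at least $\ep_0$, so it has at most one zero in $J$; denote it $\psi(x_1)$ when it exists. A one-line mean value argument then shows that if $\abs{f(x_1, x_2)} < \ep_1$, the point $x_2$ must lie within $\ep_1 \epm_0$ of $\psi(x_1)$ (when $\psi(x_1)$ exists), or within $\ep_1 \epm_0$ of $c$ (when $f(x_1, \cdot) > 0$ throughout $J$), or within $\ep_1 \epm_0$ of $d$ (when $f(x_1, \cdot) < 0$ throughout $J$). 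This slice-wise dichotomy is the engine of the whole argument.

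Next, partition $I$ into $\sim \ka_0 \kam_1$ consecutive sub-intervals $I_j$ of length $\ka_1 \sim \ep_1 \Am$ and fix $j$. If there exists $a_1 \in I_j$ with $\psi(a_1) \in [c + 2 \ep_1 \epm_0,\, d - 2 \ep_1 \epm_0]$, apply Lemma~\ref{e-implicit} at the point $(a_1, \psi(a_1))$: the resulting $C^1$ implicit function is defined on an interval around $a_1$ of radius $\gtrsim \ep_0 \Am \cdot 2 \ep_1 \epm_0 = 2 \ka_1$, which contains the entire $I_j$. Call its restriction to $I_j$ the function $\phi_j$. Then conclusions (i) and (ii) of our lemma follow directly from the corresponding statements in Lemma~\ref{e-implicit}, and uniqueness of the zero in $x_2$ identifies $\phi_j$ with $\psi|_{I_j}$; the opening dichotomy therefore places every $(x_1, x_2) \in I_j \times J$ with $\abs{f(x_1, x_2)} < \ep_1$ into $\Ss_j$.

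If instead no such $a_1$ exists in $I_j$, then for every $x_1 \in I_j$ the value $\psi(x_1)$ is either undefined or lies in $[c, c + 2 \ep_1 \epm_0] \cup [d - 2 \ep_1 \epm_0, d]$. The Lipschitz bound $\abs{\dx \psi} \le A \epm_0$ (again from Lemma~\ref{e-implicit}) together with $\abs{I_j} = \ka_1 = \ep_1 \Am$ shows that $\psi$ varies by at most $\ep_1 \epm_0$ across $I_j$, so the opening dichotomy places every low-$\abs{f}$ point in $I_j \times J$ into $\Rr^t \cup \Rr^b$ (up to an absorbable constant factor in the widths of these strips). The hypothesis $\ep_1 < \ep_0 \ka_0 / 4$ ensures the interior strip used in the first case is non-empty and the partition non-degenerate. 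The main technical point is ensuring in the first case that the implicit function extends across the full sub-interval $I_j$ rather than just a tiny neighborhood of $a_1$; this is exactly what the quantitative radius $\ka \sim \ep_0 \Am \min(a_2 - c,\, d - a_2)$ from Lemma~\ref{e-implicit} was designed to deliver.
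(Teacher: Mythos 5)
Your proposal is correct and follows essentially the same route as the paper: partition $I$ into sub-intervals of length $\ka_1 \sim \ep_1 \Am$, invoke Lemma~\ref{e-implicit} with the quantitative domain-size estimate to extend the implicit function across each full $I_j$, and use a one-line mean-value argument to push low-$\abs{f}$ points into the tube $\Ss_j$ or the boundary strips $\Rr^t, \Rr^b$. The only (cosmetic) difference is that you organize the per-slice case analysis around the global zero function $\psi(x_1)$, whereas the paper's per-$I_j$ dichotomy is phrased directly in terms of whether $\abs{f} \ge \ep_1$ on the interior strip $I_j \times [c + 2\ep_1\epm_0, d - 2\ep_1\epm_0]$; the two are equivalent up to harmless adjustments of the numerical constants in the strip widths, which you flag explicitly. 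One small remark: the Lipschitz-variation argument for $\psi$ in your second case is actually unnecessary (and would need care where $\psi$ is only partially defined on $I_j$) — the slice-wise dichotomy already places every low-$\abs{f}$ point in $I_j \times J$ into $\Rr^t \cup \Rr^b$ directly, with no appeal to how $\psi$ moves across $I_j$.
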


\begin{figure}[h]
\includegraphics[scale=.85]{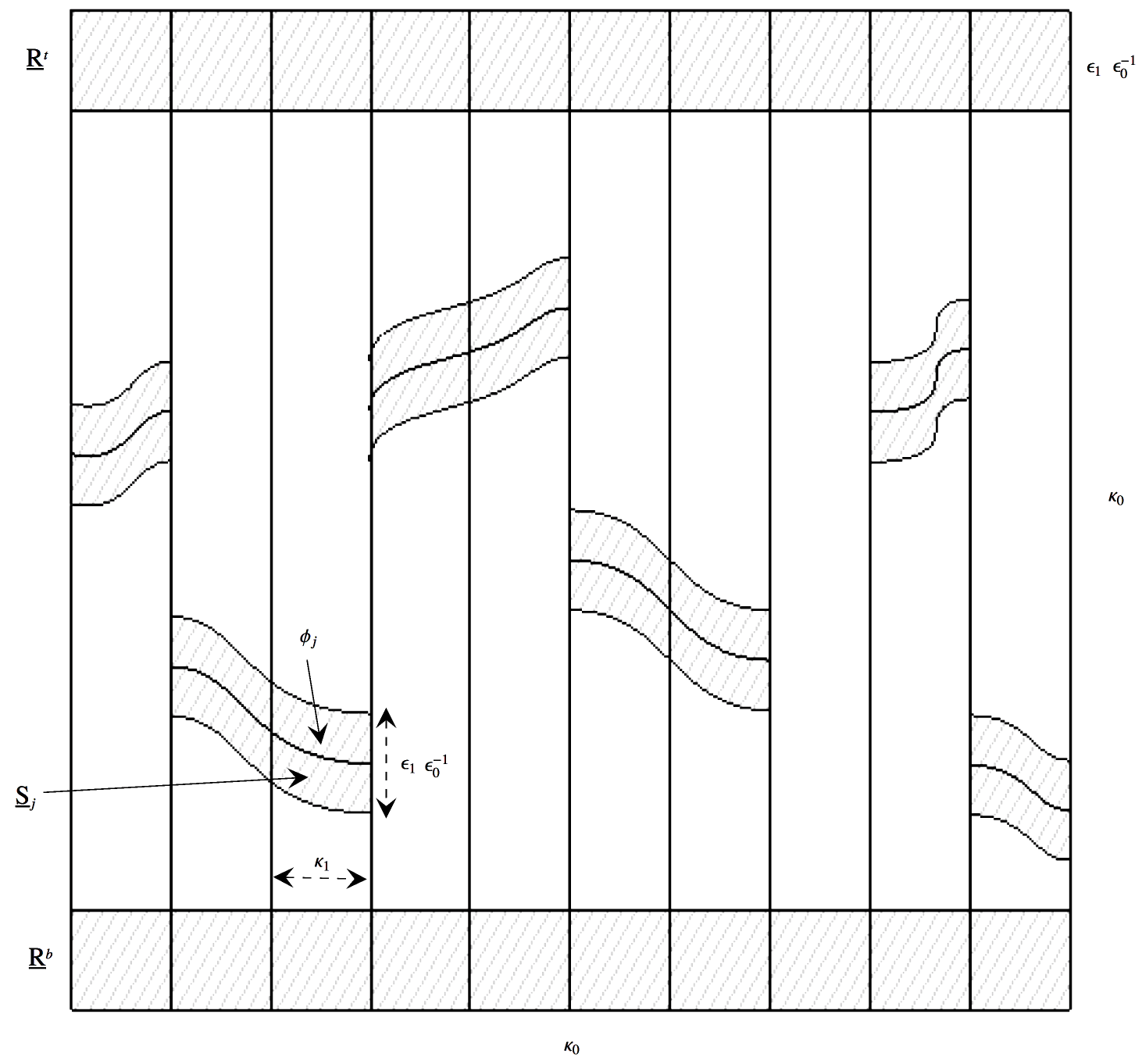}
\end{figure}

\begin{proof}
Divide the interval $I$, whose length is $\sim \ka_0$ into $\sim \ka_0\kam_1$ sub-intervals $I_j$ of length $\ka_1$ each. 

If $\abs{ f \xx } \ge \ep_1$ for all $\xx \in I_j \times [c + 2 \ep_1 \epm_0, d -  2 \ep_1 \epm_0]$, then we are done with the interval $I_j$.

Otherwise, assume $\abs{ f \aaa } < \ep_1$ for some $a_1 \in I_j$ and $a_2 \in [c + 2 \ep_1 \epm_0, d -  2 \ep_1 \epm_0]$.

We may assume $0 \le f \aaa \le \ep_1$, the other case being treated similarly. Then if $a_2 - t \in J$ we have:
$$f (a_1, a_2 - t) - f \aaa = \dy f (a_1, \xi) \cdot ( - t) \ \text{ for some } \xi \in (a_2 - t, a_2)$$
$$ f (a_1, a_2 - t) =  f \aaa - t \cdot  \dy f (a_1, \xi) \le \ep_1 - \ep_0 t = 0$$
provided $t = \ep_1 \epm_0$. Since $a_2 \ge c+2 \ep_1 \epm_0$, for this $t$ we have $a_2-t \in J$.

\smallskip

We then have $f (a_1, a_2 - t) \le 0 \le f \aaa$, so $f (a_1, a_2^{*}) = 0$ for some $a_2^{*} \in [a_2-t, a_2]$.

We can use Lemma \ref{e-implicit} around the point $(a_1, a_2^{*})$. The interval we get has length at least $\ep_0 \Am \cdot \min \{a_2-c, d-a_2 \} > \ep_0 \Am \cdot  2 \ep_1 \epm_0 = 2 \ep_1 \Am > 2 \ka_1$, so it contains $I_j$, whose length is $\sim \ka_1$. We have a $C^1$ function $\phi_j$ on $I_j$ such that $| \dx \phi_j | \le A \epm_0$ and
$$  x_1 \in I_j \text{ and }  f \xx = 0 \iff x_2 = \phi_j (x_1)$$

Now let $\xx \in \Rr$ such that $\abs{ f \xx } < \ep_1$. Then either $\xx \in \Rr^t \cup \Rr^b$ or $\xx \in I_j \times [c + 2 \ep_1 \epm_0, d -  2 \ep_1 \epm_0]$ for some $j$, in which case:
$$\ep_1 > \abs{ f \xx } = | f \xx - f (x_1, \phi_j (x_1)) | = $$
$$= | \dy f (x_1, \xi) | \cdot | x_2 - \phi_j (x_1) | \ge \ep_0 \cdot   | x_2 - \phi_j (x_1) |$$
from which we conclude that $| x_2 - \phi_j (x_1) | < \ep_1 \epm_0$.

\end{proof}

We have shown that the points $\x = \xx \in \Rr$ for which $| f (\x) | < \ep_1$ are within $\sim \ep_1$ from the graphs of some functions $\phi_j (x_1)$ that have bounded slopes and are defined on small intervals $I_j$. This shows that the 'bad'  set $[ \x \in \Rr \colon | f (\x) | < \ep_1 ]$ can be covered by small rectangles instead of $\ep_1$-neighborhoods of curves, and we have control on the size of these rectangles and on their number. In turn, the 'good' set $[ \x \in \Rr \colon | f (\x) | \ge \ep_1 ]$ can be covered by a comparable number of rectangles, which can be further chopped down into squares, to preserve the symmetry between the two variables. This is the content of the following lemma. 

\begin{lemma}\label{L-ind}
Given a $C^2$ function $f (\x)$ on a square $\Rr_0 = I_0 \times J_0 \subset [0, 1]^2$, where $| I_0 |, |J_0| \sim \ka_0$.  
Denote $\displaystyle A := \max_{ |\alfa| \le 2} \ \max_{\x \in \Rr} | \partial^{\alfa} f (\x)|$. 
Assume that:
\begin{equation}\label{dx2big2}
\min_{\x \in \Rr_0} | \dy f (\x)| =: \ep_0 > 0 \    \text{ or }  \ \min_{\x \in \Rr_0} | \dx f (\x)| =: \ep_0 > 0 
\end{equation}

Let $\ep_1 >0$ be small enough, i.e. $\ep_1 < \frac{\ep_0 \ka_0}{4}$
and $\ka_1 \sim \ep_1 \Am$.

Then there is a set $\Bb_1 \subset \Rr_0$, with 
\begin{equation}\label{bad1}
 \text{ mes } [ \Bb_1 ] \less \ka_0 \, \ep_1 \, \epm_0 
 \end{equation} 
 such that $ \Rr_0 \setminus \Bb_1 $ is a 
union of about $(\ka_0 \, \kam_1)^2$ squares, where each such square has the form  $\Rr_1 = I_1 \times J_1$, with  $\abs{I_1}, \abs{J_1} \sim \ka_1$.

For each of  these squares we have:
\begin{equation}\label{f>ep1}
 \min_{\x \in \Rr_1} \ \abs{ f (\x) } \ge \ep_1
 \end{equation}

\end{lemma}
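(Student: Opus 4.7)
The plan is to invoke Lemma~\ref{q-implicit} to localize the sub-level set $[\x \colon \abs{f(\x)} < \ep_1]$ inside finitely many tubes and boundary strips, then impose a uniform $\ka_1 \times \ka_1$ grid on $\Rr_0$ and simply discard the grid squares that intersect this localization. By the symmetry in hypothesis \eqref{dx2big2}, I may assume $\min_{\Rr_0} \abs{\dy f} \ge \ep_0$, the other case being handled identically after interchanging the roles of $x_1$ and $x_2$.

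First I would apply Lemma~\ref{q-implicit} with parameters $\ep_0, \ep_1, \ka_1$ to produce about $\ka_0 \kam_1$ intervals $I_j \subset I_0$ of length $\ka_1$ and $C^1$ functions $\phi_j$ on $I_j$ with $\abs{\dx \phi_j} \le A \epm_0$, such that
\begin{equation*}
[\x \in \Rr_0 \colon \abs{f(\x)} < \ep_1] \subset \Rr^t \cup \Rr^b \cup \bigl(\cup_j \Ss_j\bigr) =: \Bb.
\end{equation*}
Next, I would partition $I_0$ and $J_0$ into sub-intervals of length $\ka_1$, creating a grid of $\sim (\ka_0 \kam_1)^2$ squares of side $\ka_1$ covering $\Rr_0$. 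Define $\Bb_1$ to be the union of those grid squares that meet $\Bb$. The complement $\Rr_0 \setminus \Bb_1$ is then, up to a null set, a disjoint union of at most $(\ka_0 \kam_1)^2$ such squares, and each such square avoids $\Bb \supset [\abs{f} < \ep_1]$, hence automatically satisfies \eqref{f>ep1}.

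The remaining task is to control $\text{mes}(\Bb_1)$. The strips $\Rr^t$ and $\Rr^b$ contribute area $\sim \ka_0 \ep_1 \epm_0$ each, and their $\ka_1$-grid enlargement does not change the order since $\ka_1 \le \ep_1 \epm_0$. For the tubes, the essential observation is that the calibration $\ka_1 \sim \ep_1 \Am$ forces $\abs{I_j} \cdot A \epm_0 \sim \ep_1 \epm_0$, so the graph of each $\phi_j$ over its domain $I_j$ oscillates vertically by at most the tube thickness $\ep_1 \epm_0$. Consequently each $\Ss_j$ lies in a rectangle of horizontal side $\ka_1$ and vertical side $\lesssim \ep_1 \epm_0$, so its $\ka_1$-grid enlargement has area $\lesssim \ka_1 \cdot \ep_1 \epm_0$; summing over the $\sim \ka_0 \kam_1$ indices $j$ yields the desired bound $\lesssim \ka_0 \ep_1 \epm_0$ and establishes \eqref{bad1}. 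The main subtlety, more a scale-matching point than a genuine obstacle, is exactly this calibration: $\ka_1$ is chosen so that one grid column hosts only $O(1)$ rows of bad squares, which is what keeps the count of bad squares linear (rather than quadratic) in $\kam_1$ and so leaves the overwhelming majority of the $(\ka_0 \kam_1)^2$ squares available to cover $\Rr_0 \setminus \Bb_1$.
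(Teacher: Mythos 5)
Your proof is correct and essentially the same as the paper's: both apply Lemma~\ref{q-implicit} and both rely on the calibration $\ka_1 \sim \ep_1 \Am$ to force the graphs $\phi_j$ to oscillate vertically by only $O(\ep_1 \epm_0)$ over their domains $I_j$. The one mechanical difference is that the paper covers the sub-level set directly by one adaptive rectangle per column and then subdivides the long thin complementary rectangles into $\ka_1 \times \ka_1$ squares, whereas you impose a uniform $\ka_1$-grid from the outset and discard the grid squares meeting the localization set; your grid-enlargement estimate (using $\ka_1 \lesssim \ep_1 \epm_0$, which holds since $\ep_0 \le A$) keeps the measure bound intact, so the two constructions agree up to constants.
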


\begin{proof}
We will use Lemma \ref{q-implicit}.  $I_0$ is covered by about $\ka_0 \kam_1$ subintervals $I_j$ of length $\ka_1$.  Consider one such subinterval. There is a $C^1$ function $\phi_j (x_1) $ on $I_j$ such that $ | \dx \phi_j (x_1)| \le A\,  \epm_0$ and 
$$ [ \xx \in I_j \times J_0  \colon  \abs{ f \xx } < \ep_1 ]    \subset   \Rr_j^t \cup \Rr_j^b  \cup   \Ss_j $$
where if $J_0 = [c_0, d_0]$ then 
\begin{eqnarray*}
\Rr_j^t & := & I_j  \times [d_0-2 \ep_1 \epm_0, \, d_0]\\
\Rr_j^b & := & I_j  \times [c_0, \, c_0+2 \ep_1 \epm_0]\\
\Ss_j & := & [ \xx \colon x_1 \in I_j, |x_2 - \phi_j (x_1) | < \ep_1 \epm_0 ]
\end{eqnarray*}
Then
$$\Ss_j \subset I_j \times [ \min_{x_1 \in I_j} \, \phi_j (x_1) - \ep_1 \epm_0, \ \max_{x_1 \in I_j} \, \phi_j (x_1) + \ep_1 \epm_0] =: I_j \times K_j^m =: \Rr_j^m$$
For any $x_1, x_1^\prime \in I_j$ we have 
$$\abs{ \phi_j (x_1) - \phi_j (x_1^\prime)} \less A \epm_0 \cdot \abs{x_1 - x_1^\prime} \le A \epm_0 \ka_1 \sim \ep_1 \epm_0 $$
which shows that 
$$\abs{ K_j^m } \less  \ep_1 \epm_0 $$

We have shown that $ [ \xx \in I_j \times J_0  \colon  \abs{ f \xx } < \ep_1 ] $ is covered by three rectangles: $\Rr_j^t$, $\Rr_j^b$, $\Rr_j^m$, each of the form $I_j \times K_j$ where $ \abs{I_j} \sim \ka_1$, $\abs{K_j} \sim \ep_1 \epm_0$.

\smallskip

Summing over $j \less \ka_0 \kam_1$, we get that the set $ [ \x \in \Rr_0 \colon \abs{ f (\x) } < \ep_1 ] $ is contained in the union $\Bb_1$ of about $ \ka_0 \kam_1$ rectangles of size $ \ka_1 \times \ep_1 \epm_0$. Then 

$$\text{ mes } [ \Bb_1 ] \less  \ka_0 \kam_1 \cdot  \ka_1 \cdot  \ep_1 \epm_0 = \ka_0 \ep_1 \epm_0$$
which proves \eqref{bad1}.

\smallskip

The complement of this set,  $ \Rr_0 \setminus \Bb_1 $, consists of about the same number $\ka_0 \kam_1$ of rectangles - this was the reason for switching from $\ep_1$-neighborhoods of curves to rectangles. 
Each of these rectangles has the form $I_j \times L_j$, where $\abs{ I_j } \sim \ka_1$ and  $\abs{ L_j } \sim \ka_0 - O(\ep_1 \epm_0) \sim \ka_0 \gg \ka_1$. Divide each of these vertical rectangles into about $\ka_0 \kam_1$ squares of size $\ka_1 \times \ka_1$ each.

We conclude that $ \Rr_0 \setminus \Bb_1 $ is covered by about $(\ka_0 \kam_1)^2$ squares of the form $\Rr_1 = I_1 \times J_1$, where the size of each square is $\abs{I_1}, \abs{J_1} \sim \ka_1$.

\end{proof}

\medskip

We now have all the ingredients for proving the following {\L}ojasiewicz type inequality. 

\begin{theorem}\label{Loj}
Assume that $v (\x)$ is a smooth function on $[0,1]^2 $ satisfying the  transversality condition (\ref{TC}).
Then for every $\ep > 0$
\begin{equation}\label{loj} 
\sup_{E \in \mathbb{R}} \mbox{ mes } [ \x \in [0,1]^2 : \, | v (\x) - E | < \ep ] < C \cdot \ep^{b}
\end{equation}
where $C, b > 0 $ depend only on $v$. 
 \end{theorem}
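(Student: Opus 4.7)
The plan is to combine Lemma~\ref{compactarg} with an iterative application of Lemma~\ref{L-ind}, peeling off one derivative at a time until a first-order partial of $v - E$ is controlled from below.

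First I would invoke Lemma~\ref{compactarg} to replace the qualitative condition \eqref{TC} by its quantitative version \eqref{TC'}: there exist a multi-index $\m \in \N^2$ and a constant $c > 0$ such that at every $\x \in \T^2$ some $\abs{\partial^{\alfa} v(\x)}$ with $1 \le \sabs{\alfa} \le \sabs{\m}$ is at least $c$. By continuity of the finitely many derivatives $\partial^{\alfa} v$, I can cover $[0,1]^2$ by finitely many closed squares $\Rr$ of side $\sim 1$ on each of which a \emph{single} index $\alfa$ (with $1 \le \sabs{\alfa} \le \sabs{\m}$) realizes $\abs{\partial^{\alfa} v} \ge c/2$ uniformly. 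Since there are only finitely many such squares, it is enough to prove, on each of them, an estimate of the form $\mbox{mes}[\x \in \Rr \colon \abs{v(\x) - E} < \eta] < C \eta^{b}$.

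The heart of the argument is an induction on $k := \sabs{\alfa}$. Let $P(k, \ka, c_0, \eta)$ denote the supremum, over $E \in \R$, of $\mbox{mes}[\x \in \Rr \colon \abs{v(\x) - E} < \eta]$ taken over all squares $\Rr$ of side $\ka$ on which some fixed partial $\partial^{\alfa} v$ with $\sabs{\alfa} = k$ is bounded below in absolute value by $c_0$. I claim $P(k, \ka, c_0, \eta) \le C \ka (\eta/c_0)^{b_k}$ with a positive exponent $b_k$ depending only on $k$, where $C$ absorbs uniform bounds on finitely many derivatives of $v$ (finite because $v \in C^\infty(\T^2)$). For the base case $k = 1$, Lemma~\ref{L-ind} applied to $f(\x) := v(\x) - E$ with $\ep_0 = c_0$ and $\ep_1 = \eta$ gives $P(1, \ka, c_0, \eta) \less \ka \, \eta / c_0$, so $b_1 = 1$.

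For the inductive step $k \ge 2$, I would write $\alfa = \alfa' + e_i$ with $\sabs{\alfa'} = k - 1$ and set $h := \partial^{\alfa'} v$, so that $\abs{\partial_i h} \ge c_0$ on $\Rr$. Applying Lemma~\ref{L-ind} to $h$ at a level $\ep_1$ (to be optimized) removes a bad set of measure $\less \ka \, \ep_1 / c_0$ and covers the complement by $\sim (\ka/\ka_1)^2$ sub-squares of side $\ka_1 \sim \ep_1 / A$ on each of which $\abs{\partial^{\alfa'} v} \ge \ep_1$. Invoking the inductive hypothesis at order $k - 1$ on each sub-square with new lower bound $\ep_1$ and summing, the two contributions $\ka \, \ep_1 / c_0$ and $\ka^2 \kam_1 (\eta/\ep_1)^{b_{k-1}}$ can be balanced by taking $\ep_1$ a suitable power of $\eta$, yielding the recursion $b_k = b_{k-1}/(2 + b_{k-1})$, which stays strictly positive for every $k$.

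The main obstacle is the bookkeeping: making sure that at each of the $\sabs{\m}$ levels of the recursion the hypotheses of Lemma~\ref{L-ind} are genuinely met on the sub-squares that appear, and that the constants $A$ and $C$ remain uniformly controlled (only finitely many derivatives of the fixed smooth $v$ ever enter, so this is safe). After at most $\sabs{\m}$ iterations I reach the base case, and combining with the initial finite partition yields \eqref{loj} with $b := b_{\sabs{\m}} > 0$ and a constant $C$ depending only on $v$.
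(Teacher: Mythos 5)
Your approach is essentially the same as the paper's: invoke Lemma~\ref{compactarg} to reduce to the quantitative condition~\eqref{TC'}, then iterate Lemma~\ref{L-ind} to peel off one partial derivative at a time, down from order $\sabs{\m}$ to a first-order partial of $v-E$. The only genuine difference is organizational: the paper fixes all truncation levels at once, $\ep_j = \ep^{1/3^{m-j}}$, so the sum of bad measures telescopes cleanly (at the cost of a wasteful exponent $b = 3^{-m}$, which the paper acknowledges); you set up a nested induction on the order $k$ of the surviving derivative and balance $\ep_1$ optimally against the sub-square count at each level.

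There is, however, a bookkeeping slip in the inductive hypothesis as you stated it. The form $P(k,\ka,c_0,\eta) \le C\,\ka\,(\eta/c_0)^{b_k}$, with a single exponent on $\ka$, $c_0$ and $\eta$, does not propagate through the induction. Carrying out the balance honestly, the bad-set contribution $\ka\,\ep_1\,\epm_0$ plus the sub-square contribution $(\ka\kam_1)^2\cdot C\,\ka_1^{d_{k-1}}\ep_1^{-a_{k-1}}\eta^{b_{k-1}}$ forces three \emph{distinct} exponents $d_k$ (on $\ka$), $a_k$ (on $c_0^{-1}$), $b_k$ (on $\eta$), governed by
\begin{equation*}
d_k = 1 + \frac{1}{3+a_{k-1}-d_{k-1}},\qquad a_k = 1 - \frac{1}{3+a_{k-1}-d_{k-1}},\qquad b_k = \frac{b_{k-1}}{3+a_{k-1}-d_{k-1}},
\end{equation*}
so $a_k \ne b_k$ already at $k=2$. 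As it happens, the identity $1 + a_{k-1} - d_{k-1} = b_{k-1}$ is preserved by this recursion (it holds at $k=1$ since $d_1=a_1=b_1=1$), which means $3+a_{k-1}-d_{k-1}=2+b_{k-1}$ and your stated recursion $b_k=b_{k-1}/(2+b_{k-1})$ is, somewhat fortuitously, correct: it gives $b_k=1/(2^k-1)>0$, which is in fact sharper than the paper's $3^{-m}$. But the proof as written does not justify this: you must either track $d_k, a_k, b_k$ separately and observe the conserved identity, or adopt the paper's device of prescribing $\ep_j$ so that $\ep_{j+1}\epmt_j=\ep_j$ and letting the sum telescope. You should also replace ``squares of side $\sim 1$'' in your initial covering by squares of side comparable to $c/A$ (a constant depending only on $v$), since it is on squares of that scale, not of unit scale, that continuity of $\partial^{\alfa}v$ guarantees a single index $\alfa$ works throughout; and you should note, as you anticipate, that the smallness conditions $\ep_{j+1}<\ep_j\ka_j/4$ from Lemma~\ref{L-ind} hold at every level once $\ep$ is below a threshold $\ep^*(v)$, with larger $\ep$ handled trivially by adjusting $C$.
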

 
 \begin{proof}
 Using Lemma \ref{compactarg}, 
 $$ \exists \, \m = \mm \in \N^2 \  \sabs{\m} \neq 0 \  \ \exists  c > 0 \ \colon  \  \forall \x \in \mathbb{T}^2  \  \max_{\genfrac{}{}{0cm}{}{\alfa \leq \m}{\sabs{\alpha} \neq 0}}  \abs{ \partial^{\alfa} \, v (\x) } \geq c$$
 Let 
 $$A:= \max_{\alfa \le (m_1+1, m_2+1)} \ \max_{\x \in [0,1]^2} \,  \abs{\partial^{\alfa} \, v (\x) } $$
 
We may of course assume that  $\abs{ E } \le 2 A$, otherwise there is nothing to prove.

All the constants in the estimates that follow will depend only on $\sabs{\m}, c, A$ (so in particular only on $v$).

\smallskip

Partition $[0,1]^2$ into about $(\frac{2 A}{c})^2$ squares of the form $\Rr = I \times J$ of size $\abs{I}, \abs{J} \sim \frac{c}{2A}$.

Let $\Rr$ be such a square. Then either
$\sabs {v (\x) } \ge \ep \ \text{ for all } \x \in \Rr$, 
in which case we are done with this square, or for some $\ab = \aaa \in \Rr$ we have $\sabs{ v (\ab) } < \ep$. But then for one of the partial derivatives $\alfa \le \m$,  $\sabs{\alpha} \neq 0$, we have $\abs{ \partial^{\alfa} \, v (\ab)  } \ge c$.

\smallskip

Assume for simplicity that $\abs{ \partial^{\m} \, v (\ab)  } \ge c$, which is the worst case scenario.

If $\x \in \Rr$, then $\norm{ \x - \ab}_{\infty} := \max \{ \sabs{x_1 - a_1}, \sabs{x_2-a_2} \} \le \frac{c}{2A}$.  
 
Then 
$$ \abs{ \partial^{\m} \, v (\x) -  \partial^{\m} \, v (\ab)}  \less \max_{\y \in \Rr } \, \abs{\nabla \partial^{\m} \, v (\y) } \cdot \norm{ \x - \ab}_{\infty}  \le
A \cdot \frac{c}{2A} = \frac{c}{2}$$
 
It follows that 
$$  \min_{\x \in \Rr }  \abs{ \partial^{\m} \, v (\x) } \more \frac{c}{2}$$

We will use Lemma \ref{L-ind} $\sabs{\m} =: m$ times. 

\medskip

$\blob$ \textbf{Step 1.} Let $$f_1 (\x) := \partial^{(m_1, m_2-1)} \, v (\x)$$
Then 
$$\min_{\x \in \Rr} \abs{\dy f_1 (\x) } =  \min_{\x \in \Rr }  \abs{ \partial^{\m} \, v (\x) } \more c$$

We apply Lemma \ref{L-ind} to the function $f_1$ with the following data: 
$$\Rr_0 = \Rr, \ \ka_0 = \frac{c}{2A}, \ \ep_0 \sim c, \ \ep_1 < \frac{\ep_0 \ka_0}{4}, \ \ka_1 \sim \ep_1 \Am$$
where $\ep_1$ will be chosen later. 

We get a set $\Bb_1^{\flat}:= \Bb_1$, $\text{ mes } [ \Bb_1^\flat ] \less \ka_0 \ep_1 \epm_0 < \ka_0^2 A \cdot \ep_1 \epmt_0$ such that $\Rr_0 \setminus  \Bb_1^\flat$ is a union of about 
$(\ka_0 \kam_1)^2$ squares of the form $\Rr_1 = I_1 \times J_1$, of size $\abs{I_1}, \abs{J_1} \sim \ka_1$. For each of these squares we have:
$$\min_{\x \in \Rr_1} \abs{f_1 (\x) }  \ge \ep_1$$
which means:
$$\min_{\x \in \Rr_1} \abs{\ \partial^{(m_1, m_2-1)} \, v (\x) } \ge \ep_1$$

\medskip

$\blob$ \textbf{Step 2.} Pick any of the squares $\Rr_1 = I_1 \times J_1$ from the previous step and consider say 
$$f_2 (\x) := \partial^{(m_1-1, m_2-1)} \, v (\x)$$
Then
$$\min_{\x \in \Rr_1} \abs{\dx f_2 (\x) } =  \min_{\x \in \Rr_1 }  \abs{ \partial^{(m_1, m_2-1)} \, v (\x) } \ge \ep_1$$

Apply Lemma \ref{L-ind} to the function $f_2$ with the following data: 
\begin{center}
$\Rr_1$, $\ka_1$, $\ep_1$ from Step 1, \ $\ep_2 < \frac{\ep_1 \ka_1}{4}$, \ $\ka_2 \sim \ep_2 \Am$
\end{center}
where $\ep_2$ will be chosen later. 
 
We get a set $\Bb_2$, $\text{ mes } [ \Bb_2 ] \less \ka_1 \ep_2 \epm_1$ such that $\Rr_1 \setminus  \Bb_2$ is a union of about $(\ka_1 \kam_2)^2$ squares of the form $\Rr_2 = I_2 \times J_2$, of size $\abs{I_2}, \abs{J_2} \sim \ka_2$. For each of these squares we have:
$$\min_{\x \in \Rr_2} \abs{f_2 (\x) }  \ge \ep_2$$
which means:
$$\min_{\x \in \Rr_2} \abs{\ \partial^{(m_1-1, m_2-1)} \, v (\x) } \ge \ep_2$$ 

\smallskip

If we do this for each of the $\sim (\ka_0 \kam_1)^2$ squares resulting from Step 1, and if we put together all the `bad' sets $\Bb_2$ corresponding to each of these squares, we conclude the following.

\smallskip

There is a set $\Bb_2^{\flat} \subset \Rr$ such that:
$$\text{ mes } [ \Bb_2^{\flat} ] \less \ka_1 \ep_2 \epm_1 \cdot (\ka_0 \kam_1)^2 = \ka_0 \ep_2 \epm_1 \kam_1 \sim \ka_0^2 A \cdot \ep_2 \epmt_1$$

Hence the total measure of the `bad' set in Step 2 is:
$$\text{ mes } [ \Bb_2^{\flat} ] \less  \ka_0^2 A \cdot \ep_2 \epmt_1$$

Moreover, $\Rr \setminus (\Bb_1^{\flat} \cup \Bb_2^{\flat}) $ is covered by squares of the form $\Rr_2 = I_2 \times J_2$, of size $\abs{I_2}, \abs{J_2} \sim \ka_2$.

The total number of such squares is about $$(\ka_1 \kam_2)^2 \cdot (\ka_0 \kam_1)^2 = (\ka_0 \kam_2)^2$$

On each of these squares we have:

$$\min_{\x \in \Rr_2} \abs{\ \partial^{(m_1-1, m_2-1)} \, v (\x) } \ge \ep_2$$ 

\bigskip

It is clear how this procedure continues. Perform it for $m-1$ steps. We will get sets $\Bb_1^{\flat}, \dotsc,  \Bb_{m-1}^{\flat} $ such that 
$\Rr \setminus (\Bb_1^{\flat} \cup \dotsc \cup \Bb_{m-1}^{\flat}) $ consists of about  $(\ka_0 \kam_{m-1})^2$ squares of the form $\Rr_{m-1} = I_{m-1}  \times J_{m-1}$, of size $\abs{I_{m-1}}, \abs{J_{m-1}} \sim \ka_{m-1}$. 
On each of these squares we have 
$$\min_{\x \in \Rr_{m-1}} \abs{\ \dy \, v (\x) } \ge \ep_{m-1} \ \text{ or } \ \min_{\x \in \Rr_{m-1}} \abs{\ \dx \, v (\x) } \ge \ep_{m-1}$$

$\blob$ \textbf{Step m.} Assume the former inequality above and apply Lemma \ref{L-ind} one more time.
Let $$f_m (\x) := v (\x) - E$$
for some fixed energy $E$ with $\abs{E} \le 2A$ (the estimates will not depend on $E$).
Then for each of the squares $\Rr_{m-1}$ from the previous step we have:
$$  \min_{\x \in \Rr_{m-1}} \abs{ \dy \, f_{m} (\x) }  =  \min_{\x \in \Rr_{m-1}} \abs{\dy \, v (\x) } \ge \ep_{m-1}$$

Apply Lemma \ref{L-ind} to the function $f_m$ with the following data: 
\begin{center}
$\Rr_{m-1}$, $\ka_{m-1}$, $\ep_{m-1}$ from the previous step, \ $\ep_m < \frac{\ep_{m-1}  \ka_{m-1}}{4}$, \ $\ka_m \sim \ep_m \Am$
\end{center}
where $\ep_m$ will be chosen later. 
 
We get a set $\Bb_m$, $\text{ mes } [ \Bb_m ] \less \ka_{m-1} \ep_m \epm_{m-1}$ such that $\Rr_{m-1} \setminus  \Bb_m$ is a union of about $(\ka_{m-1} \kam_m)^2$ squares of the form $\Rr_m = I_m \times J_m$, of size $\abs{I_m}, \abs{J_m} \sim \ka_m$. For each of these squares we have:
$$\min_{\x \in \Rr_m} \abs{f_m (\x) }  \ge \ep_m$$
which means:
$$\min_{\x \in \Rr_m} \abs{ v (\x) - E } \ge \ep_m$$ 

\smallskip

If we do this for each of the $\sim (\ka_0 \kam_{m-1})^2$ squares resulting from the previous step, and if we put together all the corresponding `bad' sets, we conclude.

\smallskip

There is a set $\Bb_m^{\flat} \subset \Rr$ such that:
$$\text{ mes } [ \Bb_m^{\flat} ] \less \ka_{m-1} \ep_m \epm_{m-1} \cdot (\ka_0 \kam_{m-1})^2 = \ka_0 \ep_m \epm_{m-1} \kam_{m-1} \sim \ka_0^2 A \cdot \ep_m \epmt_{m-1}$$

Hence the total measure of the `bad' set in Step m is:
$$\text{ mes } [ \Bb_m^{\flat} ] \less  \ka_0^2 A \cdot \ep_m \epmt_{m-1}$$

Moreover, $\Rr \setminus (\Bb_1^{\flat} \cup \Bb_2^{\flat} \dotsc \cup \Bb_m^{\flat} ) $ is covered by squares of the form $\Rr_m = I_m \times J_m$, of size $\abs{I_m}, \abs{J_m} \sim \ka_m$.

The total number of such squares is about $$(\ka_{m-1} \kam_m)^2 \cdot (\ka_0 \kam_{m-1})^2 = (\ka_0 \kam_m)^2$$

On each of these squares we have:

$$\min_{\x \in \Rr_m} \abs{ v (\x) - E } \ge \ep_m$$ 

Therefore, the total measure of the bad set from all steps is:
\begin{equation}\label{total-bad}
\text{ mes } [ \Bb_1^{\flat} \cup \Bb_2^{\flat} \dotsc \cup \Bb_m^{\flat}  ] \less \ka_0^2 A \cdot [ \ep_1 \epmt_0 +  \ep_2 \epmt_1 + \dotsc  \ep_m \epmt_{m-1} ] 
\end{equation}

We choose $$\ep_j := \ep^{1/3^{m-j}} \ \text{ for } 1 \le j \le m$$

If $\ep < \ep^{*}  (c, m)$, then $\ep_0 \sim c > \ep^{1/3^m}$,  $\epmt_0 \sim c^{-2} < \epmt$, so there is no harm in also putting (for simplicity)  $\ep_0 = \ep^{1/3^m}$.

It is a simple calculation to see that for any $\ep < \ep^{*} (m, A)$, we have $\ep_{j+1} < \frac{\ep_j \ka_j}{4}$ for all $j = \overline{0 \ldots m-1}$, which allows our inductive process to work.

Note that $\ep_j^3 = \ep_{j+1}$ so $\ep_{j+1} \epmt_j = \ep_j$. This implies:
$$ \ep_1 \epmt_0 +  \ep_2 \epmt_1 + \dotsc  \ep_m \epmt_{m-1} = \ep_0 + \ep_1 + \dotsc + \ep_{m-1} \le m \ep_0 = m \cdot \ep^{1/3^m}$$

From \eqref{total-bad} it follows that the total measure of the bad set inside the square  $\Rr$ is:
$$\text{ mes } [ \Bb_1^{\flat} \cup \Bb_2^{\flat} \dotsc \cup \Bb_m^{\flat}  ] \less \ka_0^2 \, A \, m \cdot \ep^{1/3^m}$$
There are about $(\frac{2A}{c})^2 = \kamt_0$ such squares. 

We conclude that outside a bad set $\Bb$, $\text{ mes } [ \Bb ] < A \, m \cdot e^{1/3^m} $, we have  $\abs{ v (\x) - E } \ge \ep$, which proves \eqref{Loj} with $C \sim A \, m$ and $b = \frac{1}{3^m}$.

 \end{proof}

\begin{remark} \rm{The exponent $b$ in \eqref{loj}  is related to the {\L}ojasiewicz exponent of the function $v$ (see \cite{jK}, \cite{DK}).  Determining the optimal exponent in such an inequality is an interesting problem in itself, and has been studied extensively for polynomials and analytic functions. It is clear that for a polynomial, the {\L}ojasiewicz exponent should be related to its degree $d$, and it is in fact shown to be $O (\frac{1}{d^2})$ with explicit underlying  constants (see \cite{DK}, \cite{jK}).  The proof of the {\L}ojasiewicz inequality for analytic functions in \cite{GS1} (see Lemma 11.4 there) does not provide an explicit value for the exponent, but Theorem 4 in \cite{PSS} provides a scheme for computing it via the Newton distance of $v$. 
 
 In our proof for smooth, transversal  functions, we obtain the exponent $\frac{1}{3^m}$, where $m$ is the maximum number of partial derivatives needed for transversality. If $v$ were a polynomial of degree $d$, then $m$ would be $d$, which shows that our estimate is very wasteful (we have obtained a better estimate, $O (\frac{1}{m})$, for one-variable functions, see Lemma 5.3 in \cite{sK1}). This, however,  seems to be the only such estimate available now for non-analytic functions of two variables. 
 
 A similar argument can be made for functions of more than two variables, so \eqref{loj} will hold for such functions as well.
}  
\end{remark}    

\bigskip

\section{Large deviation theorem, the proof of main results}\label{ldt_proofs} 
Using induction on the scale $N$, we will prove the large deviation estimate~\eqref{ldt-strategy} for the logarithmic average of  transfer matrices:
$$\mbox{mes } [ \x \in \T^2 \colon \abs{ \frac{1}{N} \log \norm{ M_{N} (\x, E) } - L_{N} (E) } > N^{-\tau} ] <  e^{-N^\sigma}$$
as well as a lower bound on the mean of these quantities:  
$$L_N (E) \ge \gamma_N \, \log \sabs{\la}$$

The base step of the induction uses the quantitative description~\eqref{loj} of the transversality condition~\eqref{TC} on the potential function, and the large size of the coupling constant. The inductive step uses only the regularity of the potential function via Lemma~\ref{lemma1}, which provides a good approximation of these logarithmic averages by pluri-subharmonic functions.  

\smallskip


\begin{lemma}{(Base step of the induction)} \label{base-step}
Assume  that $ v (\x) $ is smooth and satisfies the transversality condition (\ref{TC}). Then given any constant $C > 0$, there are positive constants $ \lambda_1$ and $B$ which depend on $v$ and $C$, such that
for any scale $N_0$,  for any $ \lambda$ subject to
$ \sabs{\la} \ge \max \{ \la _1, N_0^B \} $ and for any $E \in \R$ we have:
\begin{equation}\label{step1}
\mbox{ mes } [  \x \in \mathbb{T}^2 \colon 
\, \abs{  \frac{1}{N_0} \log \norm{  M_{N_0} (\x, \lambda, E) } - L_{N_0} (\lambda, E) }
> \frac{1}{20} \, S(\lambda) \, ]  < N_0^{- C}
\end{equation}

Furthermore, for these $\lambda$, $N_0$ and for all $E$ we have:
\begin{align}
 L_{N_0} (\lambda, E) \geq \frac{1}{2} \, S(\lambda)  \label{step10}  \\
 L_{N_0} (\lambda, E) -  L_{2 N_0} (\lambda, E) \leq  \frac{1}{80} \, S(\lambda)  \label{step100}
\end{align}
\end{lemma}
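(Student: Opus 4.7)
My plan is to exploit the large coupling $\sabs{\la}$ to make the transfer cocycle uniformly hyperbolic for most $\x$, using Theorem~\ref{Loj} to show that the resonant set $\{\abs{v(\y) - E/\la} < \delta\}$ is small and then taking a union bound along the orbit of length $2 N_0$. Fix a small absolute constant $\beta > 0$ (eventually $\beta = 1/200$ suffices), set $\delta := \sabs{\la}^{-\beta}$, and define the good set
\[
G := \{ \x \in \T^2 \colon \abs{ v (\shift^j \x) - E/\la } \ge \delta \text{ for all } 1 \le j \le 2 N_0 \} .
\]
Theorem~\ref{Loj} applied at the energy $E/\la$ together with the $\shift$-invariance of Lebesgue measure yields $\mbox{mes} (\T^2 \setminus G) \le 2 N_0 \, C \, \sabs{\la}^{-\beta b}$; choosing $B$ of order $(C+1)/(\beta b)$, the hypothesis $\sabs{\la} \ge N_0^B$ forces $\mbox{mes}(\T^2 \setminus G) < N_0^{-C}$.

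On $G$, I would show that the cocycle is uniformly hyperbolic at scales $N_0$ and $2 N_0$ by a direct analysis of the $(1,1)$ entry of $M_k (\x)$. Let $p_k (\x) = [M_k (\x)]_{1,1}$ and $a_j (\x) = \la v (\shift^j \x) - E$; the standard three-term recurrence $p_k = a_k \, p_{k-1} - p_{k-2}$ (with $p_0 = 1$, $p_{-1} = 0$) combined with the lower bound $\abs{a_j (\x)} \ge \sabs{\la}^{1 - \beta} \ge 2$ (valid on $G$ for $\sabs{\la} \ge \la_1$) yields, inductively, $\abs{p_k}/\abs{p_{k-1}} \ge \abs{a_k}/2$. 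Since $\norm{M_N} \ge \abs{p_N}$, this produces
\[
\frac{1}{N} \log \norm{M_N (\x)} \ge \frac{1}{N} \sum_{j=1}^{N} \log \abs{a_j (\x)} - \log 2 \ge (1 - \beta) \log \sabs{\la} - \log 2
\]
uniformly in $\x \in G$ and for both $N = N_0$ and $N = 2 N_0$ (the reason $G$ is defined with $2 N_0$ iterations).

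Combining this with the uniform upper bound $\frac{1}{N} \log \norm{M_N (\x)} \le S(\la)$ from \eqref{boundA} and using $\norm{M_N} \ge 1$ off $G$, integration over $\T^2$ then gives
\[
(1 - N_0^{-C}) \bigl[ (1 - \beta) \log \sabs{\la} - \log 2 \bigr] \, \le \, L_{N_0}, \; L_{2 N_0} \, \le \, S(\la) .
\]
For $\beta$ small and $\sabs{\la} \ge \la_1$ large, this forces $L_{N_0} \ge \frac{1}{2} S (\la)$, which is \eqref{step10}; subtracting the upper bound on $L_{N_0}$ from the lower bound on $L_{2 N_0}$ gives $L_{N_0} - L_{2 N_0} \le \beta \log \sabs{\la} + O(1) \le \frac{1}{80} S (\la)$, which is \eqref{step100}; and the observation that for $\x \in G$ both $L_{N_0}$ and $\frac{1}{N_0} \log \norm{M_{N_0} (\x)}$ lie in an interval of length $\beta \log \sabs{\la} + O(1) \le \frac{1}{20} S (\la)$ delivers the LDT~\eqref{step1}.

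The main obstacle has already been overcome: it is the {\L}ojasiewicz type inequality of Theorem~\ref{Loj}, without which one could not localize the resonant orbits of a merely Gevrey potential. What remains is the elementary hyperbolicity bound obtained from the $p_k$-recurrence together with the bookkeeping of the constants $\beta$, $B$, and $\la_1$.
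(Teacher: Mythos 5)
Your proposal reconstructs the standard initial-scale argument that the paper itself defers to (Lemma 2.10 of \cite{BGS}), namely: use the {\L}ojasiewicz inequality of Theorem~\ref{Loj} to show that the resonant set $\{\, \abs{v - E/\la} < \sabs{\la}^{-\beta}\,\}$ is polynomially small in $\sabs{\la}$, take a union bound along the orbit of length $2N_0$, and then exploit the three-term recurrence for the $(1,1)$ entry of $M_k$ to deduce uniform exponential growth of the transfer matrices on the good set; this is exactly the approach the paper invokes, since, as the text notes, the {\L}ojasiewicz inequality is the only ingredient that changes when passing from analytic to transversal Gevrey potentials. The only things you would still want to tidy up are the bookkeeping around the $N_0^{-C}$ correction when integrating the lower bound (for small $N_0$ and very large $\la$ the term $N_0^{-C} S(\la)$ is not automatically small compared to $\frac{1}{20}S(\la)$, but this is resolved either by taking the target exponent $C' > C$ large enough through a larger $B$, or by noting that the claim is vacuous for small $N_0$), and the absorption of the constant $2C$ in the measure bound into $\la_1$ or $B$; neither affects the substance of the argument.
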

\begin{proof} The proof of this result is similar to the analytic potential function case.  That is because the only fact about analyticity needed here is the {\L}ojasiewicz inequality \eqref{loj}, which holds for any non-constant analytic functions, and which we have established in section~\ref{lojasiewicz} for smooth functions satisfying the transversality condition \eqref{TC}. We will then omit the proof, but the reader is referred  to the proof of Lemma 2.10 in \cite{BGS} for details. 
\end{proof}


We will now explain the idea of the proof of the inductive step. 

If at scale $N_0$ we apply  the almost invariance property (\ref{mshift}) $n$ times and then average, we get:
\begin{equation}\label{mshift2}
 \abs{ L_{N_0} (\x)  -  \frac{1}{n} \sum_{j = 0}^{n-1}  L_{N_0} (\shift^j \x) }  \less  \frac{n S(\la)}{N_0} 
\end{equation}
so using the approximation \eqref{aproxu}, we also get:
\begin{equation}\label{mshift3}
 \abs{ u_{N_0} (\x)  -  \frac{1}{n} \sum_{j = 0}^{n-1}  u_{N_0} (\shift^j \x) }  \less  \frac{n S(\la)}{N_0} 
\end{equation}
 
To have a decay above, we need to take a smaller number of shifts $n \ll N_0$. 

Apply the estimate  (\ref{shiftldt}) on averages of shifts of pluri-subharmonic functions to $u_{N_0} (x)$ and get:
\begin{equation}\label{shiftldt2} 
\mbox{mes }[ \x \in \T^2 : | \, \frac{1}{n} \sum_{j = 0}^{n-1}  u_{N_0} (\shift^j \x)  \, -  \avg{u_{N_0}}   | > \frac{S}{\rho_{N_0}} \, n^{- \tau_0} ] < 
e^{- n^{\sigma_0}}
\end{equation}

We may combine \eqref{mshift3}, \eqref{shiftldt2} to directly obtain a large deviation estimate for $u_{N_0} (x)$ and then, via the approximations  \eqref{aproxu}, \eqref{aprox<u>} to obtain the LDT for $L_{N_0} (x)$, \emph{only} when the deviation $\frac{S}{\rho_{N_0}} n^{- \tau_0}  \ll 1$. In other words, this approach works only when the scaling factor $\frac{S}{\rho_{N_0}}$ is not too large to cancel the decay $n^{- \tau_0}$. This is the case of the single or multi-frequency shift model with \emph{analytic} potential (see \cite{B}, \cite{BG}) where $\frac{S}{\rho_{N_0}} = \frac{S}{\rho}$ is just a constant depending on the potential function  $v$. This approach also works for the \emph{single}-frequency model with potential function in a Gevrey class of order  $s<2$, since in this case sharper estimates than (\ref{shiftldt2}) are available for averages of shifts of single-variable subharmonic functions  (see \cite{sK1}). This approach fails for the skew-shift model (whether the potential function is analytic or Gevrey) and also for the multi-frequency model with Gevrey potential function, because the size  $\rho_{N_0}$ of the subharmonic extension depends on the scale $N_0$. 

Therefore,  in order to beat the scaling factor $\frac{S}{\rho_{N_0}}$ when applying the estimate  (\ref{shiftldt2})  to a transfer matrix substitute $u_{N_0} (\x) $ at scale $N_0$, we need to consider a large number of shifts  $n \gg N_0$. The averages of shifts thus obtained will be close to the mean $\avg{u_{N_0}}$. Moreover, we will get:
 \begin{align*}
 L_{N_0}\overset{(1)}\approx \, \avg{u_{N_0} } \, \overset{(2)} \approx  \frac{1}{n} \sum_{j = 0}^{n-1}  u_{N_0} (\shift^j \x) \\
 \overset{(3)} \approx  \, \frac{1}{n} \sum_{j = 0}^{n-1}  \frac{1}{N_0} \log \norm{ M_{N_0} (\shift^j \x) } \overset{(4)} \approx \, 
  \frac{1}{n N_0} \log \norm{ M_{n N_0} (\x) } 
  \end{align*}
  
 The first approximation above is just (\ref{aprox<u>}). The second is exactly (\ref{shiftldt2}). 
 The third is due to  (\ref{aproxu}). The last approximation above essentially says that: 
$$\prod_{j=0}^{n-1}  \norm{ M_{N_0} (\shift^j \x) } \approx \norm{ \prod_{j=0}^{n-1}   M_{N_0} (\shift^j \x) } \approx \norm{   M_{n N_0} (\x) } $$
or in other words, that the product of the norms of certain transfer matrices is approximately equal to the norm of the product of these matrices, the latter giving us the transfer matrix at the larger scale $n N_0$.

If these heuristics were true, then for $n \gg N_0 $ we would get 
$$ L_{N_0} \approx  \frac{1}{n N_0} \log \norm{ M_{n N_0} (\x) } $$
which would establish the large deviation estimate for transfer matrices at a larger scale $n N_0$.   
  
The avalanche principle, which is a deterministic result, describes how estimates on the norms of individual (and of products of two consecutive) $SL_2 (\mathbb{R})$ matrices can lead to estimates on the norm of the product of all matrices (see \cite{GS1}, \cite{B}), thus providing the basis for establishing the above heuristics. It requires a uniform lower bound on the norms of individual matrices in the product, as well as knowing that the norm of the product of any two consecutive matrices is comparable to the product of their norms. 


\smallskip

The following lemma provides the inductive step in proving the LDT for an increasing sequence of scales $N$. It also provides the inductive step in proving the positivity and continuity of the Lyapunov  exponent. The  proof of this lemma is based on the heuristics described above, and combines the averages of shifts estimate (\ref{shiftldt}), the almost invariance property (\ref{mshift}) and the avalanche principle (see Proposition 2.2 in \cite{GS1}).

Before stating the lemma let us describe the various parameters and constants that will appear.

\smallskip

\noindent \textbf{List of constants and parameters:} 

\smallskip

$\dd$ $s > 1$ is the order of the Gevrey class.  

$\dd$ $\delta = 2 (s-1)$ refers to the size ($\approx N^{-\delta}$) of the holomorphic extensions of the transfer matrix substitutes.

$\dd$ $ D := 2 \delta + 8 $, $ A := \max \{\frac{2 \, (\delta+1)}{\tau_0}, \, 2 \} $ are some well chosen powers of the scale $N$, $\tau_0$ is the exponent from (\ref{shiftldt}). 

$\dd$  $\gamma > \frac{1}{4}$ is a fixed number.

\smallskip

Note that all these constants are either universal or depend on the order $s$ of the Gevrey class.

$\dd$ $\lambda $, $E$ are fixed parameters such that $ \abs{E}  \le \abs{\la}  B + 2 $, and $\displaystyle  B := \sup_{ \x \in \mathbb{T}} \abs{v (\x)} $. 

$\dd$ The transformation $\shift = \skews$ where $ \om \in DC_{\ka}$ or $\shift = \mshift$ where $ \omm \in DC_{\ka}$ for some $ \kappa > 0 $.

$\dd$ $N_{0 0} = N_{0 0} (s, \kappa, B)$ is a sufficiently large integer, such that the asymptotic behavior of various powers and exponentials applies to $N_{0 0}$ and such that (\ref{shiftldt}) holds for $N_{0 0}$  shifts.

\smallskip

\begin{lemma}{(The inductive step)}\label{ind} 
Consider two scales $N_0$ and $N$ such that $N_0 \geq N_{0 0}$,  (\ref{aproxu}) holds at scale $N_0$, that is:
\begin{equation}\label{scale2}
N_0 \geq S (\lambda) \hspace{.1in} \Leftrightarrow \hspace{.1in} 
| \lambda | \leq e^{ N_0}
\end{equation}
and 
\begin{equation}\label{scale1} 
N_{0}^{A} \leq N \leq  e^{N_0} 
\end{equation}

\smallskip

Assume that a weak LDT holds at scales $N_0$ and $2 N_0$:
 \begin{equation}\label{indhyp1}
\mbox{mes } [ \x \in \mathbb{T}^2 : \abs{ \frac{1}{N_0} \log \norm{ M_{N_0} (\x, \lambda, E ) } - L_{N_0} (\lambda, E ) } >  \frac{\gamma}{10} S (\lambda) ]  <  N^{- D}
\end{equation}
\begin{equation}\label{indhyp2}
 \mbox{mes } [ \x \in \mathbb{T}^2 : \abs{ \frac{1}{2 N_0} \log \norm{ M_{2 N_0} (\x, \lambda, E ) } - L_{2 N_0} (\lambda, E ) }  >   \frac{\gamma}{10}S (\lambda) ]  <  N^{- D}
\end{equation}
and that the means $L_{N_0}$,  $L_{2 N_0}$ have a lower bound and are close to each other:
\begin{eqnarray}
 L_{ N_0} (\lambda, E ), \, L_{2 N_0} (\lambda, E ) \geq & \gamma S(\lambda) \label{indhyp3} \\
 L_{ N_0} (\lambda, E ) -  L_{2 N_0} (\lambda, E ) \leq  & \frac{\gamma}{40}S (\lambda) \label{indhyp4}
\end{eqnarray}

\smallskip

Then similar (but stronger) estimates hold at the larger scale $N$: 
\begin{equation}\label{indc3}
 \mbox{mes } [ \x \in \mathbb{T}^2 : \abs{ \frac{1}{N} \log \norm{ M_{N} (\x, \lambda, E ) } - L_{N} (\lambda, E ) }  >  S( \lambda )  N^{-\tau} ]  <   e^{-N^\sigma}
\end{equation}
 \begin{eqnarray}
L_{N} (\lambda, E ) & \geq & \gamma S(\lambda)  \label{indc1} \\
& & - 2 [  L_{ N_0} (\lambda, E ) -  L_{2 N_0} (\lambda, E ) ] - C_{0} S (\lambda) N_{0} N^{-1} \notag \\
L_{ N} (\lambda, E ) -  L_{2 N} (\lambda, E ) & \leq & C_{0} S (\lambda) N_{0} N^{-1} \label{indc2}
\end{eqnarray}
for some positive absolute constants $C_0, \tau, \sigma$.

\end{lemma}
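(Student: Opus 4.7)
The plan is to follow the heuristic outlined in Section~\ref{strategy}: decompose the transfer matrix at scale $N$ into $n := \lfloor N/N_0 \rfloor$ non-overlapping blocks of length $N_0$, apply the avalanche principle to express $\log\norm{M_N(\x)}$ as a sum of single-block logarithms plus a small error, then apply the shift-averaging estimate~\eqref{shiftldt} to the pluri-subharmonic substitute $u_{N_0}$ from Lemma~\ref{lemma1} to show this sum is close to $N \cdot L_{N_0}$. The remainder block of length $<N_0$ contributes at most $N_0 S(\la)/N \le S(\la) N^{1/A-1}$ to the normalised log, comfortably below the target $S(\la) N^{-\tau}$ for suitable $\tau$. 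Integrating the same identity in $\x$ gives the mean statements~\eqref{indc1} and~\eqref{indc2} essentially for free.

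First I would verify the avalanche-principle hypotheses for the factors $g_j := M_{N_0}(\shift^{jN_0}\x)$, $j=0,\ldots,n-1$: a uniform lower bound $\log\norm{g_j}\ge\mu$, and a defect bound $\log\norm{g_{j+1}}+\log\norm{g_j}-\log\norm{g_{j+1}g_j}\le\mu/2$. Using that $g_{j+1}g_j = M_{2N_0}(\shift^{jN_0}\x)$, both conditions hold with $\mu\sim\gamma N_0 S(\la)$ outside a shifted copy of the exceptional sets in~\eqref{indhyp1}--\eqref{indhyp2}, thanks to the closeness hypothesis~\eqref{indhyp4}. A union bound over $j<n<N$ gives an exceptional $\x$-set of measure at most $2nN^{-D}\le N^{-D+1}$, which is negligible by the choice $D=2\delta+8$. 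The avalanche principle (Proposition~2.2 in \cite{GS1}) then produces the telescoping identity
\[
\Bigl|\log\norm{M_{nN_0}(\x)} - \sum_{j=0}^{n-1}\log\norm{g_j} - \sum_{j=0}^{n-2}\bigl(\log\norm{g_{j+1}g_j}-\log\norm{g_{j+1}}-\log\norm{g_j}\bigr)\Bigr|\lesssim n\, e^{-c\mu}
\]
for good $\x$; the exponential error is trivial. Dividing by $N$ and integrating in $\x$, the correction sums contribute $2L_{2N_0}-2L_{N_0}$ plus an $O(S(\la)N_0/N)$ error, yielding $L_N = 2L_{2N_0}-L_{N_0}+O(S(\la)N_0/N)$, which gives~\eqref{indc1} and, by reapplication at scale $2N$, also~\eqref{indc2}.

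For the pointwise LDT~\eqref{indc3}, what remains is to bound the $\x$-fluctuation of $\frac{1}{n}\sum_j \frac{1}{N_0}\log\norm{g_j}$ around $L_{N_0}$. Lemma~\ref{lemma1} replaces each normalised log-norm by the pluri-subharmonic substitute $u_{N_0}$ at a cost of $e^{-N_0^2}$, reducing the task to an ergodic-average estimate for $u_{N_0}$ along the orbit $\{\shift^{jN_0}\x\}_{j<n}$. Both the iterated multi-frequency shift $\mshift^{N_0}=\shift_{N_0\omm}$ and the iterated skew-shift $\skews^{N_0}$ preserve the relevant Diophantine-type structure (with slightly degraded constants absorbed into the asymptotics), so Proposition~\ref{avshifts} applies and yields a deviation bound of $\frac{S(\la)}{\rho_{N_0}}n^{-\tau_0}\lesssim S(\la)N_0^{\delta+1}N^{-\tau_0}$ outside a set of measure $\le e^{-n^{\sigma_0}}$. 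The choice $A=2(\delta+1)/\tau_0$ then ensures $N_0^{\delta+1}N^{-\tau_0}\le N^{-\tau_0/2}$, so one may take $\tau=\tau_0/3$ and $\sigma=\sigma_0/2$.

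The main technical obstacle is the scale-dependence $\rho_{N_0}\sim N_0^{-\delta-1}$ of the pluri-subharmonicity strip: the deviation factor $S(\la)/\rho_{N_0}$ in Proposition~\ref{avshifts} grows polynomially in $N_0$ and must be overpowered by the $n^{-\tau_0}$ decay. This is precisely why the gap $N\ge N_0^A$ with $A$ absorbing both $\delta+1$ and $1/\tau_0$ must be imposed, and it is also the structural reason this scheme does not extend to $C^\alpha$ potentials, for which the loss in $\rho_{N_0}$ would be exponential in $N_0$ and no polynomial number of shifts $n$ could ever compensate.
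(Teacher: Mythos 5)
Your proposal gets the overall architecture right — blocks of length $N_0$, avalanche principle, replace $\frac{1}{N_0}\log\norm{M_{N_0}}$ by the pluri-subharmonic substitute $u_{N_0}$, then invoke an ergodic-average estimate — but it contains two genuine gaps that the paper's actual argument is specifically structured to avoid.

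\textbf{First gap: averaging along the iterated dynamics.} After the avalanche principle you are left with a sum of the form $\frac{1}{n}\sum_{j<n} u_{N_0}(\shift^{jN_0}\x)$, i.e.\ an average of $n$ shifts of the $N_0$-fold iterate $\shift^{N_0}$, and you propose to apply Proposition~\ref{avshifts} directly to $\shift^{N_0}$, arguing that the Diophantine structure is ``preserved with slightly degraded constants absorbed into the asymptotics.'' This does not work. For the multi-frequency shift, $\shift_{\omm}^{N_0}\x = \x + N_0\omm$, and $\norm{\li\cdot N_0\omm} > \kappa N_0^{-A}\sabs{\li}^{-A}$, so the effective Diophantine constant drops to $\kappa N_0^{-A}$. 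Carrying this degradation through the Fej\'er-kernel estimate in the proof of Proposition~\ref{avshifts}, the deviation bound becomes roughly $\frac{S}{\rho_{N_0}}(N_0^A/n)^{a}$; since $n = N/N_0$ and $N$ may be as small as $N_0^A$, the ratio $N_0^A/n$ can be $\gtrsim N_0$, and there is no decay at all. The degradation is not a ``slightly degraded constant'' — it is a power-of-$N_0$ loss exactly competing with the power-of-$N_0$ gain from $n^{-\tau_0}$. For the skew-shift the situation is even worse: $\skews^{N_0}\xx = (x_1 + N_0 x_2 + \tbinom{N_0}{2}\om,\ x_2 + N_0\om)$ is not a skew-shift of the form \eqref{skew}, so Proposition~\ref{avshifts} (whose proof for the skew-shift relies on Weyl sums for the specific quadratic phase of $\skews^j$) does not even apply. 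The paper sidesteps both problems with a short but essential extra step: replace $\x$ by $\shift^k\x$ for $k=0,\dotsc,N_0-1$ in the avalanche-principle inequality and average, using the almost-invariance Lemma~\ref{inv} for the $\frac{1}{N}\log\norm{M_N}$ term. This converts the $n$ block-averages along $\shift^{N_0}$ into $N$ consecutive shifts along the \emph{original} $\shift$, so that Proposition~\ref{avshifts} can be invoked with the original frequency and with $N$ (not $n$) shifts. This is the crucial technical move your proposal omits.

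\textbf{Second gap: the exceptional set is not exponentially small without boosting.} You correctly note that the union bound over the avalanche-principle hypotheses costs a set of measure $\lesssim N^{-D+1}$, and then call this ``negligible.'' But the conclusion \eqref{indc3} demands an exceptional set of measure $\le e^{-N^\sigma}$, and $N^{-D+1}$ is only polynomially small — it is the dominant obstruction, not a negligible one. Proposition~\ref{avshifts} gives an exponentially small bad set for the shift-average part, but the avalanche-principle bad set remains polynomial. The paper handles this by running a second application of the BMO/John--Nirenberg boosting lemma (Lemma~\ref{boost}) to the function $u_N$ itself at scale $N$: the polynomially-small deviation estimate \eqref{160} feeds into \eqref{strong} to produce the required exponentially small exceptional set, at the cost of a small power loss in the deviation ($N^{-\tau_2}\to N^{-\tau_2/4}$). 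Your proposal never invokes Lemma~\ref{boost} in the inductive step and therefore cannot reach the stated conclusion. The exponent bookkeeping ($D = 2\delta + 8$ versus $\rho_N\approx N^{-\delta-1}$) in the paper is chosen precisely so that the quantity ${\epsilon_0}^{1/4} + \sqrt{B/\rho}\;{\epsilon_1}^{1/4}/{\epsilon_0}^{1/2}$ in Lemma~\ref{boost} stays $\lesssim N^{-\sigma_1}$; without seeing this you cannot justify the value of $D$.

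On the positive side, your treatment of the remainder block (reducing to $N_0 \mid N$ via \eqref{N/N_0}), the verification of the avalanche-principle hypotheses from \eqref{indhyp1}--\eqref{indhyp4}, and the integration step giving \eqref{indc1}--\eqref{indc2} are all in line with the paper's argument.
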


\begin{proof}
The parameters $ \lambda$, $E$ and the transformation $\shift = \skews$ or $\shift = \mshift$ are fixed, so they can be suppressed from notations. For instance $ M_N (\x) = M_N (\x, \lambda, E) $, $ S(\lambda) = S $ etc.
 
\smallskip
 
$\blob$ We can assume without loss of generality that $N$ is a multiple of $N_0$, that is, that $ N = n \cdot N_0 $.
 Indeed, if $ N = n \cdot N_0 + r $, $0 \leq r <  N_0 $, then 
\begin{equation}\label{N/N_0}
 \abs{ \frac{1}{N} \log \norm{ M_{N} (\x) } - \frac{1}{n \cdot N_0} \log \norm{ M_{n \cdot N_0} (\x) } } 
 \leq 2 S N_0 N^{- 1}    
\end{equation}
 
Therefore, if we prove (\ref{indc1}), (\ref{indc2}), (\ref{indc3}) at scale $ n \cdot N_0 $, then they hold 
at scale $N$ too.
 
To prove (\ref{N/N_0}), first note that $ M_N (\x) = B (\x) \cdot M_{n \cdot N_0} (\x) $, where 
$$ B (\x) := \prod_{j=N}^{n \cdot N_0 + 1} A (\shift^j \x) = \prod_{j=n \cdot N_0 + r}^{n \cdot N_0 + 1} A (\shift^j \x)$$
so
$$ \norm{ B (\x) }   \leq e^{r \cdot S} \leq e^{N_0 \cdot S} \mbox{  and   } \, \norm{ B (\x) ^{-1} }    \leq e^{r \cdot S} \leq e^{N_0 \cdot S} $$

Since $ \norm{ M_{n \cdot N_0} (\x) } \geq 1  $  and  $ \norm{ M_{N} (\x) } \geq 1  $, it follows that:
 \begin{eqnarray*}
  \frac{1}{N} \log \norm{ M_{N} (\x) } - \frac{1}{n \cdot N_0} \log \norm{ M_{n \cdot N_0} (\x) }    = 
 \frac{1}{n \cdot N_0} \log \frac{\norm{ M_{N} (\x) }^{\frac{n \cdot N_0}{N}}}{\norm{ M_{n \cdot N_0} (\x) }} \\
 \ \leq   \frac{1}{n \cdot N_0} \log \frac{\norm{ B (\x) }^{\frac{n \cdot N_0}{N}} \cdot \norm{ M_{n \cdot N_0} (\x) }^{\frac{n \cdot N_0}{N}}}{\norm{ M_{n \cdot N_0} (\x) }}   \\
\  \leq  \frac{1}{n \cdot N_0} \log \, (e^{N_0 S})^{\frac{n \cdot N_0}{N}} = S N_0 N^{- 1}  
 \end{eqnarray*}

Similarly
 \begin{align*}
 \frac{1}{n \cdot N_0} \log \norm{ M_{n \cdot N_0} (\x) } -  \frac{1}{N} \log \norm{ M_{N} (\x) } =
\frac{1}{n \cdot N_0} \log \frac{|| M_{n \cdot N_0} (\x)||}{ \norm{ M_{N} (\x) }^{\frac{n \cdot N_0}{N}}} 
\end{align*}
\begin{align*}
 = & \ \frac{1}{n \cdot N_0} \, \log \; [ \, \Bigl( \frac{\norm{ M_{n \cdot N_0} (\x) }}{|| M_N (\x) ||} \Bigr)^{\frac{n \cdot N_0}{N}} \cdot \, \norm{ M_{n \cdot N_0} (\x) }^{\frac{r}{N}} \, ] \\
 \leq & \ \frac{1}{n \cdot N_0} \,  \log \; [ \,
\norm{ ( B (\x) )^{-1} }^{\frac{n \cdot N_0}{N}} \cdot \norm{ M_{n \cdot N_0} (\x) }^{\frac{r}{N}} \, ] \\
 \leq & \ \frac{1}{n \cdot N_0} \, \log \; [ \, (e^{ N_0 S})^{\frac{n \cdot N_0}{N}} \cdot (e^{n  N_0 S})^{\frac{N_0}{N}} ] \, = 2 S N_0 N^{- 1}
 \end{align*}
and inequality (\ref{N/N_0}) now follows.

\medskip

$\blob$ We are going to show that (\ref{scale1}) -  (\ref{indhyp4})  allow us to apply the avalanche principle to the ``blocks'' $M_{N_0} (\shift^{(j-1) N_0} \, \x)$, for $j = \overline{1, n}$.  Each of these blocks is a product of $N_0$ matrices, and  they multiply up to  $M_N (\x)$.

\smallskip

Denote the set in (\ref{indhyp1}) by $B_{N_0}$ and similarly the set in (\ref{indhyp2}) by $B_{2 N_0}$.\\
If $ \x \notin B_{N_0} $ then using (\ref{indhyp1}), (\ref{indhyp3}) and \eqref{scale1} we get 
$$ \norm{ M_{N_0} (\x) } >  e^{- \frac{\gamma}{10} S  N_{0} +  L_{N_0} \cdot \, N_0 } \geq e^{\frac{9 \gamma}{10} S  N_{0} } =: \mu > e^{N_0}  \geq N > n$$
so
\begin{equation}\label{avalp1}
 \norm{ M_{N_0} (\x) } \geq \mu \geq n \hspace{.2in} \mbox{ if } \x \notin B_{N_0}
\end{equation} 

For $ 1 \leq j \leq n = \frac{N}{N_0} $ consider $ A_{j} = A_{j} (\x) := M_{N_0} (\shift^{(j-1) N_0} \x) $. Then (\ref{avalp1}) implies
\begin{equation}\label{avalpp1}
\min_{1 \leq j \leq n} \norm{ A_{j} (\x) }\geq \mu  \hspace{.2in} \mbox{ for all } 
 x \notin \bigcup _{j=0}^{n} \shift^{ - j N_0}  B_{N_0}
\end{equation}

Since $ A_{j+1} (\x) \cdot A_{j} (\x) =  M_{2 N_0} (\shift^{(j-1) N_0} \x) $, using (\ref{indhyp1}), (\ref{indhyp2}), (\ref{indhyp4}), for $ \x \notin  \bigcup _{j=0}^{n} (\shift^{ - j N_0}  B_{N_0}) \cup   \bigcup _{j=0}^{n} (\shift^{ - j N_0}  B_{2 N_0}) $ (which is a set of measure \\ $ < 2 N^{- D} \cdot N = 2 N^{- D + 1} $), we have :
\begin{eqnarray*}
 \log \norm{ A_{j+1} (\x) } +  \log \norm{ A_{j} (\x) }- \log \norm{ A_{j+1}(\x) \cdot A_{j}(\x) }\\
  = \log \norm{ M_{N_0} (\shift^{j N_0} \x) } + \log \norm{ M_{N_0} (\shift^{(j-1) N_0} \x) } - \log \norm{ M_{2 N_0} (\shift^{(j-1) N_0} \x) }  \\
  \leq  N_{0} (  L_{N_0} + \frac{S \gamma}{10} ) +  N_{0} (  L_{N_0} + \frac{S \gamma}{10} ) + 2 N_{0} (\frac{S \gamma}{10} -  L_{2 N_0}) \\
 =  2 N_{0} (  L_{N_0} -   L_{2 N_0}) +  \frac{4 S \gamma}{10} N_{0} \leq  \frac{9 S \gamma}{20} N_{0} = \frac{1}{2} \log \mu
 \end{eqnarray*}
Therefore,
\begin{equation}\label{avalpp2}
 \log \norm{ A_{j+1}(\x) } +  \log \norm{ A_{j}(\x) } - \log \norm{ A_{j+1}(\x) \cdot A_{j}(\x) } \leq  \frac{1}{2} \log \mu 
\end{equation}
for $\x$ outside a set of measure $ < 2 N^{- D + 1} $.

Estimates \eqref{avalpp1}, \eqref{avalpp2} are exactly the assumptions in the avalanche principle (Proposition 2.2 in \cite{GS1}). We then conclude:
 \begin{equation}\label{avalpp3}
 \abs{ \log \norm{ A_{n}(\x) \cdot \dotsc \cdot A_{1}(\x) } + \sum_{j=2}^{n-1} \log \norm{ A_{j}(\x) }  - \sum_{j=1}^{n-1} \log \norm{ A_{j+1}(\x) \cdot A_{j}(\x) } } \lesssim \frac{n}{\mu} 
 \end{equation}
for $\x$ outside a set of measure $ < 2 N^{- D + 1} $.

Hence, since $N = n \cdot N_0$ and $ A_{n}(\x) \cdot \dotsc \cdot A_{1}(\x) =  M_{N} (\x)$, we have:
\begin{align*}
\abs{ \log \norm{ M_{N} (\x) } +  \sum_{j=2}^{n-1} \log \norm{ M_{N_0} (\shift^{(j-1) N_0} \x) } \\
 - \sum_{j=1}^{n-1} \log \norm{ M_{2 N_0} (\shift^{(j-1) N_0} \x)} } \,   \lesssim \frac{n}{\mu}
\end{align*}

Therefore
\begin{eqnarray}
\abs{ \frac{1}{N}  \log \norm{ M_{N} (\x) } +  \frac{1}{n} \sum_{j=2}^{n-1} \frac{1}{N_0} \log \norm{ M_{N_0} (\shift^{(j-1) N_0}  \x) } \notag \\
 - \frac{2}{n} \sum_{j=1}^{n-1} \frac{1}{2 N_0} \log \norm{ M_{2 N_0} (\shift^{(j-1) N_0} \x) } }   \lesssim  \frac{1}{\mu} \label{9}
\end{eqnarray}

$\blob$ We will go from averages of $n$ blocks in (\ref{9}), to averages of $N$ shifts.
In (\ref{9}) replace $\x$ by  $\x, \shift \x, \dotsc \shift^{N_{0}-1} \x$ and then average (i.e. add up all these $N_0$ inequalities  and divide by $N_0$) to get:
\begin{eqnarray}
\abs{ \frac{1}{N_0} \sum_{j=0}^{N_{0}-1} \frac{1}{N} \log \norm{ M_{N} (\shift^{j} \x) }  +  
\frac{1}{N} \sum_{j=0}^{N-1} \frac{1}{N_0} \log \norm{ M_{N_0} (\shift^{j} \x) }\notag \\
-  \frac{2}{N} \sum_{j=0}^{N-1} \frac{1}{2 N_0} \log \norm{ M_{2 N_0} (\shift^{j} \x) } }   \lesssim   \frac{1}{\mu} \label{90}
\end{eqnarray}

The almost invariance property - Lemma (\ref{mshift}) implies: 
\begin{equation}\label{900}
 \abs{ \frac{1}{N}  \log \norm{ M_{N} (\x) } -  \frac{1}{N_0} \sum_{j=0}^{N_{0}-1} \frac{1}{N} \log \norm{ M_{N} (\shift^{j} \x) } }\lesssim \frac{S N_0}{N} 
\end{equation}

From (\ref{90}) and (\ref{900}) we get: 
\begin{eqnarray}
\abs{ \frac{1}{N}  \log \norm{ M_{N} (\x) } + 
\frac{1}{N} \sum_{j=0}^{N-1} \frac{1}{N_0} \log \norm{ M_{N_0} (\shift^{j} \x) } \notag \\
-  \frac{2}{N} \sum_{j=0}^{N-1} \frac{1}{2 N_0} \log \norm{  M_{2 N_0} (\shift^{j} \x) } }   \lesssim   \frac{S N_0}{N} + \frac{1}{\mu} \,  \lesssim S N_0 N^{- 1} \label{10}
\end{eqnarray}
for  $\x \notin  B_1 := \bigcup _{j=0}^{N} (T^{ - j }  B_{N_0}) \cup   \bigcup _{j=0}^{n} (T^{ - j }  B_{2 N_0})$ where $\mbox{mes }[ B_1 ] < 2 N^{- D + 1}$.

Integrating the left hand side of (\ref{10}) in $\x$, we get:
 \begin{align}
 \abs{  L_{N} +  L_{N_0} - 2  L_{2 N_0} } & < C S N_0 N^{-1}  + 4 S \cdot 2 N^{- D + 1}  <  C_{0} S N_0 N^{-1} \label{11}\\
L_{N} +  L_{N_0} - 2  L_{2 N_0} & > -  C_0  S N_0 N^{-1} \notag \\
L_{N} & >  L_{N_0} -  2 (  L_{ N_0} -  L_{2 N_0} ) - C_{0} S N_{0} N^{-1} \notag \\
& >  \gamma S  -  2 (  L_{ N_0} -  L_{2 N_0} ) - C_{0} S N_{0} N^{-1} \notag 
\end{align}
which proves \eqref{indc1}.

Clearly all the arguments above work for $N$ replaced by $2 N $, so we get the analogue of (\ref{11}) :
\begin{equation}\label{111} 
\abs{  L_{2 N} +  L_{N_0} - 2  L_{2 N_0} }  <  C_{0} S N_0 N^{-1}
\end{equation}
From (\ref{11}) and (\ref{111}) we obtain
$$ L_{N} -  L_{2 N} \leq  C_{0} S N_0 N^{-1} $$
which is exactly (\ref{indc2}).

\medskip

$\blob$ To prove the LDT (\ref{indc3}) at scale $N$, we are going to apply  the estimate \eqref{shiftldt} on averages of shifts of pluri-subharmonic functions  to the transfer matrix substitutes $u_{N_0} $ and $u_{2 N_0} $. Their widths of subharmonicity in each variable are $\rho_{N_0},  \rho_{2 N_0} \approx N_0^{- \delta-1}$ and they are uniformly bounded by $S$. 

\smallskip

Using (\ref{aproxu}) which holds at scales $N_0$ and $ 2 N_0$ due to (\ref{scale2}), we can `substitute'  in (\ref{10})
$ \frac{1}{N_0} \log \norm{ M_{N_0} (\shift^{j} (\x) } $ by $  u_{N_0} (\shift^j \x) $ and  $  \frac{1}{2 N_0} \log \norm{ M_{2 N_0} (\shift^{j} (\x) } $ by $ u_{2 N_0} (\shift^j \x) $  and get, for $ \x \notin B_1$: 
\begin{equation}\label{12}
\abs{ \frac{1}{N}  \log \norm{ M_{N} (\x) } +  \frac{1}{N} \sum_{j=0}^{N-1} u_{N_0}(\shift^{j} \x) -  \frac{2}{N} \sum_{j=0}^{N-1}  u_{2 N_0}(\shift^{j} \x) } \lesssim S N_0 N^{-1}
\end{equation} 

Applying (\ref{shiftldt}) to $ u_{N_0} $ and $ u_{2 N_0}$ we get :
\begin{equation}\label{13}
\mbox{ mes } [ \x \in \mathbb{T}^2 :  \abs{ \frac{1}{N} \sum_{j=0}^{N-1} u_{N_0}(\shift^{j} \x) - \avg{ u_{N_0} } } >  
S \cdot N_{0}^{\delta+1} \cdot N^{- \tau_0 } ] < e^{- N^{\sigma_0}}
\end{equation}
\begin{equation}\label{13'}
\mbox{ mes } [ \x \in \mathbb{T}^2 :  \abs{ \frac{1}{N} \sum_{j=0}^{N-1} u_{2 N_0}(\shift^{j} \x) - \avg{ u_{2 N_0} } }  >  
S \cdot N_{0}^{\delta+1} \cdot N^{- \tau_0 } ] < e^{- N^{\sigma_0}}
\end{equation}

Denote the union of the two sets in (\ref{13}), (\ref{13'}) by $B_2$. 

Since $N$ satisfies (\ref{scale1}), $$ S \cdot N_{0}^{\delta+1} \cdot N^{-\tau_0} 
< S \cdot (N^{1/A})^{\delta+1} \cdot N^{-\tau_0}  < S \cdot   N^{-\tau_1} \ \text{ where  } \tau_1< \frac{\tau_0}{2}$$

so from (\ref{12}), (\ref{13}), (\ref{13'}) we get:
\begin{eqnarray}
\abs{ \frac{1}{N}  \log \norm{ M_{N} (\x) }  \, + \,  \avg{ u_{N_0} }  \, - \,  2  \avg{ u_{2 N_0} } } \notag \\  
\lesssim S N_0 N^{-1}  +  S  \cdot   N^{-\tau_1}  \lesssim  S  \cdot N^{-\tau_1} \label{14}
\end{eqnarray}

for $\x  \notin B := B_1  \cup  B_2 $, where  $$\mbox{mes } [ B ] < 2 N^{- D + 1} + 2 e^{-N^{\sigma}} <  3 N^{- D + 1} <   N^{- D + 2} $$

Using (\ref{aprox<u>}) at scales $N_0$, $2 N_0$ and taking into account (\ref{scale1}), estimate (\ref{14}) becomes:
\begin{equation}\label{15}
\abs{ \frac{1}{N}  \log \norm{ M_{N} (\x) } \, + \,  L_{N_0}  \, - \,   2 L_{2 N_0} }  < 2 S \cdot N^{-\tau_1} \, + \, 2 e^{-  N_{0}^2} < 3 S N^{-\tau_1}  
\end{equation}
provided $x \notin B$.

\smallskip

Combine (\ref{15}) with (\ref{11}) to get:
\begin{equation}\label{16}
\abs{ \frac{1}{N}  \log \norm{ M_{N} (\x) } \, - \,  L_{N} }  <  C_{0} S N_0 N^{-1} + 3 S \cdot N^{-\tau_1}  <  S \cdot N^{-\tau_2} 
 \end{equation}
for all  $ \x \notin B $, where $\mbox{mes }[ B ] <  N^{- D + 2} $ and $\tau_2 <  \tau_1$.

\medskip

However, (\ref{16}) is not exactly what we need in order to prove the estimate (\ref{indc3}). We have to prove an estimate like (\ref{16}) for $\x$ outside an exponentially small set, and we only have it outside a polynomially small set. 
To boost this estimate, we employ again Lemma~\ref{boost}.

From (\ref{16}), using again (\ref{aproxu}), (\ref{aprox<u>}) at scale $N$, we get:
 \begin{equation}\label{160}
 \mbox{ mes } [ \x \in \mathbb{T}^2 :  \abs{ u_{N} (\x) - \avg{u_N} } > S \cdot N^{-\tau_2} ] < N^{- D + 2}
 \end{equation}
 
We apply Lemma~\ref{boost} to $ u (\x) := \frac{1}{S} u_N (\x)$, which is a pluri-subharmonic function on the strip $\Aa_{\ro_N}$, with upper bound $B = 1$ on this strip. 
 
Estimate (\ref{160}) implies   
 \begin{equation}\label{1600}
 \mbox{ mes } [ \x \in \mathbb{T}^2 :  \abs{ u (\x) - \avg{u} } >  N^{-\tau_2} ] < N^{- D + 2}
 \end{equation}
 
Then for  $ \epsilon _0 := N^{-\tau_2} $, $ \epsilon _1 := N^{- D + 2}$, $B = 1$, $\rho = \rho_N \approx N^{-\delta-1} $ we have 
\begin{align*}
{\epsilon _0}^{1/4} + \sqrt{\frac{B}{\rho}}  \;   \frac{{\epsilon_1}^{1/4}} {{\epsilon_0}^{1/2}}   = 
 N^{-{\tau_2}/4} + N^{\frac{\delta+1}{2}} \, N^{- \frac{D+2}{4}} \, N^{{\tau_2}/2} \\
 = N^{-{\tau_2}/4}  + N^{-1} \,  N^{{\tau_2}/2} <  N^{-\sigma_1}
 \end{align*}
 for some positive constant  $\sigma_1$.

The conclusion \eqref{strong} of Lemma~\ref{boost} then boosts (\ref{160}) from a small deviation outside a polynomially small set, to one outside an exponentially small set, amid a small power loss in the deviation:
 \begin{equation}
 \mbox{ mes } [ \x \in \mathbb{T}^2 :   \abs{ u_N (\x) - \avg{u_N} } >  S \, N^{-{\tau_2}/4} ] < e^{-c N^{\sigma_1}} < e^{-N^{\sigma}}
 \end{equation}
which proves estimate (\ref{indc3}).
\end{proof}

\begin{remark}{\rm The scaling factor $\sqrt{\frac{B}{\rho}}$ in estimate \eqref{strong} of Lemma~\ref{boost} is what prevents this approach via polynomial approximation to extend to more general Carleman classes of potential functions. This is because when the estimates on the Fourier coefficients of the potential function are weaker than estimate \eqref{fcoef} for Gevrey functions, the size $\rho = \rho_N$ of the holomorphic extension of the $N$th transfer matrix substitute will cancel any decay in the expression   $\sqrt{\frac{B}{\rho}}  \;   \frac{{\epsilon_1}^{1/4}} {{\epsilon_0}^{1/2}}$
}
\end{remark}

\medskip


We will combine the base step (Lemma~\ref{base-step}) with the inductive step (Lemma~\ref{ind}) to prove the large deviation estimate for transfer matrices and the positivity of the Lyapunov exponent. The proof of the LDT will also provide us with the major ingredient for deriving the continuity of the Lyapunov exponent.

\smallskip

\begin{theorem}\label{LDT}
Consider the Schr\"{o}dinger operator (\ref{op1}) on $l^2(\Z)$:
$$
[H (\x) \, \psi]_n := - \psi_{n+1} - \psi_{n-1} + \la \, v (\shift^n \x) \, \psi_n
$$
where the transformation $\shift$ is either the skew-shift \eqref{skew} or the multi-frequency shift \eqref{multishift}.
Assume that for some $\ka > 0$ the underlying frequency satisfies the Diophantine condition $DC_\ka$ described in \eqref{DC} or \eqref{DCM} respectively.

Assume moreover that the potential function $v (\x)$ belongs to a Gevrey class $G^s (\T^2)$ and that it is transversal as in \eqref{TC}. 

Then there exists $ \lambda_{0} = \lambda_{0} ( v, \kappa ) $ so that for every fixed $\lambda $ with $\abs{\la} \geq \lambda_{0} $ and for every energy $E$, we have:
\begin{equation}\label{ldt}
 \mbox{mes } [ \x \in \mathbb{T}^2 : \abs{ \frac{1}{N} \log \norm{ M_{N} (\x, \lambda, E ) } - L_{N} (\lambda, E ) } >  N^{-\tau} ]  <   e^{- {N^{\sigma}}}
\end{equation}
for some absolute constants $ \tau, \sigma  > 0 $, and for all
$ N \geq  N_{0} (\lambda, \kappa, v, s) $.
   
Furthermore, for every such transformation $\shift$ and coupling constant $\lambda$ and for all energies $E \in \mathbb{R}$ we have:
\begin{equation}\label{poslyap}
 L (\la, E) \geq \frac{1}{4} \log\abs{\la} > 0
\end{equation}
\end{theorem}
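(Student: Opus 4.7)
The plan is to combine Lemma~\ref{base-step} as the base case with an iterative application of Lemma~\ref{ind}, building a super-exponentially growing sequence of scales $N_0 < N_1 < N_2 < \dotsc$ along which the inductive hypotheses \eqref{indhyp1}--\eqref{indhyp4} propagate with a constant $\gamma_j$ that stays strictly above $1/4$.

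\textbf{Initialization.} First apply Lemma~\ref{base-step} with the constant $C$ chosen as $C := A \cdot D$, where $A, D$ are the exponents appearing in Lemma~\ref{ind}. This fixes numbers $\la_1, B$ depending only on $v, \ka, s$. Define $\la_0 = \la_0(v, \ka)$ large enough that for every $\sabs{\la} \geq \la_0$ we can choose an initial scale $N_0 = N_0(\la)$ satisfying simultaneously $N_0 \geq N_{00}$, $N_0 \geq S(\la)$ (so that \eqref{scale2} holds), and $N_0 \leq \sabs{\la}^{1/B}$ (so that Lemma~\ref{base-step} applies at both scales $N_0$ and $2 N_0$). The latter two conditions are compatible because $S(\la) \approx \log\sabs{\la}$ grows far more slowly than $\sabs{\la}^{1/B}$. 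Setting $N_1 := N_0^A$, Lemma~\ref{base-step} delivers, at scales $N_0$ and $2 N_0$, the weak LDT \eqref{step1} with exceptional set of measure $< N_0^{-AD} = N_1^{-D}$, together with \eqref{step10} and \eqref{step100}; these are exactly the hypotheses \eqref{indhyp1}--\eqref{indhyp4} of Lemma~\ref{ind} with $\gamma_0 := 1/2$.

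\textbf{Iteration.} Define $N_{j+1} := N_j^A$. For $N_0$ large the range condition \eqref{scale1} is satisfied at every stage, and Lemma~\ref{ind} produces the strong LDT \eqref{indc3} at scale $N_{j+1}$ (whose exponential exceptional set is trivially smaller than $N_{j+2}^{-D}$), together with
$$L_{N_{j+1}} \geq \gamma_j S(\la) - 2 \bigl( L_{N_j} - L_{2 N_j} \bigr) - C_0 S(\la) \, N_j N_{j+1}^{-1}, \qquad L_{N_{j+1}} - L_{2 N_{j+1}} \leq C_0 S(\la) \, N_j N_{j+1}^{-1}.$$
Writing $\eta_j := (L_{N_j} - L_{2 N_j})/S(\la)$, we have $\eta_0 \leq 1/80$ from \eqref{step100} and $\eta_{j+1} \leq C_0 N_j^{1-A}$, so $\eta_j$ decays super-exponentially. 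Set $\gamma_{j+1} := \gamma_j - 2 \eta_j - \eta_{j+1}$; then by induction $L_{N_{j+1}}, L_{2 N_{j+1}} \geq \gamma_{j+1} S(\la)$ and $L_{N_{j+1}} - L_{2 N_{j+1}} \leq \frac{\gamma_{j+1}}{40} S(\la)$ (the latter because $\eta_{j+1} \ll 1/40$ for $N_0$ large), so Lemma~\ref{ind} can be reapplied. The sequence $\gamma_j$ is decreasing but bounded below by $\gamma_\infty := 1/2 - \sum_{j \geq 0}(2 \eta_j + \eta_{j+1}) > 1/4$ once $N_0$ is sufficiently large, since all terms beyond $\eta_0$ are super-exponentially small.

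\textbf{Conclusion.} For any $N \geq N_0^A$ choose $j$ with $N \in [N_j^A, e^{N_j}]$; such a $j$ exists because consecutive intervals of this form overlap for $N_0$ large. Applying Lemma~\ref{ind} at this $j$ yields the strong LDT at scale $N$; absorbing the $\la$-dependent factor $S(\la) N^{-\tau_0}$ into $N^{-\tau}$ for some slightly smaller $\tau > 0$ and all $N \geq N_0(\la, \ka, v, s)$ produces \eqref{ldt}. The uniform lower bound $L_{N_j}(\la, E) \geq \gamma_\infty S(\la)$ passes through the limit $L_{N_j} \to L$ to give $L(\la, E) \geq \gamma_\infty S(\la) \geq \frac{1}{4} \log\sabs{\la}$, using $\gamma_\infty > 1/4$ and $S(\la) \geq \log\sabs{\la}$ for $\sabs{\la}$ large.

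\textbf{Main obstacle.} The delicate point is the bookkeeping of the constants $\gamma_j$: one must verify that the closeness hypothesis $L_{N_j} - L_{2 N_j} \leq \frac{\gamma_j}{40} S(\la)$ survives every iteration, and that the cumulative loss $\sum_k (2 \eta_k + \eta_{k+1})$ remains comfortably below $1/4$ so that $\gamma_\infty > 1/4$. This is precisely why the scales are chosen to grow as $N_{j+1} = N_j^A$ with $A \geq 2$: the resulting super-exponential growth makes $\eta_j$ a super-exponentially decaying series whose tail is negligible once $N_0$ (equivalently $\la_0$) is taken large enough, so that the same unified iteration simultaneously yields the large deviation estimate \eqref{ldt} and the sharp positivity bound \eqref{poslyap}.
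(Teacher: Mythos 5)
Your overall strategy is exactly that of the paper: initialize with Lemma~\ref{base-step}, iterate Lemma~\ref{ind} along scales $N_{j+1}=N_j^A$, and track a sequence of constants $\gamma_j$ staying above $1/4$. However, there are two real slips in the bookkeeping. First, the choice $C:=AD$ in the base step yields an exceptional set of measure $<N_0^{-AD}=N_1^{-D}$, which satisfies the hypothesis $<N^{-D}$ of Lemma~\ref{ind} \emph{only} for $N\le N_1=N_0^A$; since the lemma also requires $N\ge N_0^A$, your first application of Lemma~\ref{ind} establishes the strong LDT only at the single scale $N=N_1$, and the second application (with small scale $N_1$) covers $N\ge N_1^A=N_0^{A^2}$. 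The scales $N_0^A<N<N_0^{A^2}$ are therefore never reached, so \eqref{ldt} is not proved on a contiguous range. The paper takes $C=A^2D$ precisely so that the base-step exceptional set $N_0^{-A^2D}$ is $\le N^{-D}$ for every $N\le N_0^{A^2}$, making the ranges $[N_j^A,N_j^{A^2}]$ abut without gaps. Relatedly, your claimed applicability range $[N_j^A,e^{N_j}]$ is too wide: although \eqref{scale1} formally allows $N\le e^{N_j}$, the implicit constraint from \eqref{indhyp1} (exceptional set $<N^{-D}$) cuts the usable upper endpoint down to roughly $N_j^{A^2}$, and one must verify this explicitly at each stage as the paper does.

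Second, a smaller slip: defining $\gamma_{j+1}:=\gamma_j-2\eta_j-\eta_{j+1}$ does not directly give $L_{N_{j+1}}\ge\gamma_{j+1}S(\la)$. Estimate \eqref{indc1} gives $L_{N_{j+1}}\ge\gamma_jS-2\eta_jS-C_0SN_j^{1-A}$, while you only know $\eta_{j+1}\le C_0N_j^{1-A}$, so $\gamma_{j+1}S$ may sit \emph{above} that lower bound. Replacing $\eta_{j+1}$ by the explicit error $C_0N_j^{1-A}$, i.e.\ $\gamma_{j+1}:=\gamma_j-2\eta_j-C_0N_j^{1-A}$, fixes this, and your super-exponential-decay argument for $\sum_j(2\eta_j+C_0N_j^{1-A})<1/4$ goes through verbatim. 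With these corrections your proof is essentially identical to the paper's.
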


\begin{proof} We refer to the list of constants preceding Lemma \ref{ind}. 

We use the initial step - Lemma \ref{base-step} at a sufficiently large initial scale $N_0 \ge N_{0 0} = N_{0 0} (v)$. We will explain how the scale $N_0$ is chosen later. We get  constants $ \lambda_1 $, $B$ $ > 0$ such that for every $\lambda$ with 
$\sabs{\la}  \ge \max \{ \lambda _1, (2 N_0) {^B} \} $ (we want Lemma \ref{base-step} to apply at both scales $N_0$ and $2 N_0$) we have:
\begin{align}
  \mbox{mes } [ \x \in \mathbb{T}^2 : \abs{ \frac{1}{N_0} \log \norm{ M_{N_0} (\x) } - 
L_{N_0}  } >  \frac{1}{20} S  ]  <   N_0^{- A^2 \cdot D} \leq  N^{- D} \label{ldtN0} \\
 \mbox{mes } [\x \in \mathbb{T}^2 : | \frac{1}{2 N_0} \log || M_{2 N_0} (\x) || -L_{2 N_0}  | >  \frac{1}{20} S  ]  \notag \\ 
 <   (2 N_0)^{- A^2 \cdot D} \less N^{- D}  \label{ldt2N0} 
 \end{align}
\begin{align}
 L_{ N_0}, \, L_{2 N_0} \geq \frac{1}{2} S \label{LN0>} \\
L_{ N_0} -  L_{2 N_0} \leq  \frac{1}{80} \, S \label{LN0-L2N0}
\end{align}

Of course \eqref{ldtN0} and \eqref{ldt2N0} hold provided $N$ satisfies:
\begin{equation}\label{N<}
N  \leq N_0^{A^2} 
\end{equation}

Estimates \eqref{ldtN0} - \eqref{LN0-L2N0} above are exactly the assumptions \eqref{indhyp1} - \eqref{indhyp4} (at scale $N_0$, with $ \gamma = \gamma_0 = \frac{1}{2}$) in Lemma \ref{ind}. of the inductive 
step of LDT. 

\medskip

However, in order to apply this inductive step lemma and obtain similar estimates at the larger scale $N$, the initial scale $N_0$ and the disorder $\la$ have to satisfy the condition \eqref{scale2}. Together with the conditions on $\la$ and $N_0$ from the initial step (Lemma \ref{base-step}),  $N_0$ and $\la$ have to satisfy:
\begin{align}   
(2 N_0)^B \leq \abs{\la}  \leq e^{N_0} \label{alfa} \\
N_0 \geq N_{0 0} \label{beta} \\
\abs{\la}  \geq  \lambda_1\label{gama} 
\end{align}

We want to prove the LDT for every disorder $  \lambda $ large enough, 
$ \abs{\la}  \geq \lambda_0 $ and not just for $\lambda$ in a bounded interval as in \eqref{alfa}. To do that, we will have to first choose $\la$ large enough, and then to pick $N_0 = N_0 (\la) \ge N_{0 0}$ appropriately. Here is how we can accomplish that.

 The condition \eqref{alfa} is equivalent to 
 \begin{equation}\label{alfa'}
\log \sabs{\la} \leq N_0 \leq  \frac{1}{2}\abs{\la}^{1/B}
\end{equation}

We can find $ \lambda_0 $ large enough, $ \lambda_0 =  \lambda_0 (v, \kappa)$, $\la_0 \ge \la_1$, so that if $\sabs{\la} \geq   \lambda_0 $, then 
\begin{equation}\label{betagama'}
 \log \sabs{\la} \geq   N_{0 0} \, \mbox{  and  } \,  
 \log \sabs{\la} \ll   \frac{1}{2}\abs{\la}^{1/B}
 \end{equation}
 
Then for every such $\lambda$ we can pick $  N_0 = N_0 ( \lambda)$ so that \eqref{alfa'} holds. Combining this with \eqref{betagama'}, we get that  (\ref{alfa}), (\ref{beta}), (\ref{gama}) hold.

\medskip

All the assumptions on the small scale $N_0$ in the inductive step - Lemma \ref{ind} hold now, so if we choose the large scale $N$ such that
\begin{equation}\label{N}
 N_0^{A} \leq N \leq  N_0^{A^2} ( <  e^{N_0}) 
\end{equation}
then \eqref{N<} and \eqref{scale1} hold, so we can apply Lemma \ref{ind} to get:
\begin{equation}\label{fin7}
 \mbox{mes } [ \x \in \mathbb{T}^2 : \abs{ \frac{1}{N} \log \norm{ M_{N} (\x) } - L_{N}  } >  S N^{-\tau} ]  <   e^{-N^\sigma}
\end{equation}
\begin{align}
L_{N}  & \ \geq \gamma_0 S - 2 (  L_{ N_0}  -  L_{2 N_0} ) - C_{0} S N_{0} N^{-1} \label{fin5}\\
L_{ N}  -  L_{2 N}  & \ \leq  C_{0} S  N_{0} N^{-1} \label{fin6}
\end{align}
for some positive absolute constants $C_0, \tau, \sigma$.

 \smallskip
 
Estimate (\ref{fin7}) proves the LDT (\ref{ldt}) at scale $N$ in the range  
 $ [ N_0^{A}, N_0^{A^2} ] $.
If $N_1$ is in this range, say $N_1 =  N_0^{A}$, then  (\ref{fin6}) and (\ref{fin5}) imply:
$$ L_{ N_1}  -  L_{2 N_1}  \leq  C_{0} S  N_{0} N^{-1} $$
$$  L_{N_1}   \geq \gamma_0 \, S - 3 \, C_{0} S  N_{0} N^{-1} =  \gamma_0 \, S - 3  \, C_{0}  N_{0}^{- A + 1} S  =: \gamma_1 \cdot S $$
where $$\gamma_1 := \gamma_0 -  3  \, C_{0}  N_{0}^{- A + 1} =  \frac{1}{2} - 3  \, C_{0}  N_{0}^{- A + 1} >  \frac{1}{4} $$
provided we chose $N_{0 0}$ (and so $N_0$) large enough depending on $A$, $C_0$.

\smallskip

Therefore we have:
\begin{equation}\label{fin8}
L_{N_1}  \geq \gamma_1 S
\end{equation}
and
$$L_{ N_1}  -  L_{2 N_1}   \leq  C_{0} S  N_{0} N^{-1}  = C_{0}  S N_{0}^{- A + 1}  < \frac{1}{160} \cdot S 
<  \frac{\gamma_1}{40} \cdot S$$
so
\begin{equation}\label{fin9}
L_{ N_1}  -  L_{2 N_1}    <  \frac{\gamma_1}{40} \cdot S
\end{equation}

Since $2 N_1 = 2 N_0^{A} $ is in the range  $ [ N_0^{A}, N_0^{A^2} ] $,  (\ref{fin8}) holds at scale $2 N_1$ too, so we have:
\begin{equation}\label{fin10}
L_{N_1}, L_{2 N_1}  \geq \gamma_1 S
\end{equation}

Choosing the next large scale  $N_2$ so that  $ N_1^{A} \leq N_2  \leq  N_1^{A^2} (< e^{N_1})$, we have $  e^{- {N_{1}^{\sigma}}} < N_{1}^{- A^2 \cdot D} \leq N_{2}^{- D}$, so  (\ref{fin7}) implies: 
\begin{align}
 \mbox{mes } [  \x \in \mathbb{T}^2 \colon \abs{ \frac{1}{N_1} \log \norm{ M_{N_1} (\x) } - 
L_{N_1} } >  \frac{1}{20} S  ]  <    e^{- {N_{1}^{\sigma}}} < N_{2}^{- D}    \label{N1} \\
 \mbox{mes } [ \x \in \mathbb{T}^2 : \abs{ \frac{1}{2 N_1} \log \norm{ M_{2 N_1} (\x) } -L_{2 N_1}  } >  \frac{1}{20} S  ]  \less   N_{2}^{- D} 
 \label{N2}
 \end{align}

Estimates \eqref{N1}, \eqref{N2}, \eqref{fin10}, \eqref{fin9} are the assumptions in the inductive step - Lemma \ref{ind} with small scale $N_1$ and large scale $N_2$, where  $N_2 \in  [ N_1^{A}, N_1^{A^2} ] \, = \,  [ N_0^{A^2}, N_0^{A^3} ]$.  Applying Lemma \ref{ind}, we  get the LDT \eqref{ldt} for $N_2$ in this range. Moreover, we get: 

$$  L_{ N_2}  -  L_{2 N_2}  \leq  C_{0} S  N_{1} N_{2}^{-1} $$
and
$$ L_{N_2}  \geq \gamma_1 S - 2 (  L_{ N_1}  -  L_{2 N_1} ) -
 C_{0} S N_{1} N_{2}^{-1} \geq (\gamma_1 - 3 C_0  N_{1}^{- A + 1}) \cdot S
=: \gamma_2 \cdot S $$
where  $$\gamma_2 := \gamma_1 -  3  C_{0}  N_{1}^{- A + 1} \ge
 \frac{1}{2} -  3  C_{0}  N_{0}^{- A + 1} -  3  C_{0}  N_{0}^{A \cdot (- A + 1)}
> \frac{1}{4} $$
again, provided  $N_{0 0}$ (thus $N_0$) was chosen large enough depending on $A$, $C_0$.

Hence we have
$ L_{N_2} \geq \gamma_2 \cdot S $ and
$  L_{N_2} -  L_{2 N_2} \leq \frac{\gamma_2}{40} \cdot S $.

Continuing this inductively, we obtain (\ref{ldt}) at every scale $ N \geq N_{0}^A $.

Also, at each step $k$ in the induction process, if $ N \in  [ N_{k}^{A}, N_{k}^{A^2} ] $, then
$L_N \geq \gamma_k \cdot S > \frac{1}{4} \cdot S  $ so
$$ L = \inf_{N} L_N \geq \frac{1}{4} \cdot S $$ 
and (\ref{poslyap}) is proven.
\end{proof}

\smallskip

We now prove that the Lyapunov exponent is continuous as a function of the energy.


\begin{theorem}\label{cont}
Under the same conditions as in Theorem \ref{LDT} above, and for any $\sabs{ \la } \ge \la_0 (v, \ka)$, the Lyapunov exponent $L (E)$ is a continuous function of the energy $E$ with modulus of continuity on each compact interval $\mathcal{E}$ at least:
\begin{equation}\label{modcont-weak}
w (t) = C \, \bigr(\log \frac{1}{t} \bigl)^{-\beta}
\end{equation} 
where $C = C (\mathcal{E}, \la, v, \ka, s)$ and $\beta \in (0,1)$ is a universal constant that can be chosen, at the expense of $C$, to be arbitrarily close to 1.  

\end{theorem}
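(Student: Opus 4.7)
The plan is a three-term telescoping argument combined with a Lipschitz-in-$E$ bound for the finite-scale averages. For $E, E' \in \mathcal{E}$ and a scale $N$ to be chosen as a function of $t := \abs{E - E'}$, I would write
\begin{equation*}
\abs{L (E) - L (E')} \le \abs{L (E) - L_N (E)} + \abs{L_N (E) - L_N (E')} + \abs{L_N (E') - L (E')},
\end{equation*}
estimate each term separately, and then optimize in $N$.

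For the first and third terms, I would extract a polynomial rate of convergence $L_N \to L$ by reusing the iterative scheme already carried out in the proof of Theorem~\ref{LDT}. Iterating the estimate \eqref{indc2} from Lemma~\ref{ind} along the same sequence of scales $N_k$ with $N_{k+1} \in [N_k^A, N_k^{A^2}]$ and telescoping $L - L_N = \sum_k (L_{N_{k+1}} - L_{N_k})$ gives, uniformly in $E \in \mathcal{E}$,
\begin{equation*}
\abs{L_N (E) - L (E)} \less S (\la) \, N^{-\nu}
\end{equation*}
for some exponent $\nu \in (0, 1)$ which can be brought arbitrarily close to $1$ by taking the iteration parameter $A$ large.

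For the middle term, I would derive a Lipschitz-type bound from the multiplicative structure of the transfer matrices. Since the cocycle $A (\x, E)$ depends on $E$ only through its diagonal entry $\la v (\x) - E$, and since $\norm{A (\x, E)} \le e^{S (\la)}$ by \eqref{boundA}, a Trotter-type expansion of $M_N (\x, E) - M_N (\x, E')$ analogous to the one in Section~\ref{approximation} yields
\begin{equation*}
\norm{M_N (\x, E) - M_N (\x, E')} \le N \, e^{N S (\la)} \, \abs{E - E'}.
\end{equation*}
Combined with $\norm{M_N (\x, E)} \ge 1$, this implies $\abs{L_N (E) - L_N (E')} \less e^{N S (\la)} \, \abs{E - E'}$ after averaging in $\x$.

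The final step is to balance the two contributions by choosing $N \sim c_1 \, S (\la)^{-1} \log (1/t)$ with $c_1 > 0$ small: the Lipschitz contribution becomes $t^{1 - c_1}$, which is dominated by the convergence contribution $\less S (\la) \, (\log 1/t)^{-\nu}$. This yields $\abs{L (E) - L (E')} \le C \, (\log 1/t)^{-\beta}$ with $\beta \in (0, 1)$ arbitrarily close to $1$, traded off against the size of $C$, which absorbs $c_1^{-1}$, $S (\la)$, the initial scale $N_0$ and the length of $\mathcal{E}$. The main obstacle I anticipate is precisely the tension between the polynomial rate $N^{-\nu}$ of convergence and the exponential growth $e^{N S (\la)}$ of the Lipschitz constant in $E$: it forces the optimal scale to be $N \sim \log (1/t)$, which is exactly why the modulus of continuity comes out as the inverse of a logarithm rather than as a Hölder exponent. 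Obtaining the stronger stretched-exponential modulus $h$ asserted in Theorem~\ref{main} would require an additional ingredient, such as a pluri-subharmonic extension of $L_N$ to a complex neighborhood of $E$ together with a Cartan/BMO-type estimate in the spirit of \cite{GS1}.
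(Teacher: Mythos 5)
Your proposal is correct and follows essentially the same route as the paper's proof: decompose $\abs{L(E)-L(E')}$ into a convergence-error part controlled by iterating \eqref{indc2} (giving $\abs{L_N - L}\less S\,N^{-(1-1/A)}$) and a Lipschitz-in-$E$ part controlled by a Trotter expansion of $M_N(\x,E)-M_N(\x,E')$, then optimize $N\sim S^{-1}\log(1/t)$. Your closing observation about the tension between the polynomial convergence rate and the exponential Lipschitz constant, and that the stronger modulus \eqref{modcont} requires an additional ingredient (the avalanche-principle refinement \eqref{cont2'}), also matches the remark the paper makes immediately after the theorem.
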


\begin{proof} We will fix $\la, \shift$ and omit them from notations. We also fix the compact interval $\mathcal{E}$.

It is easy to show (see below) that for every scale $N$, the functions $L_N (E) $ are (Lipschitz) continuous. To prove that their limits $L (E)$ are also continuous with a certain modulus of continuity, we need a quantitative description of the convergence $L_N (E) \to L (E) $ as $N \to \infty$. The better this rate of convergence, the sharper the modulus of continuity of $L (E)$. 

It follows from the proof of Theorem \ref{LDT} above (see \eqref{fin6} and the inductive process thereafter) that for every scales $N_0$ and $N$ such that  $N_0 \ge N_{0 0} (\la, v, \ka)$ and $N_0^A \le N \le N_0^{A^2}$, we have:
$$L_N (E)  - L_{2N}  (E) \less N_0 N^{-1} \le N^{1/A} \, N^{-1} =: N^{- \beta}$$
so
\begin{equation}\label{cont1}
L_N (E) - L_{2N} (E)  \less N^{- \beta} \ \text{ for all } N \ge N_{0 0}
\end{equation}
Summing up over dyadic $N$'s we conclude:
\begin{equation}\label{cont2}
L_N (E) - L (E)  \less N^{- \beta} \ \text{ for all } N \ge N_{0 0}
\end{equation}
which is the quantitative convergence we were seeking.  

\smallskip

To show that $$L_N (E) = \frac{1}{N} \, \int_{\T^2} \log \norm{ M_N (\x, E) } \, d \x$$
are continuous, we use Trotter's formula for the transfer matrix $M_N (\x, E)$:

$$ M_{N} (\x, E) -  M_{N} (\x, E^\prime)  = $$
$$=  \sum_{j=1}^{N} A (\shift^{N} \x, E) \ldots  \,  [A (\shift^j \x, E) - A (\shift^j \x, E^\prime)] \,  \ldots A (\shift  \, \x, E^\prime)$$
But
$$ A (\shift^j \x, E) - A (\shift^j \x, E^\prime) = \Bigl[\begin{array}{cc}
E^\prime - E  &   0  \\
0 &  0 \\  \end{array} \Bigr]$$ 
and 
$$  \norm{ A (\shift^j \x, E) }  \leq  e^{S} \quad \mbox{ for all } E \in \mathcal{E} $$
so
$$ \norm{ M_{N} (\x, E) -  M_{N} (\x, E^\prime) } \le e^{S N} \, \abs{ E - E^\prime } $$

Therefore, since $ \norm{ M_{N} (\x, E) } \geq 1 $ and  $ || M_{N} (\x, E^\prime) || \geq 1$, we have: 
\begin{align*}
\abs{ \log \norm{ M_{N} (\x, E) } -  \log \norm{ M_{N} (\x, E^\prime) } }  \\
\le \norm{ M_{N} (x, E) -  M_{N} (x, E^\prime) } \le e^{S N} \, \abs{ E - E^\prime }
\end{align*}

Integrating in $\x$ we obtain:
\begin{equation}\label{cont3}
| L_{N} (E) - L_{N} (E^\prime) | \leq e^{S N} \, | E - E^\prime | 
\end{equation}
which shows Lipschitz continuity for the maps $L_N (E)$. 

Combining \eqref{cont2} and \eqref{cont3} we obtain:
\begin{equation}\label{cont4}
\abs{ L (E) - L (E^\prime) } \less N^{- \beta} + e^{S N} \, | E - E^\prime | \quad \text{ for all } N \ge  N_{0 0} (\la, v, \ka)
\end{equation} 

For every such $N$ let $$\abs{ E - E^\prime } \sim e^{- S N} \, N^{- \beta}$$ so
$$\abs{ L (E) - L (E^\prime) } \less N^{- \beta} $$
Since
$$\log \frac{1}{\abs{ E - E^\prime }} \sim S N + \beta \log N \less S N$$
we have
$$N^{- \beta} \sim \bigl(\frac{1}{S}\bigr)^{- \beta} \, \bigl( \log \frac{1}{\abs{ E - E^\prime }} \bigr)^{-\beta} = C \,  \bigl( \log \frac{1}{\abs{ E - E^\prime }} \bigr)^{-\beta}$$
where $ C = C (\la, v, \ka)$.

We conclude, using the compactness of $\mathcal{E}$, that for some constant $C = C(\mathcal{E}, \la, v, \ka)$, and for a constant $\beta$ that can be chosen arbitrarily close to $1$ by starting off with a large enough constant $A$, we have:
$$\abs{ L (E) - L (E^\prime) }  < C \,  \bigl( \log \frac{1}{\abs{ E - E^\prime }} \bigr)^{-\beta}$$
\end{proof}

\begin{remark}\rm{The rate of convergence \eqref{cont2} can be improved to 
\begin{equation}\label{cont2'}
\abs{ L(E) + L_N (E) - 2 L_{2 N} (E) }  \less e^{-c N^{ \eta}} \ \text{ for all } N \ge N_{0 0}
\end{equation}
which follows from the proof of the inductive step, Lemma \ref{ind} (see estimate \eqref{11}) and uses the avalanche principle.
This faster rate of convergence leads to the sharper modulus of continuity \eqref{modcont} (see \cite{sK1}, \cite{BGS} for details). 
}
\end{remark}


\smallskip

We will now explain how Anderson localization is derived from the large deviation theorem~\ref{LDT}.

Given the Schr\"{o}dinger operator 
\begin{equation}\label{3op}
[H (\x) \, \psi]_n := - \psi_{n+1} - \psi_{n-1} + \la \, v (\shift^n \x) \, \psi_n
\end{equation} 
for every scale $N$ we denote 
$$H_N (\x) := R_{[1, N]} H (\x) R_{[1, N]}$$
where $R_{[1, N]}$ is the coordinate restriction to $[1, N] \subset \field{Z}$ with Dirichlet boundary conditions.

Then the associated Green's functions are defined as
$$G_N (\x, E) := [ H_N (\x) - E ]^{-1}$$
if the $N \times N$ matrix $ H_N (\x) - E $ is invertible.

The large deviation estimate \eqref{ldt} implies, via Cramer's rule,  `good bounds' on the Green's functions $G_N (\x, E)$ associated with (\ref{3op}).

Indeed, for $1 \leq n_1 \leq n_2 \leq N$, we have:
\begin{align}
G_N (\x, E) (n_1, n_2) = [ H_N (\x) - E ]^{-1} (n_1, n_2) \\
= \frac{ \mbox{ det } [ H_{n_1-1} (\x) - E ]  \cdot   \mbox{ det } [ H_{N-n_2} (\shift^{n_2} \x) - E ] }
{\mbox{ det } [ H_{N} (\x) - E ] }
\end{align}

There is the following relation between transfer matrices and determinants:
\begin{equation}\label{tmdet}
M_{N} (\x, E) = \Bigl[ \begin{array}{ccc}
\mbox{ det } [ H_N (\x) - E ]   & & - \mbox{ det } [ H_{N-1} (\shift \x) - E ]  \\
 \mbox{ det } [ H_{N-1} (\x) - E ] & &  - \mbox{ det } [ H_{N-2} (\shift \x) - E ] \\  \end{array} \Bigr]  
\end{equation}

Therefore, we get the following estimate on the Green's functions:
$$\abs{ G_N (\x, E) (n_1, n_2) } \le \frac{\norm{ M_{n_1} (\x, E) }  \cdot \norm{ M_{N-n_2} (\shift^{n_2} \x, E) }}
{\abs{ \mbox{ det } (H_{N} (\x) - E) }}$$

Combining this with the LDT \eqref{ldt}, we obtain the following bounds
on the Green's functions $G_{\Lambda} (E, \x)$ associated with the operator \eqref{3op}. 

For every $N$ large enough and for every energy $E$, there is a set 
$\Omega_{N} (E) \subset \mathbb{T}^2 $ with  $ \mbox{mes } [ \Omega_{N} (E) ] < e^{- N^{\sigma}} $
so that for any $ \x \notin \Omega_{N} (E) $, one of the 
intervals 
$$ \Lambda = \Lambda (\x) = [1, N ], [1, N - 1], [2, N ], [2, N - 1] $$ 
will satisfy : 
\begin{equation}\label{3green} 
 | G_{\Lambda} (E, \x) (n_1 , n_2 ) | < e^{- c  | n_1  - n_2 | + N^{1 -}}
 \end{equation} 

Since $ v(\x) = \sum_{\li \in \mathbb{Z}^2} \hat{v} (\li) e^{2 \pi i \, \li \cdot \x}$
and
$ \abs{ \hat{v} (\li) }  \leq  M e^{- \rho \sabs{ \li }^{1/s}} $ for all $ \li \in \mathbb{Z}^2$, substituting in (\ref{3green}) $v (\x) $ by $  v_1 (\x) := \sum_{|\li| \leq C N^{s}} \hat{v} (\li) e^{2 \pi i \, \li \cdot \x} $
we can assume that the `bad set' $\Omega_{N} (E) $ above 
not only has exponentially small measure, but it also has bounded algebraic 
complexity - it is semi-algebraic of degree  $\, \leq N^{d(s)}.$ 

These sets depend on the energy $E$. The rest of the proof of  localization for \eqref{3op}  involves the elimination of the energy, which uses 
semi-algebraic set theory, and follows exactly the same pattern as the proof 
of the corresponding result for the analytic case (see \cite{BGS}, \cite{BG} or  Chapter 15 in \cite{B}). 

Our statement for the skew-shift model is weaker than the one for the multi-frequency shift, since they both mirror the corresponding results in the analytic case.  

\smallskip

\begin{remark}\rm{
We do not know if the transversality condition \eqref{TC} is indeed necessary, either for the models considered here or for the single-variable shift considered in \cite{sK1}. In particular, we do not know if the Lyapunov exponent is still positive throughout the spectrum for potential functions that have flat parts but are very smooth otherwise. This is a difficult and interesting problem.

Finally, a more challenging problem regarding Gevrey potential functions is proving localization for a long range model, one where the Laplacian is replaced by a Toeplitz matrix. In the case of the skew-shift dynamics, this could lead to applications to more general quantum kicked rotator equations. }
\end{remark}

\bigskip

\nocite{*}
\bibliographystyle{amsplain} 
\bibliography{references}

\end{document}